\documentclass[11pt,letterpaper]{article}
\usepackage{amsmath,amssymb,amsthm}
\usepackage{cite}
\usepackage{array}
\newcommand{\tablegap}{\\[8pt]}
\newcommand{\cell}[1]{\begin{tabular}{@{}c@{}}#1\end{tabular}}
\newcolumntype{L}[1]{>{\raggedright\arraybackslash}m{#1}}
\newcolumntype{C}[1]{>{\centering\arraybackslash}m{#1}}
\usepackage{tikz}
\usepackage[T1]{fontenc}
\usepackage[utf8]{inputenc}
\usepackage[letterpaper,margin=1in]{geometry}
\usepackage{lmodern,microtype,graphicx,adjustbox,authblk,needspace}
\usepackage[colorlinks=true,linkcolor=blue,citecolor=blue,urlcolor=blue]{hyperref}

\allowdisplaybreaks[1]
\hypersetup{pdftitle={Breaking the Multiplicative Overhead in Quantum Entropy Estimation}}
\newtheorem{theorem}{Theorem}[section]
\newtheorem{proposition}[theorem]{Proposition}
\newtheorem{lemma}[theorem]{Lemma}
\theoremstyle{definition}
\newtheorem{definition}[theorem]{Definition}
\newcommand{\Tr}{\operatorname{Tr}}

\newcommand{\Ot}{\widetilde O}
\newcommand{\OtOmega}{\widetilde\Omega}
\newcommand{\ket}[1]{\lvert #1\rangle}
\newcommand{\bra}[1]{\langle #1\rvert}
\newcommand{\eps}{\varepsilon}
\newcommand{\norm}[1]{\lVert #1\rVert}
\newcommand{\abs}[1]{\lvert #1\rvert}

\begin{document}
\title{Breaking the Multiplicative Overhead in Quantum Entropy Estimation}
\author[1]{Junxiang Huang}
\author[2]{Chenyang Li}
\author[2]{Lu-Fan Zhang}
\author[2]{Yusen Wu\thanks{Corresponding author: \href{mailto:yusen.wu@bnu.edu.cn}{yusen.wu@bnu.edu.cn}.}}
\author[1,3]{Yukun Zhang\thanks{Corresponding author: \href{mailto:yukun.zhang@maths.ox.ac.uk}{yukun.zhang@maths.ox.ac.uk}.}}
\affil[1]{Center on Frontiers of Computing Studies, School of Computer Science, Peking University, Beijing 100871, China}
\affil[2]{School of Artificial Intelligence, Beijing Normal University, Beijing 100875, China}
\affil[3]{Mathematical Institute, University of Oxford, Oxford OX2 6GG, United Kingdom}
\date{}
\maketitle
\begin{abstract}
The von Neumann, Tsallis, and R\'enyi entropies are fundamental measures of quantum information. A common estimation strategy repeats a globally accurate spectral transformation at every statistical readout step, multiplying the worst-case costs of the two procedures and leaving gaps to known query lower bounds. We overcome this overhead by organizing the approximation and readout across spectral scales. We construct multi-level algorithms that normalize each spectral contribution locally and adapt its estimation precision to its magnitude; variable-time estimation further accounts for the probability of reaching expensive spectral tests. Given controlled access to a purification-preparation unitary and its inverse, we obtain the following bounds for additive error $\varepsilon$, rank upper bound $R$, and fixed order $\alpha$. Tsallis entropy estimation has near-optimal rank-independent query complexity $\widetilde O(\varepsilon^{-1/(\alpha-1)})$ for $1<\alpha<2$, when the allowed dimension and rank accommodate the lower-bound instances, and $\widetilde O(1/\varepsilon)$ for $\alpha\ge2$, where $\widetilde O$ suppresses logarithmic factors. The former removes a factor $1/\varepsilon$ from the previous upper bound; the latter extends the known integer-order scaling to noninteger orders. We also improve the von Neumann entropy bound from $\widetilde O(R/\varepsilon^2)$ to $\widetilde O(R/\varepsilon)$ and obtain $\widetilde O(R/\varepsilon)$ for noninteger R\'enyi orders $1<\alpha<3$, with further bounds at other orders. More generally, our functional-estimation theorem replaces the global product by a sum governed by local approximation costs, function magnitudes, and spectral masses. We further estimate fixed logarithmic moments and entropy variance using $\widetilde O(R/\varepsilon)$ queries, providing efficient access to fluctuation parameters that enter finite-blocklength quantum compression and pure-state entanglement conversion. More broadly, our framework extends beyond entropy to a broad class of density-matrix functionals, offering a systematic approach toward optimal query complexity in quantum spectral estimation.
\end{abstract}
\par\smallskip\noindent\textbf{Keywords:}
Quantum entropy, multi-level estimation, purified query access, polynomial approximation, amplitude estimation.
\par\clearpage
\tableofcontents\clearpage

\section{Introduction}\label{sec:intro}
The von Neumann, Tsallis, and R\'enyi entropies describe complementary aspects of the spectrum of a quantum state. The von Neumann entropy quantifies mixedness and the entanglement of a bipartite pure state, while the order parameter in the Tsallis and R\'enyi families changes the weight assigned to large and small eigenvalues~\cite{Watrous18,Renyi61,Tsallis88}. Their estimation is therefore a basic instance of extracting useful information from a quantum state without reconstructing its density operator. We study the number of coherent queries needed for this task, with particular attention to the interaction between the desired accuracy and an upper bound on the rank.

Our input is a unitary that prepares a purification of an unknown state, together with its inverse and controlled versions. Although this oracle does not reveal the eigenbasis, it permits a block encoding of the density operator~\cite{GilyenLi20}. Quantum singular value transformation (QSVT) can then approximate a function of that operator~\cite{QSVT19}, and amplitude estimation can recover its expectation in the input state. This combination underlies rank-sensitive entropy estimators~\cite{WangEntropies24,WZL24} and dimension-independent estimators for fixed positive powers~\cite{LiuWang26}. Nevertheless, gaps remain between these query upper bounds and the available lower bounds~\cite{Lifting25,BoundsList25,WangFramework26}. We focus on a specific algorithmic source of the gap: every statistical query invokes a spectral transform accurate enough for the most demanding part of the spectrum. We reduce this overhead by allocating the spectral processing and readout costs jointly across scales.

To make the issue concrete, we consider Tsallis entropy of fixed order $1<\alpha<2$. Up to an affine conversion, the target is the power sum $\Tr(\rho^\alpha)$, which can be written as the expectation of $g(x)=x^{\alpha-1}$ under the eigenvalue-weighted spectral distribution of $\rho$. Eigenvalues smaller than a cutoff $\tau$ contribute at most $\tau^{\alpha-1}$, because their total probability mass is at most one. Thus $\tau$ of order $\varepsilon^{1/(\alpha-1)}$ suffices for additive error $\varepsilon$. A cutoff-based transform with query cost $\widetilde O(1/\tau)$, followed by additive amplitude estimation with $\widetilde O(1/\varepsilon)$ calls to that transform, has total cost $\widetilde O(\varepsilon^{-1-1/(\alpha-1)})$. This calculation reproduces the precision dependence of the earlier rank-independent bound~\cite[Theorem~III.2]{LiuWang26} and isolates the product that we seek to avoid. The transform is designed for the most expensive spectral scale, but the readout repeatedly pays that cost while estimating the entire moment.

\begin{quote}
\emph{Can quantum entropy estimation overcome the multiplicative cost of spectral processing and statistical readout?}
\end{quote}

Multiplicative lower bounds elsewhere in quantum computation make this a substantive question. For Boolean functions on disjoint input blocks, the general adversary bound composes multiplicatively and characterizes bounded-error quantum query complexity~\cite{HoyerLeeSpalek07,Reichardt11}. Thus an OR of $m$ independent $k$-bit parities requires $\Theta(\sqrt m\,k)$ queries, and the depth-two formula $\operatorname{AND}_m\circ\operatorname{OR}_n$ requires $\Theta(\sqrt{mn})$ queries. The phenomenon also persists through recursive composition: balanced ternary-majority formulas of depth $d$ have query complexity $\Theta(2^d)$ on $3^d$ inputs~\cite[Theorem~1.1]{ReichardtSpalek12}. Moreover, composition can remain an obstruction after a problem is encoded into another computational task. In sparse Hamiltonian simulation, Low embeds $\operatorname{PARITY}_n\circ\operatorname{OR}_m$ into the dynamics and obtains the corresponding $\Omega(n\sqrt m)$ lower bound~\cite[Section~5]{LowSpectral19}. Across nested search, parity, recursive formula evaluation, and simulation, these products arise from the composition of independent input problems. They motivate examining the dependence between the two tasks in entropy estimation.

For entropy estimation, we answer the question affirmatively by exploiting a dependence absent from those independent-block examples: the local function values and the probabilities with which they are observed come from the same spectrum. Fine spectral resolution can be expensive precisely where the function value, spectral mass, or both are small. A global transform followed by a uniformly accurate readout loses this relation when it maximizes the two costs separately. We instead choose the spectral approximation and statistical precision jointly. For the Tsallis profile above, a band of scale $\delta$ has function magnitude $O(\delta^{\alpha-1})$. Dividing by that magnitude changes the normalized accuracy required to estimate the band's contribution to error $e$ from $e$ to order $e/\delta^{\alpha-1}$. Consequently, the local query cost is $\widetilde O(\delta^{\alpha-2}/e)$. Summing geometrically spaced bands down to $\tau$ gives $\widetilde O(\varepsilon^{-1/(\alpha-1)})$, rather than $\widetilde O(\varepsilon^{-1-1/(\alpha-1)})$. Thus local normalization with fixed-time readout already removes the extra $1/\varepsilon$ factor in this example.

For von Neumann entropy, the relevant additional information is the mass of the small-eigenvalue region. If the rank is at most $R$, a band near $\delta$ and the paths that reach it have probability at most $O(R\delta)$. We construct a coherent sequence of spectral tests in which a path stops once its scale has been identified. Variable-time amplification and estimation~\cite{Ambainis12,CGJ19} allow the decreasing survival probability to enter the statistical cost, rather than charging every path for the deepest test. Since the resolution cost grows as $1/\delta$ while the rank-based mass bound decreases as $R\delta$, this refinement yields a total query bound of $\widetilde O(R/\varepsilon)$.

We develop these observations into a functional-estimation theorem for $\Tr[\rho g(\rho)]$. The theorem replaces the product of globally maximized costs by a sum that retains the local approximation cost, function magnitude, and spectral probability bounds. Beyond the three entropy formulas, it gives a criterion for determining when scalar spectral estimation can avoid the overhead of repeatedly implementing a uniformly accurate operator transform. As one concrete consequence, the same theorem gives $\widetilde O(R/\varepsilon)$ estimation of $\Tr[\rho(-\log\rho)^k]$ for every fixed positive integer $k$, and $\widetilde O(R/\varepsilon)$ estimation of entropy variance. The latter is a fluctuation quantity appearing in second-order source coding and pure-state entanglement conversion~\cite{DattaLeditzky15,AbdelhadiRenes20}. Section~\ref{sec:log-moments} derives these estimates, providing coherent-query access to the spectral parameters that enter those coding results.

Multi-level estimation itself has important precedents, particularly the recent framework of Chen \emph{et al.} for classical distributions~\cite{Multilevel26}; variable-time quantum-state reductions also appear in Wang, Zhang, and Li~\cite{WZL24}. Building on these methods, we give the density-spectrum construction and probability-sensitive cost analysis that establish the entropy improvements above. Under arbitrary purified access, this requires a coherent decomposition that preserves the unknown eigenbasis, a signed readout that does not discard branch mass, and an analysis that distinguishes the mass reaching a test from the mass selected by it. The distinction is essential for R\'enyi entropy, because a power-sum bound can make a selected band rare without making its earlier tests equally rare. We retain both probabilities and derive the required variable-time routine from ordinary amplitude estimation and amplitude amplification. This analysis also delineates the scope of the result: a general density functional can still have a genuine multiplicative lower bound, as the explicit example in Proposition~\ref{prop:product-cost} shows.

\subsection{Main Results}
We state our results for additive entropy error $0<\varepsilon<1/10$, fixed order $\alpha>0$ with $\alpha\ne1$, and a known rank upper bound $R\ge2$. The ambient dimension is denoted by $d_\rho$, and we take $R\le d_\rho$ without loss of generality. We count forward, inverse, and controlled queries to the purification oracle. The notation $\widetilde O$ suppresses logarithmic factors in the parameters specified in Section~\ref{sec:setup}, with constants that may depend on the fixed order. We present the bounds at constant success probability; reducing the failure probability incurs logarithmic overhead. Rank-one states have zero entropy and require no queries.

\emph{Tsallis entropy.}
For $1<\alpha<2$, Theorem~\ref{thm:tsallis} improves the earlier rank-independent bound $O(\varepsilon^{-1-1/(\alpha-1)})$~\cite[Theorem~III.2]{LiuWang26} to $\widetilde O(\varepsilon^{-1/(\alpha-1)})$. The new precision exponent matches the sample-to-query lower bound when the dimension and rank are large enough to contain the hard family in \eqref{eq:hard-dimension}. Rank information gives the additional bound $\widetilde O(R^{2-\alpha}/\varepsilon)$, so we use the smaller of the rank-sensitive and rank-independent bounds. For every fixed real $\alpha\ge2$, our complexity is $\widetilde O(1/\varepsilon)$, which matches the rank-two precision lower bound. Controlled permutations and amplitude estimation already give this scaling at integer orders~\cite{Ekert02,BHMT02,WangMonomials25}; our result extends it to noninteger orders. We thereby resolve the rank-independent query-precision part of the improvement question raised by Liu and Wang~\cite[Section~I-D, problem~(iii)]{LiuWang26}. For $0<\alpha<1$, we obtain $\widetilde O((R/\varepsilon)^{1/\alpha})$. Table~\ref{tab:results} includes both the direct earlier estimators and the bounds obtained by converting R\'enyi estimators to Tsallis estimators.

\emph{Von Neumann entropy.}
Theorem~\ref{thm:vn} gives a query bound of $\widetilde O(R/\varepsilon)$, improving the $\widetilde O(R/\varepsilon^2)$ rank-sensitive bound of Wang \emph{et al.}~\cite{WangEntropies24}. We also compare with the dimension-dependent approach of Gily\'en and Li~\cite[Theorem~13]{GilyenLi20}, whose rank-sensitive refinement gives $\widetilde O(\sqrt{d_\rho R}/\varepsilon^{3/2})$~\cite[Section~I-A]{WangEntropies24}. The known lower bound is $\widetilde\Omega(R/\sqrt\varepsilon+1/\varepsilon)$: it matches our rank dependence at constant precision and our precision dependence at fixed rank, while leaving a gap in the joint dependence. We derive the large-rank statement from Wang's sample lower bound~\cite[Theorem~5.3]{WangFramework26} and the controlled-access lifting argument in Appendix~\ref{app:lower}; Proposition~\ref{prop:accuracy-lower} gives the fixed-rank precision bound.

\emph{R\'enyi entropy.}
Theorem~\ref{thm:renyi} gives $\widetilde O(R/\varepsilon)$ queries for $1<\alpha\le3$. At noninteger orders $1<\alpha<3$, this saves a factor $\varepsilon^{-1/\alpha}$ over the rank-based bound $\widetilde O(R/\varepsilon^{1+1/\alpha})$ of Wang, Zhang, and Li~\cite[Corollary~5]{WZL24}, and it matches the known linear rank dependence at constant precision. At orders two and three, it recovers the bounds obtained from direct moment estimators; we give the third-order construction in Proposition~\ref{prop:odd-moment}.

For $\alpha>3$, our new bound is $\widetilde O(R^{3/2-3/(2\alpha)}/\varepsilon)$. Comparing the displayed powers of rank and precision, with logarithmic factors suppressed, the new term is no larger than the preceding rank-based term when $\varepsilon\le R^{-(\alpha-3)/2}$; we select the smaller applicable bound. Under the additional spectral promise $\lambda_{\max}\le C/R$ for fixed $C>1$, Proposition~\ref{prop:upper-spectrum} gives $\widetilde O(R/\varepsilon)$ for every $\alpha>1$. Below order one, our bound is $\widetilde O((R/\varepsilon)^{1/\alpha})$, which improves the earlier rank-based bound by a factor $1/\varepsilon$; the dimension-based alternative of~\cite[Corollary~3]{WZL24} can still be smaller on full-rank states. Table~\ref{tab:promises} records these alternatives and separates the integer- and noninteger-order lower bounds.

\begin{table}[tbp]\centering\caption{Purified-query bounds for von Neumann and Tsallis entropies at additive error $\varepsilon$ and fixed order. Tildes suppress logarithmic factors; implicit constants may depend on the fixed order. Bold entries identify improved algorithmic bounds.}\label{tab:results}

\footnotesize\setlength{\tabcolsep}{3pt}\renewcommand{\arraystretch}{1.15}
\begin{tabular}{@{}L{59pt}C{151pt}C{99pt}C{141pt}@{}}
\hline\noalign{\vskip7pt}
Problem / conditions & Previous upper bounds & Our upper bounds & Lower bound\\\noalign{\vskip7pt}\hline\noalign{\vskip7pt}
von Neumann\newline Rank cap $R$
&\cell{$\displaystyle \Ot\!\left(R/\varepsilon^2\right)$~\cite{WangEntropies24}\tablegap$\displaystyle \Ot\!\left(\sqrt{d_\rho R}/\varepsilon^{3/2}\right)$~\cite{GilyenLi20,WangEntropies24}} &\cell{$\displaystyle \boldsymbol{\Ot\!\left(R/\varepsilon\right)}$} &\cell{$\displaystyle \OtOmega\!\left(R/\sqrt\varepsilon+1/\varepsilon\right)$~\cite{WangFramework26,Tang26}}\\\noalign{\vskip7pt}\hline\noalign{\vskip7pt}
Tsallis\newline $1<\alpha<2$\newline Rank cap $R$
&\cell{$\displaystyle O\!\left(\varepsilon^{-1-1/(\alpha-1)}\right)$~\cite{LiuWang26}\tablegap$\displaystyle \Ot\!\left(R/\varepsilon^{1+1/\alpha}\right)$~\cite{WZL24}} &\cell{$\displaystyle \boldsymbol{\Ot\!\left(R^{2-\alpha}/\varepsilon\right)}$\tablegap$\displaystyle \boldsymbol{\Ot\!\left(\varepsilon^{-1/(\alpha-1)}\right)}$} &\cell{$\displaystyle \Omega\!\left(1/\varepsilon\right)$ (Prop.~\ref{prop:accuracy-lower})}\\\noalign{\vskip7pt}\hline\noalign{\vskip7pt}
Tsallis\newline $1<\alpha<2$\newline Unrestricted rank
&\cell{$\displaystyle O\!\left(\varepsilon^{-1-1/(\alpha-1)}\right)$~\cite{LiuWang26}} &\cell{$\displaystyle \boldsymbol{\Ot\!\left(\varepsilon^{-1/(\alpha-1)}\right)}$} &\cell{$\displaystyle \OtOmega\!\left(\varepsilon^{-1/(\alpha-1)}\right)$~\cite{WangFramework26,Tang26}}\\\noalign{\vskip7pt}\hline\noalign{\vskip7pt}
Tsallis\newline Noninteger $\alpha>2$
&\cell{$\displaystyle O\!\left(\varepsilon^{-1-1/(\alpha-1)}\right)$~\cite{LiuWang26}\tablegap$\displaystyle \Ot\!\left(R/\varepsilon^{1+1/\alpha}\right)$~\cite{WZL24}} &\cell{$\displaystyle \boldsymbol{\Ot\!\left(1/\varepsilon\right)}$} &\cell{$\displaystyle \Omega\!\left(1/\varepsilon\right)$ (Prop.~\ref{prop:accuracy-lower})}\\\noalign{\vskip7pt}\hline\noalign{\vskip7pt}
Tsallis\newline Integer $\alpha\ge2$
&\cell{$\displaystyle O\!\left(1/\varepsilon\right)$~\cite{Ekert02,BHMT02,WangMonomials25}} &\cell{$\displaystyle \Ot\!\left(1/\varepsilon\right)$} &\cell{$\displaystyle \Omega\!\left(1/\varepsilon\right)$ (Prop.~\ref{prop:accuracy-lower})}\\\noalign{\vskip7pt}\hline\noalign{\vskip7pt}
\multicolumn{4}{@{}l}{\textit{Extension to orders below one}}\\[2pt]
Tsallis\newline $0<\alpha<1$\newline Rank cap $R$
&\cell{$\displaystyle \Ot\!\left(R^{(3-\alpha^2)/(2\alpha)}/\varepsilon^{(3+\alpha)/(2\alpha)}\right)$~\cite{WangEntropies24}\tablegap$\displaystyle \Ot\!\left(R^{2/\alpha-\alpha}/\varepsilon^{1+1/\alpha}\right)$~\cite{WZL24}\tablegap$\displaystyle\begin{gathered}
 \Ot\!\Bigl(d_\rho^{(1+\alpha)/(2\alpha)}\\[-1pt]
 \cdot R^{(1-\alpha)(1+1/(2\alpha))}/\varepsilon^{1+1/(2\alpha)}\Bigr)
 \end{gathered}$~\cite{WZL24}} &\cell{$\displaystyle \Ot\!\left((R/\varepsilon)^{1/\alpha}\right)$} &\cell{$\displaystyle\begin{gathered}
 \OtOmega\!\Bigl(R^{(1+1/\alpha)/2}/\varepsilon^{1/(2\alpha)}\\[-1pt]
 {}+R^{1-\alpha}/\varepsilon\Bigr)
 \end{gathered}$~\cite{WangFramework26,Tang26}}\\\noalign{\vskip7pt}\hline\noalign{\vskip7pt}
\end{tabular}\par\smallskip\begin{minipage}{.98\textwidth}\footnotesize Distinct upper-bound lines are alternatives, so their minimum is available. The first previous bound is rank-based for von Neumann and rank-free for Tsallis above one. Below one, the three previous bounds are direct, converted rank-based, and converted dimension-based, respectively. Bounds in the third column are Theorem~\ref{thm:vn} (von Neumann) and Theorem~\ref{thm:tsallis} (Tsallis).  Matching below-two rank-free lower bounds require dimension and allowed rank sufficient for \eqref{eq:hard-dimension}. Rank-dependent lower bounds use the large-rank, small-error ranges in Appendix~\ref{app:table-certificates}. That appendix also proves the R\'enyi-to-Tsallis conversions.\par\smallskip For noninteger Tsallis orders above one, the rank-free precision exponent is optimal (with the stated dimension condition below two). Integer-order $1/\varepsilon$ scaling was already known.\end{minipage}\end{table}

\begin{table}[tbp]\centering\caption{Purified-query bounds for R\'enyi entropy with rank at most $R$. Conventions are as in Table~\ref{tab:results}. The third column displays the multi-level algorithm alone.}\label{tab:promises}

\footnotesize\setlength{\tabcolsep}{3pt}\renewcommand{\arraystretch}{1.15}
\begin{tabular}{@{}L{59pt}C{151pt}C{99pt}C{141pt}@{}}
\hline\noalign{\vskip7pt}
Order / conditions & Comparison upper bounds & Our upper bound & Lower bound\\\noalign{\vskip7pt}\hline\noalign{\vskip7pt}
\multicolumn{4}{@{}l}{\textit{General rank promise}}\\[2pt]
Noninteger\newline $1<\alpha<3$
&\cell{$\displaystyle \Ot\!\left(R/\varepsilon^{1+1/\alpha}\right)$~\cite{WZL24}} &\cell{$\displaystyle \boldsymbol{\Ot\!\left(R/\varepsilon\right)}$} &\cell{$\displaystyle\begin{gathered}
 \OtOmega\!\Bigl(R/\varepsilon^{1/(2\alpha)}\\[-1pt]
 {}+R^{(1-1/\alpha)/2}/\varepsilon\Bigr)\end{gathered}$~\cite{WangFramework26,Tang26}}\\\noalign{\vskip7pt}\hline\noalign{\vskip7pt}
$\alpha=3$
&\cell{$\displaystyle O\!\left(R/\varepsilon\right)$ (Prop.~\ref{prop:odd-moment})} &\cell{$\displaystyle \Ot\!\left(R/\varepsilon\right)$} &\cell{$\displaystyle \Omega\!\left(R^{2/3}/\varepsilon^{1/3}+R^{1/3}/\varepsilon\right)$~\cite{Acharya20,BoundsList25}}\\\noalign{\vskip7pt}\hline\noalign{\vskip7pt}
$\alpha=2$
&\cell{$\displaystyle O\!\left(R/\varepsilon\right)$~\cite{Ekert02,BHMT02}} &\cell{$\displaystyle \Ot\!\left(R/\varepsilon\right)$} &\cell{$\displaystyle \Omega\!\left(R^{1/2}/\varepsilon^{1/2}+R^{1/4}/\varepsilon\right)$~\cite{Acharya20,BoundsList25}}\\\noalign{\vskip7pt}\hline\noalign{\vskip7pt}
Noninteger\newline $\alpha>3$
&\cell{$\displaystyle \Ot\!\left(R/\varepsilon^{1+1/\alpha}\right)$~\cite{WZL24}} &\cell{$\displaystyle \boldsymbol{\Ot\!\left(R^{3/2-3/(2\alpha)}/\varepsilon\right)}$} &\cell{$\displaystyle\begin{gathered}
 \OtOmega\!\Bigl(R/\varepsilon^{1/(2\alpha)}\\[-1pt]
 {}+R^{(1-1/\alpha)/2}/\varepsilon\Bigr)\end{gathered}$~\cite{WangFramework26,Tang26}}\\\noalign{\vskip7pt}\hline\noalign{\vskip7pt}
Integer\newline $\alpha\ge4$
&\cell{$\displaystyle \Ot\!\left(R/\varepsilon^{1+1/\alpha}\right)$~\cite{WZL24}\tablegap$\displaystyle O\!\left(R^{\alpha-1}/\varepsilon\right)$~\cite{Ekert02,BHMT02}} &\cell{$\displaystyle \boldsymbol{\Ot\!\left(R^{3/2-3/(2\alpha)}/\varepsilon\right)}$} &\cell{$\displaystyle\begin{gathered}
 \Omega\!\Bigl(R^{1-1/\alpha}/\varepsilon^{1/\alpha}\\[-1pt]
 {}+R^{(1-1/\alpha)/2}/\varepsilon\Bigr)
 \end{gathered}$~\cite{Acharya20,BoundsList25}}\\\noalign{\vskip7pt}\hline\noalign{\vskip7pt}
$0<\alpha<1$
&\cell{$\displaystyle \Ot\!\left(R^{1/\alpha}/\varepsilon^{1+1/\alpha}\right)$~\cite{WZL24}\tablegap$\displaystyle \Ot\!\left(d_\rho^{(1+\alpha)/(2\alpha)}/\varepsilon^{1+1/(2\alpha)}\right)$~\cite{WZL24}} &\cell{$\displaystyle \Ot\!\left((R/\varepsilon)^{1/\alpha}\right)$} &\cell{$\displaystyle\begin{gathered}
 \OtOmega\!\Bigl(R^{(1+1/\alpha)/2}/\varepsilon^{1/(2\alpha)}\\[-1pt]
 {}+R^{(1/\alpha-1)/2}/\varepsilon\Bigr)\end{gathered}$~\cite{WangFramework26,Tang26}}\\\noalign{\vskip7pt}\hline\noalign{\vskip7pt}
\multicolumn{4}{@{}l}{\textit{Additional promise: $\lambda_{\max}\le C/R$, fixed $C>1$}}\\[2pt]
$\alpha>1$
&\cell{$\displaystyle \Ot\!\left(R/\varepsilon^{1+1/\alpha}\right)$~\cite{WZL24}\tablegap Integer $\alpha$: $\displaystyle O\!\left(R^{\alpha-1}/\varepsilon\right)$~\cite{Ekert02,BHMT02}} &\cell{$\displaystyle\Ot(R/\varepsilon)$} &\cell{$\displaystyle\Omega(1/\varepsilon)$ (App.~\ref{app:spectral-promise-lower})}\\\noalign{\vskip7pt}\hline\noalign{\vskip7pt}
\end{tabular}\par\smallskip\begin{minipage}{.98\textwidth}\footnotesize Odd integers $k\ge5$ also admit $O(R^{(k-1)/2}/\varepsilon)$ by Proposition~\ref{prop:odd-moment}; this is dominated by the displayed multi-level term. The third-order comparison follows from the direct block-product construction in Proposition~\ref{prop:odd-moment}. Selecting the smaller applicable bound across the second and third columns gives the combined upper bound.  For $\alpha>3$, comparison of the displayed powers, with logarithmic factors suppressed, makes the new term no larger than the previous $\Ot(R/\varepsilon^{1+1/\alpha})$ term when $\varepsilon\le R^{-(\alpha-3)/2}$.  At order two this bound was already available without the promise. For $C=1$, the spectrum and all three entropies are fixed, so zero queries suffice. General-rank upper bounds are Theorem~\ref{thm:renyi}, and the promised bound is Proposition~\ref{prop:upper-spectrum}. Lower-bound ranges are in Appendix~\ref{app:table-certificates}; the promise-preserving family is in Appendix~\ref{app:spectral-promise-lower}.\par\smallskip Below one, the two previous bounds are rank-based and dimension-based, respectively. In the additional-promise row, the $1/\varepsilon$ lower bound uses the promise-preserving family with even $R$.\end{minipage}\end{table}

The common framework estimates effectively specified spectral functionals whose local rescalings admit bounded polynomial compilation. It includes logarithmic and fixed-power profiles. Theorem~\ref{thm:master} combines coherent decomposition, local compilation, joint readout, and error allocation, with all approximation and implementation errors included. This framework explains the entropy results above; their different tail bounds and final readouts are developed in Section~\ref{sec:applications}. The comparison tables retain earlier algorithms whenever their parameter dependence is better. In particular, below-one improvements over a rank-based baseline need not improve a dimension-based estimator on full-rank states.

\subsection{Technical Overview}
We organize the construction around the weighted spectral functional
\begin{equation}\label{eq:functional-intro}
 F_g(\rho)=\Tr[\rho g(\rho)]=\sum_{i:\lambda_i>0}\lambda_i g(\lambda_i).
\end{equation}
Here the eigenvalues $\lambda_i$ describe the mathematical target; our algorithm neither learns the eigenbasis nor receives the individual eigenvalues as classical data. We instead apply coherent spectral tests at dyadic scales $\delta_\ell$. Each test records whether an eigenvalue is above or below a gapped threshold, and only the paths that have not previously stopped undergo the next test. The first-stop records are orthogonal, so their actual response functions decompose \eqref{eq:functional-intro} exactly, even where neighboring tests have overlapping transition intervals. Approximation affects the localization of a branch, rather than the completeness of the decomposition.

For a contribution $A_\ell$ at scale $\delta_\ell$, we choose a known envelope $G_\ell$ for the local magnitude of $g$ and approximate the normalized profile $g/G_\ell$. If the contribution is allowed error $e_\ell$, its normalized readout needs error only $e_\ell/G_\ell$. We implement this profile as a globally bounded polynomial with local processing cost $C_\ell=\widetilde O(1/\delta_\ell)$. A Hadamard test then reads the encoded block on the unnormalized first-stop branch. Its joint probabilities include both the event of stopping at this level and the chosen Hadamard outcome, so the signal retains the branch mass. The fixed block-encoding normalization, $\beta=2$, enters the recovery formula explicitly.

The statistical analysis uses two probability bounds because a selected band and the computation leading to it need not have the same mass. We write $S_\ell$ for a known upper bound on the mass reaching level $\ell$, and $B_\ell\le S_\ell$ for a bound on the mass selected there. Ordinary amplitude estimation can exploit $B_\ell$, but each call still executes the complete level preparation. Variable-time estimation additionally incorporates the probabilities of reaching the preceding stages. For the basic rank profile, we may use the exact envelopes $S_\ell=B_\ell=\min\{1,8R\delta_\ell\}$, which give the following three query upper bounds:
\begin{equation}\label{eq:intro-costs}
\begin{aligned}
 Q_\ell^{\mathrm{basic}}&=\widetilde O(C_\ell G_\ell/e_\ell),\\
 Q_\ell^{\mathrm{fixed}}&=\widetilde O(C_\ell\sqrt{B_\ell}G_\ell/e_\ell),\\
 Q_\ell^{\mathrm{rank}}&=\widetilde O(C_\ell B_\ell G_\ell/e_\ell).
\end{aligned}
\end{equation}
The basic fixed-time estimate in the first line ignores probability information, the second uses the selected-band bound, and the third also uses the rank-based survival profile. For the logarithmic profile, $G_\ell$ grows only logarithmically as the scale decreases, whereas $C_\ell B_\ell=\widetilde O(R)$ on the fine levels. This is the cancellation responsible for our von Neumann bound.

A moment bound for R\'enyi entropy can give a selected-band cap $B_\ell$ substantially smaller than $S_\ell$. Under the same dyadic survival profile, the corresponding coefficient improves from $C_\ell G_\ell S_\ell$ to $C_\ell G_\ell\sqrt{S_\ell B_\ell}$. The remaining factor $\sqrt{S_\ell}$ accounts for earlier tests, which must act on every path that reaches them, including paths that eventually leave the target band. We prove the general stopping-time bound before deriving this dyadic simplification in Section~\ref{sec:statistics}.

To implement the variable-time readout when the physical signal can vanish, we add an orthogonal branch with a known success probability. We estimate the augmented probability, subtract the known contribution, and recover each amplification gain through ordinary amplitude estimation. The analysis retains the exact sine factors from amplitude amplification and gives a finite execution limit for every classical transcript. The algorithm uses known probability bounds and estimates of intermediate gains; their true values enter the proof of correctness and cost.

Once the local costs are known, the remaining choice is how to distribute the total statistical error. If level $\ell$ has query upper bound $\widetilde O(a_\ell/e_\ell)$, we assign $e_\ell$ proportional to $\sqrt{a_\ell}$; Cauchy--Schwarz shows that this minimizes the sum of these bounds under a fixed total tolerance. We then choose the deterministic approximation errors to fit the same final budget. Powers require a cutoff-tail bound, the logarithm requires leakage weighted by its singular profile, and R\'enyi entropy requires relative moment accuracy before taking a logarithm. Figure~\ref{fig:framework} summarizes the resulting replacement of a global product by scale-dependent processing and readout.

\begin{figure}[tbp]
\centering
\begin{adjustbox}{max width=\linewidth}
\begin{tikzpicture}[x=1cm,y=1cm,>=stealth,font=\footnotesize,
box/.style={draw,align=center,text width=4.6cm,minimum height=.9cm},
head/.style={align=center,text width=4.8cm,font=\small\bfseries}]
\node[head] at (2.4,.8) {Global processing};
\node[head] at (8,.8) {Coherent levels};
\node[head] at (13.6,.8) {Statistics at each level};
\node[box] (g1) at (2.4,0) {Finest required cutoff $\tau$};
\node[box] (g2) at (2.4,-1.35) {Global spectral transform\\ $\widetilde O(1/\tau)$};
\node[box] (g3) at (2.4,-2.7) {Readout at total precision\\ $\widetilde O(1/(\tau\varepsilon))$};
\draw (g1)--(g2); \draw (g2)--(g3);
\node[box] (m1) at (8,0) {$\delta_\ell$: reach / select $S_\ell$ / $B_\ell$\\ $C_\ell=\widetilde O(1/\delta_\ell)$};
\node[box] (m2) at (8,-1.35) {$\delta_\ell/2$: surviving paths continue\\ stopped paths remain frozen};
\node[box] (m3) at (8,-2.7) {Fine rank profile\\ $C_\ell=\widetilde O(\delta_\ell^{-1})$\\ $S_\ell\le\min\{1,8R\delta_\ell\}$};
\draw (m1)--(m2); \draw (m2)--(m3);
\node[box] (s1) at (13.6,0) {Local normalization\\ $g/G_\ell$, tolerance $e_\ell/G_\ell$};
\node[box] (s2) at (13.6,-1.35) {Fixed-time query bound\\ $\widetilde O(C_\ell G_\ell\sqrt{B_\ell}/e_\ell)$};
\node[box] (s3) at (13.6,-2.7) {Dyadic rank-type survival\\ $\widetilde O(C_\ell G_\ell\sqrt{S_\ell B_\ell}/e_\ell)$};
\draw (s1)--(s2); \draw (s2)--(s3);
\draw (m1.east)--(s1.west);
\node[align=center,text width=16cm] at (8,-3.7) {At fine scales, doubling the resolution halves the rank-based mass cap.\\ When $B_\ell=S_\ell$, the variable-time query bound is $\widetilde O(C_\ell B_\ell G_\ell/e_\ell)$.};
\end{tikzpicture}
\end{adjustbox}
\caption{Multi-level allocation of processing and statistical work. The middle panel shows processing costs and certified mass envelopes. On fine scales with $8R\delta_\ell\le1$, the rank-based envelope decreases linearly with $\delta_\ell$, while processing has inverse-scale cost up to logarithmic factors. The refined variable-time query bound assumes the dyadic rank-type survival bound proved in Section~\ref{sec:statistics}.}\label{fig:framework}
\end{figure}

\subsection{Related Work}
\paragraph{Classical entropy and distribution functionals.}
The difficulty of estimating entropy near zero is already present in the classical problem introduced by Shannon~\cite{Shannon48}. Consistency and bias analyses show why empirical frequencies alone can give poor estimates of nonlinear functionals~\cite{Antos01,Paninski03,Paninski04}. Work on entropy under different forms of distribution access~\cite{BatuEntropy05}, unseen distributions~\cite{Valiant17}, and polynomial approximation~\cite{Jiao15,Jiao17,WuYang16} developed ways to treat rare events differently from well-observed ones. The classical R\'enyi problem also distinguishes integer and noninteger orders~\cite{Acharya17}. We share the principle of adapting the approximation to a probability scale, but our resource is coherent access to a quantum preparation, and the spectral basis is not supplied.

\paragraph{Quantum queries to classical distributions.}
Quantum distribution testing~\cite{Bravyi11,Belovs19,MontanaroWolf16} and entropy estimation~\cite{LiWu19,ShinJeong25} exploit coherent probability access. Wang, Zhang, and Li~\cite{WZL24} combine QSVT, relative moment estimation, and variable-time methods; their Section~VI explicitly treats quantum-state reductions, including dimension- and rank-dependent bounds. We use those quantum-state results in our comparison. Chen \emph{et al.}~\cite{Multilevel26} subsequently formulate multi-level estimation using nondestructive discrimination and level-dependent approximation. Their Theorem~13 is the closest general functional-estimation result to our construction.

The access distinction matters quantitatively. The labeled distribution encoding in~\cite{Multilevel26} has singular values proportional to $\sqrt{p_i}$, whereas the swap construction for an unknown quantum state gives the block $\rho$ and therefore the eigenvalues $\lambda_i$. This changes the cost of resolving a probability scale. Their theorem also uses bounded, parity-definite polynomials through squared amplitudes. We instead retain the complete first-stop history and estimate a signed, unnormalized matrix element of a local density-operator transform. Our statistical bound separately retains the mass reaching a spectral test and the mass selected by it. The distinction is essential when a moment promise reduces the latter without reducing the earlier tests. Table~\ref{tab:nearest} compares these assumptions and the resulting readouts; Appendix~\ref{app:adjacent-comparison} gives a same-model two-classifier comparison.

\paragraph{Spectra, moments, and coherent entropy estimation.}
Quantum spectrum estimation and tomography provide general procedures for recovering spectral information~\cite{KeylWerner01,ODonnellWright16,HaahTomography17,ODonnellSpectrum21}. Direct estimation can be substantially more economical when only one functional is needed. Controlled permutations yield integer moments~\cite{Ekert02}; subsequent constructions reduce the number of simultaneously available copies or estimate several moments together~\cite{ShinTrace25,Simultaneous25,ShiMoments25}. Coherent entropy algorithms use purified access~\cite{GilyenLi20,Subramanian21}, spectral or entropy promises~\cite{Gur21,WangZhao23}, and low-rank structure~\cite{WangEntropies24,WZL24}. These assumptions lead to different dependence on dimension, rank, precision, and the smallest eigenvalue, so they cannot be interchanged in a complexity comparison.

Liu and Wang~\cite{LiuWang26} give effectively computable uniform approximations for positive powers and obtain dimension-independent precision bounds. Our rank-independent improvement builds on this progress by reducing the remaining precision cost for noninteger powers. Fixed integer moments already admit optimal $1/\varepsilon$ precision scaling, while growing-order dependence is studied in~\cite{WangMonomials25,WangHighOrder26}. Broader spectral-sum algorithms combine polynomial transformations with trace estimation under their respective matrix-access assumptions~\cite{SpectralSums20}. We develop the density-spectrum construction and local cost analysis needed for the entropy bounds considered here.

\begin{table}[tbp]\centering\footnotesize
\caption{Assumptions and guarantees of the closest frameworks and entropy costs of our two readout routes. The extra-qubit statement excludes the input encoding.}\label{tab:nearest}
\setlength{\tabcolsep}{4pt}\renewcommand{\arraystretch}{1.3}
\begin{tabular}{@{}L{49pt}L{132pt}L{118pt}L{145pt}@{}}\hline
Feature & Chen et al., Theorem 13 & WZL, Section VI & This work\\\hline
Encoding & $\sqrt{p_i}/2$, labeled distribution & State reductions already included & $\rho$ block; arbitrary purification\\[3pt]
Transform & Bounded parity-definite square polynomials & QSVT and relative moments & Bounded local profile; signed readout\\[3pt]
Statistics & Levelwise amplitude estimation & Annealing and variable time & Separate $S_j,B_\ell$; calibrated gains\\[3pt]
Resources & Local counts and four extra qubits & Rank/dimension alternatives & Query bounds; retained-history space\\[3pt]\hline\end{tabular}\par\smallskip
\begin{tabular}{@{}L{164pt}C{132pt}C{156pt}@{}}\hline
Entropy task, same purified access & Probability-sensitive fixed time & Variable time\\\hline
von Neumann & $\Ot(R/\varepsilon^{3/2})$ & $\Ot(R/\varepsilon)$\\[3pt]
R\'enyi, $1<\alpha<3/2$ & $\Ot(R/\varepsilon^{3/(2\alpha)})$ & $\Ot(R/\varepsilon)$\\[3pt]
R\'enyi, $3/2\le\alpha\le3$ & $\Ot(R/\varepsilon)$ & $\Ot(R/\varepsilon)$\\[3pt]
R\'enyi, $\alpha>3$ & $\Ot(R^{3/2-3/(2\alpha)}/\varepsilon)$ & $\Ot(R^{3/2-3/(2\alpha)}/\varepsilon)$\\[3pt]\hline
\end{tabular}
\par\smallskip\begin{minipage}{.97\textwidth}\footnotesize The lower panel gives upper bounds derived here with controlled purified access, rank at most $R\ge2$, fixed order, additive entropy error $\varepsilon$, and constant confidence; polylogarithms are suppressed. Proposition~\ref{prop:fixed-entropy} proves the fixed-time column for the first-stop prefixes of Section~\ref{sec:master}. Proposition~\ref{prop:adjacent-bands} separately gives a two-classifier construction with the same soft query powers; the comparison suppresses logarithmic factors. Rank-free Tsallis powers also follow from fixed time; the remaining Tsallis and below-one R\'enyi comparisons are in Proposition~\ref{prop:fixed-entropy}. Boundary orders may add level logarithms, and constants are not uniform in the order.\end{minipage}
\end{table}

\paragraph{Copies and restricted measurements.}
Sample-based entropy estimation asks how many independent copies of an unknown state are necessary~\cite{Acharya20}. The samplizer of~\cite[Theorem~III.1]{Samplizer25} converts block-encoded query procedures into copy-based algorithms while accounting for their simulation error. Further power estimators and lower bounds~\cite{ChenWang25,ChenLiuWang26}, recent von Neumann and R\'enyi/Tsallis estimators~\cite{GaoWang26,ChenWang26}, and estimators for nonlinear state functions~\cite{Yao26} improve different parts of this sample-complexity landscape. Restrictions on quantum memory, communication, or the allowed measurement further change the problem~\cite{Purity24,Anshu22,Song26}. These results describe complementary statistical models. Two-state functionals, such as quantum Tsallis relative entropy~\cite{BaoRelative25}, further involve the joint dependence on two input operators; our theorem treats a function of a single density operator.

\paragraph{Approximation, statistics, and lower bounds.}
QSP and QSVT provide the bounded transformations used in our local compiler~\cite{LowChuang17,QSVT19,LowChuang19}, with effective phase synthesis supplied by~\cite{Haah19}. Amplitude estimation~\cite{BHMT02}, quantum Monte Carlo, and variance-sensitive mean estimation~\cite{Montanaro15,HamoudiMagniez19,Hamoudi21} show how the statistical cost can depend on more than a worst-case outcome range. Variable-time amplification and estimation~\cite{Ambainis12,CGJ19}, including developments in variable-time search~\cite{AKV23}, similarly use the probabilities of executing expensive stages. Their role in linear-system algorithms illustrates why state preparation, spectral processing, and amplification should be accounted for separately~\cite{CKS17,LowSu26}. We use these established principles, but derive the specialized calibration from ordinary amplitude estimation so that its gains, reversals, and finite execution limits are explicit.

Our lower-bound comparisons draw on hybrid, polynomial, and adversary arguments~\cite{BBBV97,Beals01,HoyerLeeSpalek07,Reichardt11,Bun20}. Sample-to-query lifting~\cite{Lifting25,Tang26,BoundsList25}, together with sample-based simulation~\cite{LMR14,Kimmel17,Go25}, transfers suitable sample lower bounds to the purified model. The recent framework of~\cite{WangFramework26} supplies the entropy sample bounds used here. Appendix~\ref{app:table-certificates} records each transferred statement, its order and precision range, and the required purification and control conventions. We use these lower bounds to identify the parameter ranges in which our precision exponent is optimal, while leaving unmatched joint rank--precision dependence as an open question.

\subsection{Organization}
Section~\ref{sec:setup} fixes the estimation target, query model, and standard primitives. Section~\ref{sec:framework} constructs the multi-level estimator and proves its cost formula. Section~\ref{sec:applications} derives the entropy bounds and their logarithmic-moment consequences, and Section~\ref{sec:discussion} discusses their scope and the remaining questions. Appendices~\ref{app:instrument}--\ref{app:budgets} supply the detailed spectral, statistical, and implementation proofs; Appendix~\ref{app:entropy} proves the entropy scale sums, and Appendix~\ref{app:lower} gives the accuracy lower bounds, lifting argument, and comparison certificates.

\section{Preliminaries and Query Model}\label{sec:setup}
\subsection{Spectral Functionals and Error Criteria}
We consider a density operator $\rho$ on a $d_\rho$-dimensional system $S$. Its positive eigenvalues are $\lambda_1,\ldots,\lambda_r$, with corresponding orthonormal eigenvectors $\ket{u_i}$, so that $\rho\ket{u_i}=\lambda_i\ket{u_i}$ and $\sum_{i=1}^r\lambda_i=1$. We use natural logarithms throughout. The three entropy functionals, listed in the order of our applications, are
\begin{subequations}\label{eq:entropy-defs}
\begin{align}
 T_\alpha(\rho)&:=\frac{1-\Tr(\rho^\alpha)}{\alpha-1},\label{eq:def-tsallis}\\
 S(\rho)&:=-\Tr(\rho\log\rho),\label{eq:def-vn}\\
 S_\alpha^{\mathrm{R}}(\rho)&:=\frac{\log\Tr(\rho^\alpha)}{1-\alpha}.\label{eq:def-renyi}
\end{align}
\end{subequations}
Tsallis and R\'enyi orders are fixed $\alpha>0$, $\alpha\ne1$; von Neumann entropy has no order parameter. The convention $0\log0=0$ defines the trace in \eqref{eq:def-vn} continuously at zero. The common auxiliary quantity for the power-based applications is the moment
\begin{equation}\label{eq:moment-def}
 M_\alpha(\rho):=\Tr(\rho^\alpha)=\sum_{i=1}^r\lambda_i^\alpha.
\end{equation}
The functional framework uses two scalar profiles:
\begin{align}\label{eq:entropy-profiles}
 g_\alpha(x)&:=x^{\alpha-1},&F_{g_\alpha}(\rho)&=M_\alpha(\rho),\\
 g_{\log}(x)&:=-\log x,&F_{g_{\log}}(\rho)&=S(\rho).\nonumber
\end{align}
Tsallis estimation applies the affine map $(1-M_\alpha)/(\alpha-1)$; R\'enyi estimation applies $\log M_\alpha/(1-\alpha)$; von Neumann estimation directly returns the logarithmic functional. Scalar variables $x,y$ are used for approximation, and operator functions are defined by spectral calculus on the positive support. Zero eigenvalues contribute zero to every weighted functional. When the input $\rho$ is fixed, we abbreviate $M_\alpha(\rho)$ and $F_g(\rho)$ as $M_\alpha$ and $F_g$.

For a functional tolerance $\varepsilon_F\in(0,1)$ and failure probability $\nu\in(0,1/3)$, the goal is to output $\widehat F$ satisfying
\begin{equation}\label{eq:target}
 \Pr[\abs{\widehat F-F_g(\rho)}\le\varepsilon_F]\ge1-\nu.
\end{equation}
We distinguish this tolerance from the final entropy error $\eps$, since the conversion depends on the entropy. A known $R\ge r$ is used only in rank-sensitive statements. No eigenvalue lower bound is assumed.

The moment ranges determine the precision needed before the final R\'enyi logarithm. Convexity and concavity give
\begin{equation}\label{eq:moment-ranges}
 \begin{cases}1\le M_\alpha\le R^{1-\alpha},&0<\alpha<1,\\
 R^{1-\alpha}\le M_\alpha\le1,&\alpha>1.
 \end{cases}
\end{equation}
For $R=1$ all three entropies vanish, so the algorithm returns zero. Subsequent rank-dependent constructions take $R\ge2$.

We use $O$, $\Omega$, and $\Theta$ with constants that may depend on the function family and on parameters declared fixed in the relevant statement, including the orders $\alpha$ or $k$ and the spectral-promise constant $C$. Dependence on varying parameters is displayed explicitly, except for logarithmic factors suppressed by a tilde. These factors may involve inverse error, inverse cutoff, inverse failure probability, and the rank when present; their exponents may also depend on the fixed parameters. Local statements specify the behavior near order boundaries.

\subsection{Purified Query Access and Block Encodings}
To apply spectral transformations, we first represent the density operator as a block of a unitary. A unitary $U$ is a $(\beta,a,\eta)$ block encoding of $A$ when
\begin{equation}
 \norm{A-\beta(\bra{0^a}\otimes I)U(\ket{0^a}\otimes I)}\le\eta.
\end{equation}
Our readout keeps the normalization $\beta$ explicit; Appendix~\ref{app:compiler} fixes $\beta=2$ for the local compiler.

The input oracle $U_\rho$ prepares $\ket{\psi_\rho}_{SE}$ from all-zero registers, with $\Tr_E\ket{\psi_\rho}\bra{\psi_\rho}=\rho$. Each call to $U_\rho$, $U_\rho^\dagger$, or a controlled version counts as one query. Access to controlled versions is part of the model. We measure complexity by the number of these oracle calls. On fresh preparation registers $A_1E_1$ and a target register $S$, the swap construction
\begin{equation}\label{eq:block}
 W_\rho=(U_\rho^\dagger\otimes I_S)
       (I_{E_1}\otimes\mathrm{SWAP}_{A_1,S})(U_\rho\otimes I_S)
\end{equation}
has top-left block $\rho_S$ and costs two queries~\cite{GilyenLi20}. Tensor factors in \eqref{eq:block} are ordered $E_1,A_1,S$; $U_\rho$ acts on $E_1A_1$. Fresh preparation registers are distinct from the purification being processed. Contracting the fresh purification $\sum_i\sqrt{\lambda_i}\ket{a_i}_{E_1}\ket{u_i}_{A_1}$ on either side of SWAP gives $\sum_i\lambda_i\ket{u_i}\bra{u_i}=\rho$ on $S$, proving the block identity.

Correctness is required for every purification and unitary completion consistent with $\rho$; the construction uses no information about that completion. We allow a bounded number of controls and compute a nested predicate reversibly into one flag. Throughout the computation, the original environment $E$ remains untouched.

The clean-input convention is also important. Zero auxiliary inputs specify the block of a complete module, not a condition that may be reimposed before each internal oracle call. A QSVT sequence reuses its signal registers through $W_\rho,W_\rho^\dagger$ and phase projectors without intermediate measurement or reset. We therefore give complete simulation and local-transform modules their own zero workspaces and retain their outputs together with the classifier histories until the complete inverse. Ideal Hamiltonian simulation returns its auxiliaries to zero on the clean-input subspace, whereas a general contraction block need not. In either case, the inverse reverses the full prefix, including its flags and module workspaces. Appendix~\ref{app:budgets} follows this convention when counting slots and assigning synthesis precision.

Table~\ref{tab:registers} in Appendix~\ref{app:instrument} records the workspace and its reversal.

\subsection{Algorithmic Primitives}
We use three standard algorithmic ingredients. First, polynomial eigenvalue transformation implements a real degree-$d$ polynomial bounded on $[-1,1]$ with $O(d)$ block-encoding queries and a known constant normalization~\cite{QSVT19}. Polynomial approximation error and finite synthesis error are tracked separately. Second, additive amplitude estimation estimates a specified success probability to error $h$ using $O(h^{-1}\log(1/\nu_{\mathrm{call}}))$ calls to its preparation circuit and inverse, with failure at most $\nu_{\mathrm{call}}$~\cite{BHMT02}. Coherent flags define success even for a subnormalized branch.

Third, variable-time estimation organizes reversible stages that freeze stopped paths~\cite{Ambainis12,CGJ19}. Appendix~\ref{app:vt} gives the complete specialized calibration used here. It uses ordinary amplitude estimation to choose conservative amplification counts, tracks the exact gains, and recovers the original probability. Here $p$ is the flagged success probability, $T_{\max}$ the longest execution time, and $T_2$ the root-mean-square stopping time. A known lower bound $p_*>0$ supplies its probability scale.

With a known scale $p_*\le p\le2p_*$, this construction gives query bound $\Ot(\gamma^{-1}[T_{\max}+(T_2+1)/\sqrt p])$ for relative precision $\gamma$. The appendix derives this bound from the actual stage costs and places a deterministic cap on every execution, so the guarantee does not rely on postselecting a favorable running time.

\section{Multi-Level Estimation of Quantum Spectral Functionals}\label{sec:framework}
\subsection{Framework Overview and Admissible Profiles}\label{sec:local-condition}
We now turn the preceding cost comparison into an estimator for $F_g$. The input consists of a function profile, a desired accuracy, and computable bounds on the contributions that may be omitted or misclassified. These bounds determine which spectral scales must be resolved. At each retained scale we then keep four quantities separate: the processing cost $C_\ell$, the local function magnitude $G_\ell$, the mass $S_\ell$ that can reach the test, and the mass $B_\ell$ selected by it. We begin with the approximation condition that makes the local processing cost an inverse spectral scale. The coherent decomposition and readout developed next will use the other three quantities to prove Theorem~\ref{thm:master}.

We choose a cutoff $\tau\in(0,1/8]$ to control the omitted small eigenvalues and a classifier parameter $\xi\in(0,1/2)$ to control misclassification. Our dyadic scales are
\begin{equation}\label{eq:scales}
 \delta_\ell=2^{-\ell-1},\quad 0\le\ell\le L,\qquad
 \tau/2<\delta_L\le\tau.
\end{equation}
We associate level $\ell$ with the interval $I_\ell=[\delta_\ell/2,2\delta_\ell]$. Neighboring intervals overlap so that an eigenvalue near a threshold may contribute to either band; the construction does not require a discontinuous spectral test.

\begin{definition}[Uniformly locally analytic profiles]\label{def:analytic}
A real profile $g$ and positive envelopes $G_\ell$ are admissible if
\begin{equation}\label{eq:profile}
 f_\ell(y)=g(4\delta_\ell y)/G_\ell
\end{equation}
extend analytically to a common complex neighborhood of $J=[1/32,3/4]$ and are uniformly bounded there. If necessary, $g$ is continued beyond one solely to define these auxiliary profiles; the target functional uses only its values on $(0,1]$.
\end{definition}
We require an effective description of the profile: its scale parameters, envelopes, and uniform analytic bounds are known, and the coefficients of the bounded approximants can be computed to any requested finite precision with classical work polynomial in their degree and precision bit length. This preprocessing produces the circuit descriptions whose oracle use we count. The input also includes computable bounds on the omitted tail and off-band leakage; the applications below derive these from rank, normalization, or a known moment scale.

Enlarging all envelopes by a fixed constant ensures that the profiles have magnitude at most $1/4$ on this neighborhood. This slack will enforce global polynomial boundedness. Powers and logarithms satisfy the definition with
\begin{equation}\label{eq:envelopes}
 G_\ell=\begin{cases}
 C_\alpha\delta_\ell^{\alpha-1},&g(x)=x^{\alpha-1},\\
 C_H[1+\log(1/\delta_\ell)],&g(x)=\log(1/x).
 \end{cases}
\end{equation}
The rescaled power is independent of $\ell$; the rescaled logarithm is a bounded constant plus a bounded multiple of $\log y$. In contrast, rescaling $\sin(1/x)$ leaves increasingly rapid oscillations and does not meet this uniform condition. Uniform regularity after rescaling permits approximation with polylogarithmic degree.

For $0<\alpha<1$, the power profile has exponent in $(-1,0)$. Choosing the principal analytic branch on the same neighborhood of $J$, which stays away from zero and the negative real axis, gives the same uniform local analyticity and effective representation. Its envelope grows at fine scales; this growth will be balanced against the rank tail in Section~\ref{sec:applications}.

\subsection{Coherent Spectral Decomposition}\label{sec:first-stop}
We separate the spectral scales without measuring the system eigenstate. At scale $\delta_j$, a coherent classifier records a stop or continue decision in a fresh register and retains its work registers. We write $s_j(x)$ for its stop probability on an eigenstate of eigenvalue $x$. Gapped phase estimation~\cite{Cleve98}, implemented with block-Hamiltonian simulation~\cite{LowChuang19}, gives
\begin{equation}\label{eq:classifier}
 \begin{aligned}
 s_j(x)&\le\xi^2,&&\text{when }x\le\delta_j/2,\\
 1-s_j(x)&\le\xi^2,&&\text{when }x\ge\delta_j,
 \end{aligned}
\end{equation}
at cost $\Ot(1/\delta_j)$, with logarithmic dependence on $1/\xi$. The ideal spectral classifier preserves $\ket{u_i}$; finite synthesis error is accounted for separately in Appendix~\ref{app:budgets}.

We use sans-serif symbols for Boolean flags and initialize $\mathsf r_{-1}=1$. At each level $j=0,\ldots,L$, the classifier writes its high-eigenvalue bit $h_j$. We then compute the first-stop and continuation predicates into fresh zero-initialized targets:
\begin{equation}\label{eq:reversible-flags}
 \mathsf{f}_j=\mathsf{r}_{j-1}h_j,\qquad
 \mathsf{r}_j=\mathsf{r}_{j-1}(1-h_j).
\end{equation}
Because the flags have values in $\{0,1\}$, the products in \eqref{eq:reversible-flags} specify their Boolean values. We compute them with Toffoli gates, using a negative control for $1-h_j$, into fresh targets while retaining both inputs and all phase-estimation records. The flag $\mathsf r_{j-1}$ controls whether classifier $j$ acts, whereas $\mathsf f_j$ identifies the branch that stops there for the first time. Retaining these records makes the branches orthogonal and allows the complete computation to be reversed, as proved in Appendix~\ref{app:instrument}.

Because each later classifier is controlled on the continuation flag, it acts as the identity on a branch that has already stopped. We denote by $K_\ell$ the map from the input system into the first-stop branch at level $\ell$, including its work and flag registers, and by $K_\perp$ the map into the branch that continues through all levels. Unitarity of the complete circuit implies
\begin{equation}\label{eq:complete}
 \sum_{\ell=0}^L K_\ell^\dagger K_\ell+K_\perp^\dagger K_\perp=I.
\end{equation}
These maps take values in a larger space including flags and work registers; their output norms encode the branch probabilities. Their scalar responses are
\begin{equation}\label{eq:responses}
 \begin{aligned}
 m_\ell(x)&=s_\ell(x)\prod_{j<\ell}[1-s_j(x)],\\
 m_\perp(x)&=\prod_{j=0}^L[1-s_j(x)].
 \end{aligned}
\end{equation}
For every eigenvalue $x$, these responses sum to one. They are the actual probabilities of the ideal spectral circuit, including its transition regions, and therefore define the weights in an exact decomposition rather than an approximate partition of the spectrum.

The unnormalized state of the selected branch is $\ket{\zeta_\ell}=(K_\ell\otimes I_E)\ket{\psi_\rho}$. Its probability mass and contribution to the functional are, respectively,
\begin{equation}\label{eq:layer-exact}
 w_\ell=\sum_i\lambda_i m_\ell(\lambda_i),\qquad
 A_\ell=\sum_i\lambda_i m_\ell(\lambda_i)g(\lambda_i).
\end{equation}
Completeness gives $F_g=\sum_\ell A_\ell+A_\perp$, where $A_\perp$ is defined by the same expression with $m_\perp$ in place of $m_\ell$. This identity is the starting point for the deterministic error analysis.

\begin{lemma}[Localization, survival, and band mass]\label{lem:mass}
Outside $I_\ell$, $m_\ell(x)\le\xi^2$. For $r\le R$, the probability of reaching stage $\ell$ and its first-stop mass $w_\ell$ are bounded by
\begin{equation}\label{eq:B}
 S_\ell=\min\{1,8R\delta_\ell\},\qquad B_\ell=S_\ell,
\end{equation}
provided $\xi^2\le R\delta_L$. Also $m_\perp(x)\le\xi^2$ for $x\ge\delta_L$. A sharper known bound $w_\ell\le B_\ell\le S_\ell$ may be substituted for the band cap while retaining $S_j$ for all prefix costs.
\end{lemma}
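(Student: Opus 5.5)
We derive every claim from the classifier guarantee \eqref{eq:classifier} and the product form \eqref{eq:responses} of the responses. For localization, take $x\notin I_\ell=[\delta_\ell/2,2\delta_\ell]$ and split into two cases. If $x<\delta_\ell/2$, then $m_\ell(x)\le s_\ell(x)\le\xi^2$, because every factor in \eqref{eq:responses} lies in $[0,1]$. If $x>2\delta_\ell$, then necessarily $\ell\ge1$, since $2\delta_0=1$ and eigenvalues are at most one. In that case $\delta_{\ell-1}=2\delta_\ell<x$, so $1-s_{\ell-1}(x)\le\xi^2$, and this factor bounds $m_\ell(x)$. The tail statement is the same argument: for $x\ge\delta_L$ the last factor of $m_\perp(x)$ satisfies $1-s_L(x)\le\xi^2$.

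For survival, the probability of reaching stage $\ell$ is $\sum_i\lambda_i\prod_{j<\ell}[1-s_j(\lambda_i)]$. This follows from the orthogonality of the retained histories established in Appendix~\ref{app:instrument}, exactly as in \eqref{eq:complete}. The bound $1$ is trivial, so we may assume $8R\delta_\ell<1$, which forces $\ell\ge1$. Split the eigenvalues by the threshold $\delta_{\ell-1}=2\delta_\ell$. Eigenvalues below it contribute at most $r\cdot2\delta_\ell\le2R\delta_\ell$, because there are at most $r\le R$ of them and each has survival weight at most one. Each eigenvalue $\lambda_i\ge\delta_{\ell-1}$ is stopped with high probability by classifier $\ell-1$, so its survival factor is at most $\xi^2$. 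These eigenvalues therefore contribute at most $\xi^2\sum_i\lambda_i\le\xi^2$. Using $\xi^2\le R\delta_L\le R\delta_\ell$ gives a total of at most $3R\delta_\ell\le8R\delta_\ell$.

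The band mass then follows because $m_\ell(x)\le\prod_{j<\ell}[1-s_j(x)]$, so $w_\ell$ is at most the reaching probability. This justifies taking $B_\ell=S_\ell$. Substituting a sharper $B_\ell$ for the band cap changes none of the prefix bounds, which depend only on $S_j$.

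The main obstacle is not the arithmetic but justifying that "the probability of reaching stage $\ell$" is exactly this eigenvalue-weighted product. This needs two facts: that the classifiers preserve each $\ket{u_i}$, and that the retained records make distinct histories orthogonal, so that no interference terms appear. I would cite the ideal-classifier property stated with \eqref{eq:classifier}, together with Appendix~\ref{app:instrument}, and leave synthesis error to Appendix~\ref{app:budgets}. A secondary point to check is the boundary case $\ell=0$, where $S_0$ must equal $1$: indeed $8R\delta_0=4R\ge1$, so the trivial bound applies there.
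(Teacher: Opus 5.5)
Your proposal is correct and follows the paper's own argument. Localization and the tail bound each come from a single decisive misclassification: a false high at level $\ell$, a false low at level $\ell-1$, or a false low at level $L$. The mass bounds come from splitting the spectrum at $2\delta_\ell$ and using $\xi^2\le R\delta_L\le R\delta_\ell$. The only cosmetic difference is order: you bound the reach probability first and deduce $w_\ell$ from $m_\ell\le\prod_{j<\ell}[1-s_j]$, whereas the paper bounds $w_\ell$ directly and notes that the same argument covers the reach probability.
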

The localization bound follows because an off-band eigenvalue must either trigger a test below its high-eigenvalue range or survive the preceding high-eigenvalue test. For the rank bound, we split the spectrum at $2\delta_\ell$: the eigenvalues below this threshold carry mass at most $2R\delta_\ell$, while those above it reach the level with probability at most $\xi^2$. The same argument bounds the selected mass. Appendix~\ref{app:instrument} supplies the endpoint cases and constants. When no rank information is used, we set $S_\ell=B_\ell=1$ instead.

\subsection{Local Function Approximation and Readout}\label{sec:compiler}
Our local transform must approximate $g/G_\ell$ on the selected interval while remaining bounded on the full QSVT domain $[-1,1]$. We meet these requirements in two stages: a bounded polynomial rescales the selected interval to a fixed interval away from zero, and a second bounded polynomial approximates the normalized profile there.

\begin{lemma}[Bounded local compiler]\label{lem:compiler}
Under Definition~\ref{def:analytic}, for every $0<\eta_\ell<1/10$ there is a constant-normalized block encoding $U_{\ell,g}$ of a Hermitian contraction $q_\ell(\rho)$ such that
\begin{equation}\label{eq:compiler}
 \sup_{x\in I_\ell}\abs{q_\ell(x)-g(x)/G_\ell}\le\eta_\ell,
 \qquad C_\ell^{\mathrm{loc}}=\Ot(1/\delta_\ell).
\end{equation}
Here one may take the known normalization $\beta=2$, and precision dependence in $C_\ell^{\mathrm{loc}}$ is polylogarithmic.
\end{lemma}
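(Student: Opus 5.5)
The plan is to build $q_\ell$ as a composition $q_\ell=p_\ell\circ r_\ell$ of two bounded real polynomials, each implemented by polynomial eigenvalue transformation (QSVT) applied to the block encoding $W_\rho$ of $\rho$ from \eqref{eq:block}. The first polynomial $r_\ell$ is the rescaling stage. It should satisfy three conditions: $\abs{r_\ell(x)-x/(4\delta_\ell)}\le\eta'$ for $x\in[0,2\delta_\ell]$, which maps $I_\ell$ into a slight enlargement of $[1/8,1/2]\subset J$; $\abs{r_\ell(x)}\le 3/4$ on all of $[-1,1]$; and degree $O(\delta_\ell^{-1}\log(1/\eta'))$. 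For this I will use the standard bounded linear-amplification construction of \cite{QSVT19}: multiply $x/(4\delta_\ell)$ by a smooth rectangle/erf-type polynomial window that equals $1$ up to error $\eta'$ on $[-2\delta_\ell,2\delta_\ell]$ and decays on $\abs{x}\ge 3\delta_\ell$. The resulting degree is $O(\delta_\ell^{-1}\log(1/\eta'))$, and the product stays below $3/4$. Applying QSVT with this odd polynomial to $W_\rho$ gives a block encoding of $r_\ell(\rho)$ at cost $\Ot(1/\delta_\ell)$, after which the QSVT-produced block encoding is itself a valid signal operator for the second stage.

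The second polynomial $p_\ell$ approximates the normalized profile $f_\ell$ from \eqref{eq:profile}. By Definition~\ref{def:analytic}, with envelopes enlarged so that $\abs{f_\ell}\le 1/4$, the functions $f_\ell$ are analytic and bounded by $1/4$ on a common complex neighborhood of $J$, uniformly in $\ell$. Bernstein-ellipse (Chebyshev) approximation on an interval $J'$ with $[1/8-\eta',1/2+\eta']\subset J'\subset J$ then gives polynomials of degree $O(\log(1/\eta_\ell))$, with constants independent of $\ell$, that achieve error $\eta_\ell/2$ on $J'$. To keep the polynomial bounded on $[-1,1]$ I will use the bounded-extension lemma of \cite{QSVT19}: a polynomial that approximates a function bounded by $1/4$ on a subinterval can be multiplied by a polynomial window to give a real polynomial bounded by $1/2$ on $[-1,1]$, at logarithmic degree cost. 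The $1/4$ slack is exactly what this step consumes. If parity is needed, I will split into even and odd parts, implement each separately, and combine them by linear combination of unitaries. The resulting normalization factor $2$ is the source of $\beta=2$.

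Next I will check the error bound \eqref{eq:compiler}. For $x\in I_\ell$, analyticity gives that $f_\ell$ is Lipschitz on $J'$ with an $\ell$-uniform constant. Writing $g(x)/G_\ell=f_\ell(x/(4\delta_\ell))$, I split
\[
\abs{p_\ell(r_\ell(x))-f_\ell(x/(4\delta_\ell))}\le\abs{p_\ell(r_\ell(x))-f_\ell(r_\ell(x))}+\abs{f_\ell(r_\ell(x))-f_\ell(x/(4\delta_\ell))}.
\]
The first term is at most $\eta_\ell/2$ and the second is $O(\eta')$, so choosing $\eta'=\Theta(\eta_\ell)$ gives the bound. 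Because $p_\ell$ and $r_\ell$ are applied to Hermitian operators through spectral calculus, $q_\ell(\rho)=p_\ell(r_\ell(\rho))$ is a Hermitian contraction. The total query cost is $\deg p_\ell\cdot\deg r_\ell=O(\log(1/\eta_\ell))\cdot O(\delta_\ell^{-1}\log(1/\eta_\ell))=\Ot(1/\delta_\ell)$.

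I expect the main obstacle to be the uniform, effective bounded extension of $p_\ell$. The approximating polynomial must stay bounded on all of $[-1,1]$, including where $r_\ell(\rho)$ sends eigenvalues outside $I_\ell$ (near $0$ or up to $3/4$), and the degree must stay polylogarithmic uniformly in $\ell$. The effectivity requirement, namely computable coefficients and phases, will rest on the known-bound assumptions and \cite{Haah19}. Finite phase-synthesis error will be deferred to Appendix~\ref{app:budgets}.
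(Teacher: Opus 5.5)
Your proposal is correct and is essentially the paper's construction in Appendix~\ref{app:compiler}. The shared ingredients are:
\begin{itemize}
\item an odd uniform-amplification polynomial $r_\ell\approx x/(4\delta_\ell)$ on $[0,2\delta_\ell]$;
\item a Bernstein-ellipse Chebyshev truncation of $f_\ell$, made globally bounded by a fixed-gap window at only logarithmic extra degree, since the truncation grows at most like $e^{O(d)}$ on $[-1,1]$;
\item the same triangle-inequality and Lipschitz error split;
\item an even/odd split recombined with one control qubit, giving $\beta=2$.
\end{itemize}

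The differences are minor. You nest two QSVT circuits, whereas the paper composes $P_\ell\circ r_\ell$ classically and compiles a single polynomial of degree $\deg P_\ell\cdot\deg r_\ell$; the query count is the same. In your window-times-linear construction, the window error must be $O(\delta_\ell)$ to keep $r_\ell$ bounded out to $x=1$. So $\deg r_\ell=O(\delta_\ell^{-1}\log(1/(\delta_\ell\eta')))$ rather than the bound you stated, which is still $\Ot(1/\delta_\ell)$.
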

The construction in Appendix~\ref{app:compiler} maps $x$ to approximately $x/(4\delta_\ell)$ and compiles \eqref{eq:profile} on a fixed interval away from zero. Its two stages multiply their degrees, but only the rescaling degree contains an inverse power of $\delta_\ell$.

We read the local transform through a Hadamard test controlled on the first-stop flag. Its joint probabilities $p_{\ell,+}$ and $p_{\ell,-}$ include both selection of that branch and the Hadamard outcome. We do not postselect the block-encoding work registers, since the desired signal is a matrix element rather than its squared magnitude. On the retained branch, we define the unnormalized expectation
\begin{equation}\label{eq:readout-expectation}
 \mathcal R_\ell:=\bra{\zeta_\ell}
 [q_\ell(\rho_S)\otimes I_{\mathrm{rest}}]\ket{\zeta_\ell}.
\end{equation}
Here $I_{\mathrm{rest}}$ acts on the environment and classification records. The joint probabilities then satisfy
\begin{align}
 p_{\ell,\pm}&=\frac12\left(w_\ell\pm
 \frac{\mathcal R_\ell}{\beta}\right),\label{eq:hadamard}\\
 \widetilde A_\ell&:=G_\ell\mathcal R_\ell
 =\beta G_\ell(p_{\ell,+}-p_{\ell,-}).\label{eq:readout}
\end{align}
In subsequent operator expressions, identities on untouched registers are implicit. In particular, $p_{\ell,+}+p_{\ell,-}=w_\ell$ and each probability is at most $B_\ell$. The joint probabilities retain the branch mass within the observed signal.

The statistical estimator recovers the contribution of $q_\ell$, so we must still compare that quantity with the contribution of the target function $g$. Within $I_\ell$, the local approximation guarantee controls this difference. Outside the interval, we bound the magnitudes of both terms and define the resulting functional leakage by
\begin{equation}\label{eq:leakage}
 \mathcal L_{g,\ell}=\sum_{\lambda_i\notin I_\ell}
 \lambda_i m_\ell(\lambda_i)\bigl(\abs{g(\lambda_i)}+G_\ell\bigr).
\end{equation}
The local deterministic error then obeys
\begin{equation}\label{eq:local-error}
 \abs{A_\ell-\widetilde A_\ell}
 \le G_\ell\eta_\ell B_\ell+\mathcal L_{g,\ell}.
\end{equation}
For a singular function, this weighted leakage incorporates both the off-band probability mass and the function magnitude.

For a concrete illustration, consider spectrum $(1/2,1/4,1/4)$ and $g(x)=\sqrt{x}$. An idealized branch collecting the two eigenvalues $1/4$ has mass $w=1/2$ and contribution $A=1/4$. Ignoring approximation error, its readout obeys $p_++p_-=1/2$ and $\beta G(p_+-p_-)=1/4$. Thus the band mass and its functional contribution are different quantities, both retained by the joint readout. The actual overlapping classifier replaces this illustrative partition by its exact response weights.

\subsection{Fixed-Time and Variable-Time Estimation}\label{sec:statistics}
We write $T_\ell^{\mathrm{cls}}$ for the cumulative classifier query cost through level $\ell$ and $C_\ell^{\mathrm{loc}}$ for the cost of its local transform. Including the initial preparation, the complete unamplified real branch has cost $C_\ell=1+T_\ell^{\mathrm{cls}}+C_\ell^{\mathrm{loc}}$. Both processing terms are $\Ot(\delta_\ell^{-1})$, because the classifier costs form a geometric sequence.

\paragraph{One complete band execution.}
For a fixed level $\ell$ and Hadamard sign, we prepare $\ket{\psi_\rho}$ and execute classifiers $0,\ldots,\ell$ with the flag updates in \eqref{eq:reversible-flags}. A branch that stops before $\ell$ is permanently marked as failed. The flag $\mathsf f_\ell$ controls the local transform and Hadamard test; a branch that is still running after this classifier also fails. The success event is the conjunction of $\mathsf f_\ell=1$ and the selected Hadamard outcome. We exclude the transform auxiliaries from this predicate, since the Hadamard interference already extracts the matrix element with normalization $\beta=2$. This complete preparation is the circuit used by fixed-time estimation.

Variable-time estimation needs a known positive probability scale even when the real signal vanishes. A fair coherent selector $d$ therefore chooses between the real preparation and an auxiliary, or reference, branch. The latter prepares a coin $b$ with success probability $B_\ell$ and stops immediately. The augmented success event is reference success or real success, on orthogonal selector values; its probabilities consequently add. After estimating that probability, we subtract the known reference contribution. Each inverse reverses the full preparation, including phase records and flag gates. Reflections thus use known zero inputs or explicitly computed success flags.

\begin{figure}[tbp]\centering
\begin{adjustbox}{max width=\linewidth}
\begin{tikzpicture}[x=1cm,y=1cm,>=stealth,font=\footnotesize,
gate/.style={draw,fill=white,align=center,minimum height=.65cm},
ann/.style={align=center,text width=7.6cm}]
\node[anchor=east] at (1.5,0.0) {$d,b$};\draw (1.65,0.0)--(16,0.0);
\node[anchor=east] at (1.5,-0.95) {$P_j,h_j,\mathsf r_j,\mathsf f_j$};\draw (1.65,-0.95)--(16,-0.95);
\node[anchor=east] at (1.5,-1.9) {$S,E$};\draw (1.65,-1.9)--(16,-1.9);
\node[anchor=east] at (1.5,-2.85) {$Z_\ell,H$};\draw (1.65,-2.85)--(16,-2.85);
\node[gate,text width=2.1cm,minimum height=3.6cm] (prep) at (3.05,-1.425) {Coherent setup\\ $d=0$: $\ket{\psi_\rho}$\\ $d=1$: $B_\ell$ coin};
\node[gate,text width=3cm,minimum height=1.65cm] (cls) at (6.65,-1.425) {Classifier prefix $0,\ldots,\ell$\\ fresh history and flags};
\draw (6.65,0)--(cls.north); \draw[fill=white] (6.65,0) circle (1.8pt);
\node[above] at (6.65,.05) {$d=0$};
\node[gate,text width=2.5cm,minimum height=1.6cm] (local) at (10.25,-2.375) {$q_\ell/\beta$, $\beta=2$\\ Hadamard readout};
\fill (10.25,-.95) circle (1.8pt); \draw (10.25,-.95)--(local.north);
\node[above] at (10.25,-.85) {$\mathsf f_\ell=1$};
\node[gate,text width=2.3cm,minimum height=3.6cm] (event) at (14,-1.425) {Reversible success flag\\[5pt] Reference branch $d=1$:\\ success when $b=1$\\[5pt] Physical branch $d=0$:\\ success when $\mathsf f_\ell=1$\\ and $H=h_\pm$};
\node[ann] at (4.4,-4.65) {Earlier stops and paths running after the target are failures. Reference paths stop after setup.};
\node[ann] at (12.5,-4.65) {$Z_\ell$ is not postselected; all histories are retained.\\ $\bar p=(B_\ell+p_{\ell,\pm})/2$};
\node[draw,align=center,text width=15.5cm] (cal) at (8.45,-6.0) {Classical calibration calls the complete coherent preparation and its inverse;\\ recover $p_{\ell,\pm}=2\bar p-B_\ell$ and combine the two signs.};
\end{tikzpicture}
\end{adjustbox}
\caption{Variable-time module circuit for one selected level and Hadamard sign. Fixed time omits the known-success branch and calibration; reference-branch failure is excluded at the first stage-end projection. The wires show coherent controlled operations, not intermediate branch measurements. The prefix executes a classifier only on a still-running real path; every inverse reverses its records, Boolean updates, and clean-input preparation. Here $h_+=0$ and $h_-=1$. The environment $E$ is untouched. The final success flag is the projection used by the statistical estimator.}\label{fig:band-flow}
\end{figure}

\subsubsection{Probability-sensitive fixed-time estimation}
\begin{proposition}[Fixed-time estimation of one band]\label{prop:fixed-band}
Suppose complete preparation and inverse circuits are available, with known bounds $p_{\ell,\pm}\le B_\ell\le1$ and $0<e_\ell<B_\ell G_\ell$. For each sign, the following integer parameters define a fixed-time estimator. Here $d_0\ge1$ is the input-preparation cost and $0<\zeta_\ell<1/4$ is the confidence parameter.
\begin{align}\label{eq:fixed-cap}
 a_\ell^{\mathrm{AE}}&=e_\ell/(4\beta G_\ell),\quad
 C_\ell=d_0+\sum_{j=0}^\ell c_j+C_\ell^{\mathrm{loc}},\nonumber\\
 M_\ell&=\left\lceil\frac{4\pi\sqrt{B_\ell}}{a_\ell^{\mathrm{AE}}}
              +\frac{2\pi}{\sqrt{a_\ell^{\mathrm{AE}}}}\right\rceil,\nonumber\\
 r_\ell^{\mathrm{FT}}&=2\lceil4\log(4/\zeta_\ell)\rceil+1,\nonumber\\
 K_{\ell,\pm}^{\mathrm{FT}}&=8r_\ell^{\mathrm{FT}}(M_\ell+1)C_\ell.
\end{align}
Ordinary amplitude estimation~\cite{BHMT02}, followed by a median and projection onto $[0,B_\ell]$, estimates the signed contribution within $e_\ell$ with probability at least $1-2\zeta_\ell$, including the finite-synthesis and arithmetic allowances in Appendix~\ref{app:budgets}. The deterministic query bound is $K_{\ell,+}^{\mathrm{FT}}+K_{\ell,-}^{\mathrm{FT}}$. When $B_\ell G_\ell\le e_\ell$, the estimator returns zero without calling either circuit.
\end{proposition}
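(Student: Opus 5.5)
We apply the standard amplitude-estimation guarantee of~\cite{BHMT02} to each sign separately and combine the two signed probabilities through \eqref{eq:readout}. For a fixed sign, the complete band preparation of Section~\ref{sec:statistics} is a unitary $\mathcal A_{\ell,\pm}$ whose flagged success probability is $p=p_{\ell,\pm}\le B_\ell$. It uses $d_0$ queries for input preparation, $\sum_{j\le\ell}c_j$ queries for the controlled classifiers, and $C_\ell^{\mathrm{loc}}$ queries for the controlled local transform. Hence one call to $\mathcal A_{\ell,\pm}$ or its inverse costs at most $C_\ell$ queries.

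\textbf{Step 1: one amplitude-estimation run.} With $M$ Grover iterates, amplitude estimation returns $\tilde p$ satisfying
\[
\abs{\tilde p-p}\le 2\pi\sqrt{p(1-p)}/M+\pi^2/M^2
\]
with probability at least $8/\pi^2$. We substitute $p\le B_\ell$ and $M=M_\ell$. The choice $M_\ell\ge 4\pi\sqrt{B_\ell}/a_\ell^{\mathrm{AE}}$ makes the first term at most $a_\ell^{\mathrm{AE}}/2$. The choice $M_\ell\ge 2\pi/\sqrt{a_\ell^{\mathrm{AE}}}$ makes the second term at most $a_\ell^{\mathrm{AE}}/4$. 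Each run therefore reaches error $\tfrac34 a_\ell^{\mathrm{AE}}$. The leftover $a_\ell^{\mathrm{AE}}/4$ absorbs the finite-synthesis and arithmetic allowances of Appendix~\ref{app:budgets}, which perturb the implemented $p$ by a controlled amount.

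\textbf{Step 2: boosting the confidence.} We take the median of $r_\ell^{\mathrm{FT}}$ independent runs. Each run succeeds with probability at least $8/\pi^2>3/4$, so Hoeffding's inequality gives failure probability at most $\exp(-2r(1/4)^2)$. Since $r_\ell^{\mathrm{FT}}\ge 8\log(4/\zeta_\ell)$, this is at most $\zeta_\ell$.

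We then project onto $[0,B_\ell]$. The true $p$ lies in this interval and the interval is convex, so projection cannot increase the error.

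A union bound over the two signs bounds the total failure probability by $2\zeta_\ell$. On the success event,
\[
\abs{\beta G_\ell(\tilde p_+-\tilde p_-)-\widetilde A_\ell}\le 2\beta G_\ell a_\ell^{\mathrm{AE}}=e_\ell/2\le e_\ell .
\]

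\textbf{Step 3: query count.} One run with $M$ iterates uses at most $2M+1$ applications of $\mathcal A$ or $\mathcal A^\dagger$. The controlled Grover iterates together with the inverse-QFT overhead fit within $8(M_\ell+1)$ calls of cost at most $C_\ell$ each. Multiplying by $r_\ell^{\mathrm{FT}}$ gives $K_{\ell,\pm}^{\mathrm{FT}}$, and these caps are deterministic. The factor $8$ is loose, so this bookkeeping is routine once the controlled Grover iterate is counted as a constant number of calls.

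\textbf{Degenerate case.} If $B_\ell G_\ell\le e_\ell$, then $\abs{\widetilde A_\ell}=\beta G_\ell\abs{p_+-p_-}$. The Hadamard bound \eqref{eq:hadamard} gives $\abs{\mathcal R_\ell}\le w_\ell\le B_\ell$, since $q_\ell$ is a contraction. Hence $\abs{\widetilde A_\ell}=G_\ell\abs{\mathcal R_\ell}\le G_\ell B_\ell\le e_\ell$, and returning zero is accurate.

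The main obstacle is Step~1's allocation of slack. Finite synthesis error perturbs both $p$ and the bound $p\le B_\ell$, so the margin $a_\ell^{\mathrm{AE}}/4$ must provably cover the Appendix~\ref{app:budgets} allowances. The plan is to cite those allowances as an additive error of at most $a_\ell^{\mathrm{AE}}/8$ per probability.
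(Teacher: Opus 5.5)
Your statistical core matches the paper's proof: the choice of $M_\ell$ gives per-run error $\tfrac34 a_\ell^{\mathrm{AE}}$, followed by the median of $r_\ell^{\mathrm{FT}}$ runs, projection onto $[0,B_\ell]$, the $8(M_\ell+1)$-call count per repetition, and coverage of $p=0$ by additive amplitude estimation. Your zero-output argument relies, as the paper does, on the standing bound $w_\ell\le B_\ell$ and not only on $p_{\ell,\pm}\le B_\ell$. The gap is in the synthesis step, which the statement explicitly includes.

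You plan to cite Appendix~\ref{app:budgets} as giving an additive perturbation of at most $a_\ell^{\mathrm{AE}}/8$ in the implemented success probability, and to absorb it in the value slack. The appendix supplies no such bound. Its precision \eqref{eq:synthesis}, $\delta_{\mathrm{syn}}\le\nu/(64n\overline M)$, is calibrated to the failure budget and the slot count, not to $a_\ell^{\mathrm{AE}}$. What it proves, in Lemma~\ref{lem:adaptive-transfer}, is a total-variation bound $2\overline M\delta_{\mathrm{syn}}\le\zeta_\ell/4$ between the complete ideal and implemented output distributions. Even with finer precision, perturbing $p$ alone would not cover the implemented estimator. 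The guarantee \eqref{eq:cal-bhmt} is for exactly implemented phase estimation on the Grover iterate. In the paper's accounting, the Fourier transforms inside amplitude estimation are themselves approximate slots, so some distributional comparison is needed anyway.

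The paper instead runs the entire value analysis on the ideal circuit and charges synthesis to the failure probability. The ideal-circuit median fails with probability at most $\zeta_\ell/4$ per sign, and the distributional transfer adds at most $\zeta_\ell/4$ per sign. The total is at most $\zeta_\ell\le2\zeta_\ell$, with value error $3e_\ell/8$ and the remainder reserved for arithmetic. Your own Hoeffding computation already leaves room for this. Since $r_\ell^{\mathrm{FT}}\ge8\log(4/\zeta_\ell)$, you get $\exp(-r_\ell^{\mathrm{FT}}/8)\le\zeta_\ell/4$, not merely $\zeta_\ell$. By weakening this to $\zeta_\ell$ and spending all of $2\zeta_\ell$ on statistics, you left no failure budget for synthesis, which pushed you toward the unsupported value-error route. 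With the tight Hoeffding bound and Lemma~\ref{lem:adaptive-transfer}, you need no perturbation of $p$ or of the bound $p\le B_\ell$ at all.
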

\begin{proof}
For success probability $p$, amplitude estimation has error at most $2\pi\sqrt p/M+\pi^2/M^2$ with probability at least $8/\pi^2$~\cite[Theorem~12]{BHMT02}. Our choice of $M_\ell$ makes this error at most $3a_\ell^{\mathrm{AE}}/4$. Applying Hoeffding's inequality~\cite{Hoeffding63} to the independent repetitions bounds the median failure probability by $\zeta_\ell/4$. After multiplication by $\beta G_\ell$, the two signed errors sum to at most $3e_\ell/8$, leaving the stated arithmetic allowance. Each repetition uses at most $8(M_\ell+1)$ preparations or inverses, including the initial and controlled calls, and the synthesis transfer adds failure probability at most $\zeta_\ell/4$ per sign. This proof also covers $p=0$, because the additive amplitude-estimation guarantee does not require a positive success probability.
\end{proof}
Before suppressing logarithms, the cost is $O(C_\ell[G_\ell\sqrt{B_\ell}/e_\ell+\sqrt{G_\ell/e_\ell}]\log(1/\zeta_\ell))$. On active layers $B_\ell G_\ell>e_\ell$, the first term dominates the second. Thus
\begin{equation}\label{eq:fixed-cost}
 Q_\ell^{\mathrm{fixed}}=\Ot\!\left(\frac{G_\ell\sqrt{B_\ell}}{e_\ell\delta_\ell}\right).
\end{equation}
Setting $B_\ell=1$ recovers the probability-independent bound. A supplied rank or moment cap can already help fixed-time estimation; variable time provides the additional survival-mass saving below.

\subsubsection{Variable-time estimation with separate mass bounds}
\paragraph{Survival and selection have different masses.}
The mass reaching a test need not be comparable to the mass selected by it. To see the distinction, we consider an ideal sharp classifier and a dyadic scale $0<\delta<1/4$. A spectrum consisting of $1/2$, $\delta$, and $m$ copies of $(1/2-\delta)/m<\delta/2$ has remaining mass $1/2$ when the scale-$\delta$ test is reached, because the eigenvalue $1/2$ has already stopped. Only the eigenvalue $\delta$ stops at this test, so the selected mass is $\delta$. Our implemented classifiers have transition regions and a small error rather than sharp thresholds, but they retain this distinction. Consequently, an improved upper bound $B_\ell$ for the selected mass does not by itself improve the earlier survival bounds $S_j$ in \eqref{eq:T2-proof}.

For a selected sign, we freeze branches stopping before $\ell$ and branches continuing beyond it as failures. Only the first-stop-$\ell$ branch executes the local transform and Hadamard test. We denote the maximum and root-mean-square query times of this real preparation by $T_{\max,\ell}$ and $T_{2,\ell}$. Because the cumulative classifier cost is $\Ot(\delta_j^{-1})$ and the final local cost is $C_\ell^{\mathrm{loc}}$, we can use the known bounds
\begin{equation}\label{eq:times}
 \begin{split}
 T_{\max,\ell}&=\Ot(\delta_\ell^{-1})+C_\ell^{\mathrm{loc}},\\
 T_{2,\ell}^2&\le\Ot\!\left(1+\sum_{j\le\ell}
 S_j\delta_j^{-2}\right)+O(B_\ell (C_\ell^{\mathrm{loc}})^2).
 \end{split}
\end{equation}
We use these computable upper bounds throughout. Appendix~\ref{app:vt} proves them by charging each segment to the probability of reaching it.

The physical success probability $p=p_{\ell,\pm}$ may vanish, whereas relative estimation requires a known positive scale. We supply that scale by preparing a fair control coin: one branch executes the real circuit, and the other has the known success probability $B_\ell$ and stops immediately. The augmented success probability is therefore
\begin{equation}\label{eq:reference}
 \bar p=(p+B_\ell)/2,\qquad B_\ell/2\le\bar p\le B_\ell.
\end{equation}
Its constant setup cost is included in \eqref{eq:times}. The calibrated estimator of Theorem~\ref{thm:calibrated} estimates $\bar p$ to relative error $\gamma$ using
\begin{equation}\label{eq:vt-primitive}
 Q_\ell^{\mathrm{rel}}(\gamma)=
 \Ot\!\left[\frac1\gamma\left(T_{\max,\ell}
 +\frac{T_{2,\ell}+1}{\sqrt{\bar p}}\right)\right]
\end{equation}
queries, including prefix construction and gain recovery. Recovering $p=2\bar p-B_\ell$ gives additive error at most $2\gamma B_\ell$. Taking $\gamma=e_\ell/(8\beta B_\ell G_\ell)$ for each sign therefore yields
\begin{equation}\label{eq:general-vt-cost}
 Q_\ell^{\mathrm{VT}}=\Ot\!\left[\frac{G_\ell}{e_\ell}
 \left(B_\ell T_{\max,\ell}
 +\sqrt{B_\ell}(T_{2,\ell}+1)\right)\right].
\end{equation}
This expression separates the probability of the final signal from the cost of reaching it.

For the rank profile \eqref{eq:B}, dyadic summation gives $T_{2,\ell}=\Ot(1+\sqrt{S_\ell}/\delta_\ell)$ when $C_\ell^{\mathrm{loc}}=\Ot(1/\delta_\ell)$ and $B_\ell\le S_\ell$. Hence
\begin{equation}\label{eq:rank-cost}
 Q_\ell^{\mathrm{VT}}=\Ot\!\left(
 \frac{G_\ell\sqrt{S_\ell B_\ell}}{e_\ell\delta_\ell}\right).
\end{equation}
The basic rank bound sets $B_\ell=S_\ell$ and recovers $\Ot(B_\ell G_\ell/(e_\ell\delta_\ell))$. A smaller terminal cap improves only the factor $\sqrt{B_\ell}$; the surviving mass still determines the preceding work.

To instantiate this cost, let $d_0\ge1$, $c_j$, and $C_\ell^{\mathrm{loc}}$ be integer query bounds for setup, classifier stages, and local readout, respectively, with controlled calls counted and the same counts for their inverses. Controlled input purification permits $d_0=1$; known coins use no input queries. The following weighted cost bound charges the first classifier in full and retains reference-success mass in all later stages:
\begin{equation}\label{eq:cal-band-H}
 \overline{\mathcal H}_\ell
 =d_0+c_0+\sum_{j=1}^\ell c_j\sqrt{(S_j+B_\ell)/2}
   +C_\ell^{\mathrm{loc}}\sqrt{B_\ell}.
\end{equation}
\begin{proposition}[Calibrated estimation of one band]\label{prop:band-interface}
Suppose the complete band circuits, their integer segment costs, and valid bounds $S_j,B_\ell$ are given, with $0<e_\ell<B_\ell G_\ell$ and $0<\zeta_\ell<1/4$. One call per sign to the calibration procedure of Appendix~\ref{app:calibration-call}, with the following parameters, gives a band estimator:
\begin{equation}\label{eq:band-interface-cap}
 \begin{gathered}
 m_\ell=\ell+2,\quad p_{*,\ell}=B_\ell/2,\quad
 \gamma_\ell=\frac{e_\ell}{8\beta B_\ell G_\ell},\\
 Q_{\mathrm{band}}\le K_{\ell,+}+K_{\ell,-},\\
 K_{\ell,\pm}=K_{\mathrm{call}}(m_\ell,p_{*,\ell},
 \overline{\mathcal H}_\ell,\gamma_\ell,\zeta_\ell).
 \end{gathered}
\end{equation}
The $\ell+1$ classifiers and final readout form the stages, and \eqref{eq:cal-band-H} includes the setup cost. All calibration parameters are computed from \eqref{eq:cal-target} and \eqref{eq:cal-counter-data}, without access to unknown branch probabilities. We project $2\widehat{\bar p}_{\ell,\pm}-B_\ell$ onto $[0,B_\ell]$ before forming the signed combination. With probability at least $1-2\zeta_\ell$, the resulting estimate differs from $\widetilde A_\ell$ by at most $e_\ell$, including the arithmetic allowance. The query bound holds for every transcript. When $B_\ell G_\ell\le e_\ell$, including $B_\ell=0$, the estimator returns zero without calibration.
\end{proposition}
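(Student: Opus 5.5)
The proof reduces the band estimator to the calibrated variable-time primitive (Theorem~\ref{thm:calibrated}, in the form of the calibration call of Appendix~\ref{app:calibration-call}), applied to the augmented preparation of Figure~\ref{fig:band-flow}. For a fixed sign, the fair selector $d$ together with the reference coin gives success probability $\bar p=(p_{\ell,\pm}+B_\ell)/2$, as in \eqref{eq:reference}. This lies in $[p_{*,\ell},2p_{*,\ell}]$ with $p_{*,\ell}=B_\ell/2$. Three preliminary checks are needed:
\begin{itemize}
\item The preparation is a valid staged circuit with $m_\ell=\ell+2$ stages: setup, the $\ell+1$ classifiers, and the local Hadamard readout. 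Stopped paths are frozen by the flags \eqref{eq:reversible-flags}, and every inverse reverses the full prefix.
\item The weighted cost $\overline{\mathcal H}_\ell$ of \eqref{eq:cal-band-H} upper bounds the root-mean-square stage cost required by the call.
\item All inputs to $K_{\mathrm{call}}$ are computable from known quantities.
\end{itemize}

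For the cost bound, charge each stage $j\ge1$ to the probability that the augmented path is still running there. A real path reaches stage $j$ with probability at most $S_j$ by Lemma~\ref{lem:mass}. Reference-success paths contribute at most $B_\ell$ until the first stage-end projection, and the failing reference paths are excluded there. Averaging over the fair selector gives the weight $(S_j+B_\ell)/2$. The readout stage is reached only on the $\mathsf f_\ell$ branch, whose mass is $w_\ell\le B_\ell$. The first classifier is charged in full. This yields \eqref{eq:cal-band-H}, and with it the transcript-independent cap $K_{\ell,\pm}=K_{\mathrm{call}}(\cdots)$. The deterministic execution limit of the calibration procedure then gives the query bound for every transcript.

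For correctness, the calibration returns $\widehat{\bar p}$ with $|\widehat{\bar p}-\bar p|\le\gamma_\ell\bar p\le\gamma_\ell B_\ell$, except with probability at most $\zeta_\ell$ after including synthesis transfer. Hence $2\widehat{\bar p}-B_\ell$ estimates $p_{\ell,\pm}$ to within $2\gamma_\ell B_\ell=e_\ell/(4\beta G_\ell)$. Since $p_{\ell,\pm}\in[0,B_\ell]$, projecting onto $[0,B_\ell]$ does not increase the error. By \eqref{eq:readout}, the combination $\beta G_\ell(\hat p_+-\hat p_-)$ then has error at most $e_\ell/2$ relative to $\widetilde A_\ell$. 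This leaves $e_\ell/2$ for the finite-synthesis and arithmetic allowances of Appendix~\ref{app:budgets}. A union bound over the two signs gives success probability at least $1-2\zeta_\ell$. In the degenerate case $B_\ell G_\ell\le e_\ell$, note that $|\widetilde A_\ell|=G_\ell|\mathcal R_\ell|\le G_\ell w_\ell\le G_\ell B_\ell\le e_\ell$, because $q_\ell$ is a contraction. Returning zero is therefore valid at zero cost.

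The main obstacle I expect is the stage-cost accounting. One must verify that the reference branch is correctly counted as stopped for the real classifier stages while its success mass is still retained in the weights, so that $\overline{\mathcal H}_\ell$ is a genuine upper bound on the quantity that $K_{\mathrm{call}}$ requires. One must also verify that the flags, classifier histories, and unpostselected $Z_\ell$ registers keep the success projection well defined under the reflections. My approach would be to write the augmented state stage by stage and read off the reaching probabilities directly from Lemma~\ref{lem:mass}.
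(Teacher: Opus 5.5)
Your proposal is correct and follows essentially the same route as the paper: the reference branch gives $\bar p\in[B_\ell/2,B_\ell]$, so Theorem~\ref{thm:calibrated} applies even when $p_{\ell,\pm}=0$, and the entering masses $1$, $(S_j+B_\ell)/2$, and at most $B_\ell$ yield \eqref{eq:cal-band-H}; the per-sign recovery error $2\gamma_\ell B_\ell$ becomes $e_\ell/2$ in total after scaling by $\beta G_\ell$, and the skipped case uses $|\widetilde A_\ell|\le B_\ell G_\ell\le e_\ell$. Two small slips do not affect the argument: the setup of cost $d_0$ is the initial preparation rather than one of the $m_\ell=\ell+2$ stages, and the call needs $\overline{\mathcal H}\ge\mathcal H_m=d_0+\sum_i d_i\sqrt{a_{i-1}}$, a square-root-weighted prefix cost in which the whole reference branch (not just its success part) is present until the first stage-end projection, rather than a root-mean-square time.
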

Appendix~\ref{app:vt} proves this guarantee, and \eqref{eq:cal-counter-data} gives the explicit integer expression for $K_{\mathrm{call}}$. We invoke the estimator only on active bands; the local approximation and leakage of skipped bands remain in the global error budget.

\subsection{Complete Algorithm and Complexity}\label{sec:master}
The inputs are controlled purified access, an effective admissible profile $g$, error and confidence targets, and bounds on the tail, leakage, and relevant probability masses. A known spectral upper bound can also remove bands with zero sharp-cutoff contribution. Their actual classifier leakage is included in the deterministic error budget.
We begin by choosing the cutoff and classifier accuracy from the tail and leakage bounds. The selected fixed-time or variable-time estimator determines the coefficients $a_\ell$ in \eqref{eq:coefficients}; these coefficients in turn determine the statistical tolerances in \eqref{eq:allocation}. We choose the local approximation accuracies so that their errors fit the remaining budget in \eqref{eq:budget}. All of these choices depend on supplied bounds and the function description, not on an observed spectrum.

For a retained level with $B_\ell G_\ell\le e_\ell$, the zero estimate already meets its statistical tolerance, so we do not prepare its circuit. At every other level, we construct the coherent prefix and bounded local transform and estimate the two joint probabilities using Proposition~\ref{prop:fixed-band} or~\ref{prop:band-interface}. The classical estimate of that level is $\widehat A_\ell=\beta G_\ell(\widehat p_{\ell,+}-\widehat p_{\ell,-})$. We return their sum, with rounding charged to the reserved arithmetic tolerance. Each statistical call uses a fresh preparation; the combination across levels is entirely classical.
Each complete functional call uses one of the two statistical estimators. Figure~\ref{fig:band-flow} gives the variable-time circuit; fixed-time estimation uses the same real branch without the known-success branch or gain calibration.

Theorem~\ref{thm:master} combines the localization and mass bounds of Lemma~\ref{lem:mass} with the approximation guarantee of Lemma~\ref{lem:compiler}. The band estimators then provide the required statistical accuracy and a finite query bound. Appendix~\ref{app:budgets} combines these errors and transfers the analysis to finite-precision circuits. In particular, the tail and leakage quantities below are bounded before execution; their spectral definitions are used to prove those bounds, not as additional oracle inputs.

\begin{theorem}[Multi-level functional estimation]\label{thm:master}
Suppose controlled purified access and an effectively specified admissible profile are given. We retain a set $\mathcal J$ of levels and use $\mathcal T_g$ as an upper bound for the sum of the absolute terminal contribution and the absolute contributions of all omitted levels. The approximation, leakage, and statistical tolerances are required to satisfy
\begin{equation}\label{eq:budget}
 \mathcal T_g+\sum_{\ell\in\mathcal J}
 (\mathcal L_{g,\ell}+G_\ell\eta_\ell B_\ell+e_\ell)
 \le3\varepsilon_F/4.
\end{equation}
For either statistical estimator, the input includes valid mass bounds $w_\ell\le B_\ell\le1$; the choice $B_\ell=1$ is always available. The local-transform cost is $C_\ell^{\mathrm{loc}}=\Ot(\delta_\ell^{-1})$. The variable-time estimator additionally requires the stopping-time bounds in \eqref{eq:times} and uses positive mass bounds on the levels that it actually prepares. We define the cost coefficients by
\begin{equation}\label{eq:coefficients}
 a_\ell=\begin{cases}
 G_\ell\sqrt{B_\ell}/\delta_\ell,&\text{fixed-time},\\
 G_\ell[B_\ell T_{\max,\ell}
 +\sqrt{B_\ell}(T_{2,\ell}+1)],&\text{variable-time}.
 \end{cases}
\end{equation}
Here the times may be replaced by known upper bounds. With the rank profile, $\xi^2\le R\delta_L$, and $B_\ell\le S_\ell$, one may instead use $a_\ell=G_\ell\sqrt{S_\ell B_\ell}/\delta_\ell$, up to logarithmic factors.

Suppose a total statistical tolerance $E=\Theta(\varepsilon_F)$ is feasible in \eqref{eq:budget}, and compilation and bound evaluation have polylogarithmic precision overhead in the displayed parameters. We allocate this tolerance over $\mathcal J_+=\{\ell\in\mathcal J:a_\ell>0\}$. A level with zero coefficient has zero selected mass and is assigned $e_\ell=\widehat A_\ell=0$ without preparation. If $\mathcal J_+$ is empty, the estimator returns zero and only the tail and leakage terms in \eqref{eq:budget} remain. Otherwise, our allocation is
\begin{equation}\label{eq:allocation}
 e_\ell=E\frac{\sqrt{a_\ell}}{\sum_{j\in\mathcal J_+}\sqrt{a_j}}.
\end{equation}
The bounded implementation of Appendix~\ref{app:budgets} satisfies \eqref{eq:target} with
\begin{equation}\label{eq:master-cost}
 Q=\Ot\!\left(1+
 \frac{(\sum_{\ell\in\mathcal J}\sqrt{a_\ell})^2}{\varepsilon_F}\right).
\end{equation}
For an empty set or when all retained layers are skipped, the output is zero before rounding and no oracle calls are needed.
\end{theorem}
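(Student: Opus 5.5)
The proof has two halves: a deterministic error decomposition, and a statistical and cost analysis in which the Cauchy--Schwarz allocation turns the per-level bounds into \eqref{eq:master-cost}.

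\emph{Deterministic error.} I start from the exact completeness identity $F_g=\sum_\ell A_\ell+A_\perp$, which follows from \eqref{eq:complete}--\eqref{eq:layer-exact}. I split the levels into three groups: the omitted levels together with the terminal branch, the retained levels skipped because $B_\ell G_\ell\le e_\ell$ or $a_\ell=0$, and the active levels. The first group is charged to $\mathcal T_g$. For every retained level, \eqref{eq:local-error} (derived from Lemma~\ref{lem:compiler} and the leakage definition \eqref{eq:leakage}) gives $\abs{A_\ell-\widetilde A_\ell}\le G_\ell\eta_\ell B_\ell+\mathcal L_{g,\ell}$.

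For a skipped level, note that $\abs{\widetilde A_\ell}=G_\ell\abs{\mathcal R_\ell}\le G_\ell w_\ell\le G_\ell B_\ell\le e_\ell$, because $q_\ell$ is a contraction. So outputting zero costs at most $e_\ell$ beyond the deterministic term. For a level with $a_\ell=0$, we have $B_\ell=0$ and hence $w_\ell=0$, so $\widetilde A_\ell=0$ exactly. Summing these bounds with \eqref{eq:budget} leaves a slack of $\varepsilon_F/4$. That slack absorbs the arithmetic rounding and the finite-synthesis discrepancy between ideal and implemented circuits, and I would take both from Appendix~\ref{app:budgets} as given.

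\emph{Statistics and cost.} On each active level I invoke Proposition~\ref{prop:fixed-band} or Proposition~\ref{prop:band-interface} with tolerance $e_\ell$ from \eqref{eq:allocation} and confidence $\zeta_\ell=\nu/(4|\mathcal J|)$. A union bound then gives overall failure probability at most $\nu/2$, and the logarithms this introduces are absorbed by the tilde. Both propositions yield per-level cost $Q_\ell=\Ot(a_\ell/e_\ell)$:
\begin{itemize}
\item in the fixed-time case this is \eqref{eq:fixed-cost}, using that $C_\ell=\Ot(1/\delta_\ell)$ from the geometric classifier costs plus Lemma~\ref{lem:compiler};
\item in the variable-time case it is \eqref{eq:general-vt-cost};
\item in the rank-profile case it is \eqref{eq:rank-cost}, whose hypotheses are supplied by Lemma~\ref{lem:mass} under $\xi^2\le R\delta_L$.
\end{itemize}
Substituting the allocation \eqref{eq:allocation} gives
\[
\sum_{\ell}\frac{a_\ell}{e_\ell}=\frac{1}{E}\Bigl(\sum_{j\in\mathcal J_+}\sqrt{a_j}\Bigr)\sum_{\ell\in\mathcal J_+}\sqrt{a_\ell}=\frac{(\sum\sqrt{a_\ell})^2}{E}.
\]
Since $E=\Theta(\varepsilon_F)$, this is \eqref{eq:master-cost}. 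The additive $1$ and the empty or all-skipped cases are immediate, since no oracle calls are made there.

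\emph{Main obstacle.} The delicate step is making sure the $\Ot$ factors really are uniform over levels when they are summed. Two points need care. First, the lower-order term $\sqrt{G_\ell/e_\ell}$ in the fixed-time bound, and the $B_\ell T_{\max}$ term in the variable-time bound, must be dominated on active levels; this uses $e_\ell<B_\ell G_\ell$. Second, the precisions $\eta_\ell$, $\xi$ and the synthesis errors must stay polynomially related to $\varepsilon_F$, $\tau$ and $\nu$, so that every hidden logarithm is polylogarithmic in the displayed parameters.

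To handle this, I would first bound the number of levels, $L=O(\log(1/\tau))$. I would then show that each degree and each repetition count contributes only a polylogarithmic factor in $1/\eta_\ell$, $1/\xi$, $1/\zeta_\ell$ and $1/\tau$. Finally, I would factor this common polylogarithm out of the sum before applying Cauchy--Schwarz.
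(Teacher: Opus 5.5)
Your proposal is correct and follows essentially the same route as the paper's proof. The shared steps are: the exact first-stop decomposition together with \eqref{eq:local-error} for the bias; skipped levels bounded by $|\widetilde A_\ell|\le B_\ell G_\ell\le e_\ell$, and zero-coefficient levels forcing $B_\ell=0$; per-level costs $\Ot(a_\ell/e_\ell)$ from \eqref{eq:fixed-cost} or \eqref{eq:general-vt-cost}; and substitution of the allocation \eqref{eq:allocation}, which attains equality in Cauchy--Schwarz.

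One small correction concerns finite synthesis. You charge it to the $\varepsilon_F/4$ value slack, but the paper charges it to the failure-probability budget through the distributional transfer of Lemma~\ref{lem:adaptive-transfer}, since synthesis can change adaptive choices and thresholds rather than perturb a fixed output value. The value slack is reserved for classical rounding only.
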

\begin{proof}
We separate the error argument from the cost calculation. On simultaneous statistical success, the exact decomposition of $F_g$ and \eqref{eq:local-error} bound the total ideal error by the left-hand side of \eqref{eq:budget}. This argument also covers omitted preparations: a zero coefficient gives $B_\ell=0$ and hence $A_\ell=\widetilde A_\ell=0$, whereas a skipped nonzero level satisfies $|\widetilde A_\ell|\le B_\ell G_\ell\le e_\ell$.

For an active level, \eqref{eq:fixed-cost} or \eqref{eq:general-vt-cost} gives the query upper bound $\Ot(a_\ell/e_\ell)$. For nonempty $\mathcal J_+$, under the constraint $\sum_{\ell\in\mathcal J_+} e_\ell=E$, Cauchy--Schwarz yields $(\sum_{\ell\in\mathcal J_+}\sqrt{a_\ell})^2\le(\sum_{\ell\in\mathcal J_+} a_\ell/e_\ell)E$. Equality holds for \eqref{eq:allocation}, and substituting that allocation into the sum of the level bounds proves \eqref{eq:master-cost}. Finally, Appendix~\ref{app:budgets} allocates the failure probabilities and bounds every execution. Classical rounding uses the remaining value-error allowance, while finite synthesis is charged to the failure budget by comparing the complete output distributions.
\end{proof}

The significance of \eqref{eq:master-cost} is the order in which approximation and estimation are optimized. A global construction first chooses an operator approximation accurate enough everywhere and then repeats it to obtain a scalar estimate. Here $G_\ell$, $B_\ell$, and the prefix probabilities enter the cost before we select the error allocation. Thus the finest spectral scale need not be processed at the precision demanded by the entire functional. The local magnitudes and mass bounds determine when this sum improves the global cost. Rapidly oscillating families can retain a genuine degree--precision product, as Proposition~\ref{prop:product-cost} illustrates. The theorem provides an implementable criterion for calculating these local costs.

\section{Applications to Quantum Entropy Estimation}\label{sec:applications}
We now choose envelopes, cutoffs, and statistical tolerances for each entropy. All statements use $0<\varepsilon<1/10$, $0<\nu<1/3$, and fixed order; confidence costs are logarithmic. For each application we first choose the functional tolerance, then control the deterministic bias, and finally sum the layer costs. The number of levels is logarithmic in the problem parameters, so uniform layer tolerances suffice for the stated soft bounds; the square-root allocation can improve the unsuppressed total. The application bounds are proved in Appendix~\ref{app:entropy}.

Table~\ref{tab:entropy-parameters} summarizes the parameters used in the three applications. We write $u_T=\min\{1/20,|\alpha-1|\varepsilon/4\}$, $u_<= (1-\alpha)\varepsilon/4$, and $h_f=\min\{1,\alpha-1\}\varepsilon/256$ for the required moment tolerances. The rank bounds are $B_\ell^R=S_\ell^R=\min\{1,8R\delta_\ell\}$, while $B_\ell^s$ denotes the moment-dependent bound in \eqref{eq:moment-caps}. Our cutoff functions are $t_0(u)=\min\{1/8,(u/128)^{1/(\alpha-1)}\}$ and $t_R(u)=\min\{1/8,(u/(128R))^{1/\alpha}\}$. Each row uses the envelopes in \eqref{eq:envelopes} and the error allocation in \eqref{eq:application-recipe}; none of these choices requires the unknown spectrum. We assign separate failure budgets to the moment-scale pilot, its final readout, and the individual band calls as detailed below and in Appendix~\ref{app:budgets}.

\begin{table}[tbp]\centering\footnotesize
\caption{Entropy instance parameters; projection precedes entropy conversion.}\label{tab:entropy-parameters}
\setlength{\tabcolsep}{3pt}\renewcommand{\arraystretch}{1.35}
\begin{tabular}{@{}L{80pt}C{55pt}L{103pt}C{60pt}L{146pt}@{}}\hline
Instance / route & Profile & Tolerance / cutoff & Mass bounds & Output / conditions\\\hline
Tsallis, rank-free & $x^{\alpha-1}$ & $u_T$, $t_0(u_T)$ & $S_j=B_\ell=1$ & $(1-\widehat M)/(\alpha-1)$; $\alpha>1$\\[3pt]
Tsallis, rank-based & $x^{\alpha-1}$ & $u_T$, $t_R(u_T)$ & $S_j^R,B_\ell^R$ & $(1-\widehat M)/(\alpha-1)$; $\alpha>0$, $\alpha\ne1$\\[3pt]
von Neumann & $\log(1/x)$ & $u=\varepsilon$, \eqref{eq:log-cutoff} & $S_j^R,B_\ell^R$ & $\widehat F\in[0,\log R]$\\[3pt]
R\'enyi, $0<\alpha<1$ & $x^{\alpha-1}$ & $u_<$, $t_R(u_<)$ & $S_j^R,B_\ell^R$ & $\log\widehat M/(1-\alpha)$;\newline $\widehat M\in[1,R^{1-\alpha}]$\\[3pt]
R\'enyi, pilot / final & $x^{\alpha-1}$ & $u=hs$, $t_R(hs)$ & $S_j^R,B_\ell^s$ & $h=1/16$ / $h_f$;\newline $\alpha>1$, $M_\alpha\le s\le1$\\[3pt]
$\lambda_{\max}\le C/R$ & $x^{\alpha-1}$ & $u=h_fR^{1-\alpha}$, $t_R(u)$ & $S_j^R,B_\ell^R$ & $\delta_\ell\le2C/R$; $\alpha>1$, $C>1$\\[3pt]
\hline\end{tabular}
\par\smallskip\begin{minipage}{.97\textwidth}\footnotesize The defaults use fixed time without rank and Proposition~\ref{prop:band-interface} otherwise. All rows also admit Proposition~\ref{prop:fixed-band}; its costs are in Proposition~\ref{prop:fixed-entropy}. For the pilot/final row, we retain $\delta_\ell\le2s^{1/\alpha}$ and charge every earlier classifier. In the last row, we project to $[R^{1-\alpha},1]$ before taking the logarithm; $C=1$ and $R=1$ are the zero-query cases. Constants may depend on fixed $\alpha,C$; the bounds are not uniform as $\alpha$ approaches one.\end{minipage}\end{table}

\subsection{Tsallis Entropy Estimation}
\begin{theorem}[Tsallis entropy]\label{thm:tsallis}
For a state of rank at most $R$, controlled purified queries suffice to estimate $T_\alpha$ to additive error $\varepsilon$ with failure at most $\nu$, at cost
\begin{equation}\label{eq:tsallis-bound}
 Q_{T_\alpha}=
 \begin{cases}
 \Ot((R/\varepsilon)^{1/\alpha}),&0<\alpha<1,\\
 \Ot(\min\{R^{2-\alpha}/\varepsilon,
 \varepsilon^{-1/(\alpha-1)}\}),&1<\alpha<2,\\
 \Ot(1/\varepsilon),&\alpha\ge2.
 \end{cases}
\end{equation}
The rank-free alternative for $1<\alpha<2$ and the bound for $\alpha\ge2$ require no rank input, including in their hidden logarithms.
\end{theorem}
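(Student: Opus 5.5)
We reduce Tsallis estimation to estimating $M_\alpha=F_{g_\alpha}$ with $g_\alpha(x)=x^{\alpha-1}$ to additive error $u_T=\min\{1/20,|\alpha-1|\varepsilon/4\}$. The output $(1-\widehat M)/(\alpha-1)$ is then $\varepsilon/4$-accurate, and projecting $\widehat M$ onto its admissible range \eqref{eq:moment-ranges} does not increase the error. Every case is an application of Theorem~\ref{thm:master} with the power envelope $G_\ell=C_\alpha\delta_\ell^{\alpha-1}$ from \eqref{eq:envelopes}. Admissibility holds because the rescaled profile $(4\delta_\ell y)^{\alpha-1}/G_\ell$ is independent of $\ell$ and analytic near $J$ (principal branch when $\alpha<1$). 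Lemma~\ref{lem:compiler} therefore gives $C_\ell^{\mathrm{loc}}=\Ot(1/\delta_\ell)$. Since $L=O(\log(1/\tau))$, it suffices to check three things: the deterministic budget \eqref{eq:budget} is met with $E=\Theta(u_T)$, uniform tolerances $e_\ell=E/(L+1)$ are used, and $\sum_\ell a_\ell/u_T$ is bounded as claimed.

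\emph{Deterministic budget.} With cutoff $\tau$, the terminal contribution $A_\perp$ splits into two parts. Eigenvalues below $\delta_L\le\tau$ contribute at most $\sum\lambda_i\lambda_i^{\alpha-1}$. For $\alpha>1$ this is at most $\tau^{\alpha-1}$ using unit total mass, which gives the rank-free cutoff $t_0$. With rank at most $R$ it is at most $R\tau^\alpha$, which gives $t_R$ and is the only usable bound when $\alpha<1$. Eigenvalues above $\delta_L$ contribute at most $\xi^2$ by Lemma~\ref{lem:mass}, since $x^{\alpha-1}\le\tau^{\alpha-1}$ or $\le1$ there. For the leakage \eqref{eq:leakage}, Lemma~\ref{lem:mass} gives $m_\ell\le\xi^2$ off $I_\ell$. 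For $\alpha\ge1$ we have $|g|\le1$, so $\mathcal L_{g,\ell}\le 2\xi^2$. For $\alpha<1$ the singularity at zero is weighted by $\lambda_i$, so $\lambda_i|g(\lambda_i)|=\lambda_i^\alpha\le1$ and $\mathcal L_{g,\ell}\le\xi^2(1+G_\ell)$. Choosing $\xi^2$ polynomially small, at logarithmic classifier cost, makes all leakage a fraction of $u_T$. Finally, $\eta_\ell=u_T/(8(L+1)G_\ell B_\ell)$ enters the compiler only polylogarithmically. In the rank-free routes neither $\xi$, $\tau$, nor $L$ depends on $R$, which justifies the final sentence of the theorem.

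\emph{Cost sums.} \textbf{Rank-free, $\alpha>1$:} use fixed time with $B_\ell=1$, so $a_\ell=G_\ell/\delta_\ell\asymp\delta_\ell^{\alpha-2}$. For $1<\alpha<2$ the sum is dominated by $\delta_L\asymp\tau\asymp u_T^{1/(\alpha-1)}$, so $Q=\Ot(u_T^{-1}\tau^{\alpha-2})=\Ot(\varepsilon^{-1/(\alpha-1)})$. For $\alpha\ge2$ every $a_\ell\le O(1)$ and there are $O(\log(1/\varepsilon))$ levels, so $Q=\Ot(1/\varepsilon)$. \textbf{Rank-based:} use Proposition~\ref{prop:band-interface} with $S_\ell=B_\ell=\min\{1,8R\delta_\ell\}$ and the simplified coefficient $a_\ell=G_\ell B_\ell/\delta_\ell$. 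Coarse levels ($\delta_\ell\ge1/(8R)$) give $\delta_\ell^{\alpha-2}\le(8R)^{2-\alpha}$ when $\alpha<2$. Fine levels give $R\delta_\ell^{\alpha-1}$, which is at most $R^{2-\alpha}$ when $\alpha>1$. Hence $Q=\Ot(R^{2-\alpha}/\varepsilon)$. For $0<\alpha<1$ the fine-level terms $R\delta_\ell^{\alpha-1}$ grow toward $\delta_L\asymp\tau=(u_T/R)^{1/\alpha}$. This gives $R\tau^{\alpha-1}/u_T\asymp (R/\varepsilon)^{1/\alpha}$, and the coarse levels are dominated by it.

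\emph{Main obstacle.} The delicate step is the $0<\alpha<1$ case. There the envelope blows up at fine scales, and we must verify three things: the leakage weighted by a singular profile stays controlled, the condition $\xi^2\le R\delta_L$ is compatible with the needed $\xi$, and the moment range projection $[1,R^{1-\alpha}]$ is harmless. I would handle the leakage first via the bound $\lambda^\alpha\le1$ and by absorbing $\xi^2G_\ell\le\xi^2G_L$ with $\xi^2\le u_T/(8(L+1)G_L)$. The remaining cases are direct substitutions into \eqref{eq:master-cost}.
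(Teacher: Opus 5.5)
Your proposal follows the paper's proof essentially step for step. It uses the reduction to $M_\alpha$ at tolerance $u_T$, the two tail bounds $\min\{t^{\alpha-1},Rt^\alpha\}$ giving the cutoffs $t_0$ and $t_R$, the $\ell$-independent rescaled power profile, uniform layer tolerances, and the dyadic sums of $\delta_\ell^{\alpha-2}$ (rank-free, fixed time) and $\min\{1,R\delta_\ell\}\delta_\ell^{\alpha-2}$ (rank-based, variable time). One slip for $0<\alpha<1$: because $\lambda_i^\alpha\le1$ holds only termwise, the off-band leakage and the above-$\delta_L$ terminal term are bounded by $\xi^2\sum_i\lambda_i^\alpha=\xi^2M_\alpha\le\xi^2R^{1-\alpha}$ (the paper's $H_g=R^{1-\alpha}$), not by $\xi^2$; your choice $\xi^2\le u_T/(8(L+1)G_L)$ still absorbs this up to constants, since $G_L\ge C_\alpha t_R^{\alpha-1}\ge C_\alpha R^{1-\alpha}$.
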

\begin{proof}
We estimate $M_\alpha$ with the profile $g(x)=x^{\alpha-1}$ and envelope $G_\ell=C_\alpha\delta_\ell^{\alpha-1}$. The moment tolerance $u=\min\{1/20,|\alpha-1|\varepsilon/4\}$ leaves error at most $\varepsilon/4$ after the affine Tsallis conversion. For $\alpha>1$, we may choose between two bounds on the omitted tail:
\begin{equation}\label{eq:power-tails}
 \sum_{0<\lambda_i<t}\lambda_i^\alpha
 \le\min\{t^{\alpha-1},Rt^\alpha\}.
\end{equation}
The first tail bound leads to the rank-independent cutoff $t_0=\min\{1/8,(u/128)^{1/(\alpha-1)}\}$, while the second gives the rank-dependent cutoff $t_R=\min\{1/8,(u/(128R))^{1/\alpha}\}$. In each case, we use the first dyadic scale at or below the selected cutoff. With $S_\ell=B_\ell=\min\{1,8R\delta_\ell\}$, the resulting cost coefficients satisfy
\begin{equation}\label{eq:power-coeff}
 a_\ell=\Theta\!\left(
 \min\{\delta_\ell^{\alpha-2},R\delta_\ell^{\alpha-1}\}\right).
\end{equation}
Theorem~\ref{thm:master} bounds the cost of layer $\ell$ by $\widetilde O(a_\ell/e_\ell)$. We choose $n=L+1$ and $e_\ell=u/(8n)$, so the total cost is $\widetilde O(n\sum_\ell a_\ell/u)$. The remaining task is therefore to identify the spectral scales that dominate this sum.

For $1<\alpha<2$, their sum is $O(R^{2-\alpha})$: terms grow down to scale $1/R$ and then decrease. With rank-free fixed-time estimation the coefficients are $\delta_\ell^{\alpha-2}$, whose sum is $O(t_0^{\alpha-2})$ below two, logarithmic at two, and bounded above two. Uniform layer error $u/[8(L+1)]$ proves the displayed power laws; the allocation in \eqref{eq:allocation} can also be used.

For $0<\alpha<1$, the rank tail $Rt^\alpha$ remains valid. The finest term dominates the rank-sensitive sum, giving cost $\Ot(Rt_R^{\alpha-1}/u)=\Ot((R/u)^{1/\alpha})$. Appendix~\ref{app:power} supplies the bias parameters and all boundary cases. Projection onto the physical moment interval and the affine entropy conversion complete the algorithm.
\end{proof}
The cutoffs balance the discarded moment against its allowed error; the remaining dyadic sum then determines the query exponent. The transition at $\alpha=2$ is a convergence threshold of the scale sum. For example, at $\alpha=5/2$ the rank-free coefficients are $\sqrt{\delta_\ell}$, so the argument applies directly to a fractional power.

\subsection{Von Neumann Entropy Estimation}
\begin{theorem}[Von Neumann entropy]\label{thm:vn}
For rank at most $R$, von Neumann entropy admits additive-$\varepsilon$ estimation with failure at most $\nu$ using $\Ot(R/\varepsilon)$ controlled purified queries.
\end{theorem}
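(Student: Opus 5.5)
We apply Theorem~\ref{thm:master} directly to the logarithmic profile $g_{\log}(x)=\log(1/x)$, with envelopes $G_\ell=C_H[1+\log(1/\delta_\ell)]$ from \eqref{eq:envelopes}. We use the variable-time estimator of Proposition~\ref{prop:band-interface} together with the rank mass bounds $S_\ell=B_\ell=\min\{1,8R\delta_\ell\}$ from Lemma~\ref{lem:mass}. We take functional tolerance $\varepsilon_F=\varepsilon$, since $F_{g_{\log}}=S(\rho)$ needs no conversion, and at the end project the output onto $[0,\log R]$. By \eqref{eq:rank-cost}, the coefficient is $a_\ell=\Ot(G_\ell\sqrt{S_\ell B_\ell}/\delta_\ell)=\Ot(G_\ell\min\{1,8R\delta_\ell\}/\delta_\ell)$.

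\emph{Coarse and fine levels.} For the coarse levels with $8R\delta_\ell\ge1$, the coefficient is $a_\ell=\Ot(1/\delta_\ell)\le\Ot(R)$. For the fine levels, $a_\ell=\Ot(R\,G_\ell)=\Ot(R\log(1/\delta_\ell))$. There are $L+1=O(\log(1/\tau))$ levels in total, so $(\sum_\ell\sqrt{a_\ell})^2\le(L+1)\sum_\ell a_\ell=\Ot(R)$ up to polylogarithms in $1/\tau$. Formula \eqref{eq:master-cost} then gives $Q=\Ot(R/\varepsilon)$, provided $\tau$ is only polynomially small in $\varepsilon/R$.

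\emph{Deterministic budget \eqref{eq:budget}.} This is the delicate step, because $\log(1/x)$ is unbounded near zero. I will handle three error sources.
\begin{enumerate}
\item \emph{Tail.} The terminal and omitted contribution is bounded by $\sum_{\lambda_i<t}\lambda_i\log(1/\lambda_i)\le Rt\log(1/t)$ for $t\le 1/e$, plus the leakage $m_\perp\le\xi^2$ of eigenvalues above $\delta_L$ weighted by $\log(1/\lambda_i)$. Taking $\tau\approx \varepsilon/(C R\log(R/\varepsilon))$ makes $R\tau\log(1/\tau)\le\varepsilon/16$.
\item \emph{Leakage.} For $\mathcal L_{g,\ell}$ in \eqref{eq:leakage}, we use $m_\ell(x)\le\xi^2$ off $I_\ell$. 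The danger is small eigenvalues, where $|g|$ is large. Since $x\log(1/x)\le 1/e$ and $\sum_i\lambda_i=1$, each term is bounded by $\xi^2(\lambda_i\log(1/\lambda_i)+\lambda_iG_\ell)$. This gives $\mathcal L_{g,\ell}\le\xi^2(R/e+G_\ell)$, using rank at most $R$ to bound $\sum_i\lambda_i\log(1/\lambda_i)\le\log R$. Summing over levels, it suffices to take $\xi^2\le\min\{R\delta_L,\varepsilon/(16(L+1)(\log R+G_L))\}$. The classifier cost depends only logarithmically on $1/\xi$, so this is affordable.
\item \emph{Approximation.} The term $\sum_\ell G_\ell\eta_\ell B_\ell$ is handled by choosing $\eta_\ell=\varepsilon/(16(L+1)G_\ell)$, at polylogarithmic cost by Lemma~\ref{lem:compiler}.
\end{enumerate}
The remaining statistical tolerance is then $E=\varepsilon/2$.

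The main obstacle I expect is the leakage of the singular logarithm. One must check that the weighted off-band sum stays controlled even though tiny eigenvalues have huge $|g|$. The bound $\lambda\log(1/\lambda)\le\log R$ summed over the support is what saves this, together with the requirement that the condition $\xi^2\le R\delta_L$ of Lemma~\ref{lem:mass} remain compatible with it. A secondary point is confirming that $T_{2,\ell}$ from \eqref{eq:times} yields \eqref{eq:rank-cost} with a logarithmic $G_\ell$. Here I would invoke the dyadic summation stated before \eqref{eq:rank-cost}. Finally, the failure probability is split uniformly over the $O(\log(R/\varepsilon))$ active bands, at logarithmic overhead.
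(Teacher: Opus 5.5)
Your proposal is correct and follows the paper's proof essentially step for step. It uses the logarithmic envelope, the rank caps $S_\ell=B_\ell=\min\{1,8R\delta_\ell\}$ with variable-time readout, a cutoff of order $\varepsilon/(R\log(R/\varepsilon))$, the entropy-weighted leakage bound $\xi^2(S(\rho)+G_\ell)\le\xi^2(\log R+G_\ell)$, and the fine-scale cancellation $B_\ell G_\ell/\delta_\ell=O(RG_\ell)$ over $O(\log(R/\varepsilon))$ levels. The one slip is writing $\xi^2(R/e+G_\ell)$ in your leakage item: your choice of $\xi$, like the paper's, needs the $\log R$ bound obtained from $S(\rho)\le\log R$, which you also state, rather than the weaker $R/e$.
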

\begin{proof}
For von Neumann entropy, we use $g(x)=\log(1/x)$ and $G_\ell=C_H(1+\log(1/\delta_\ell))$. The cutoff is
\begin{equation}\label{eq:log-cutoff}
 t=\frac{\varepsilon}{256R\log(256R/\varepsilon)}.
\end{equation}
The low-spectrum contribution is at most $Rt\log(1/t)\le\varepsilon/128$. With rank caps, the coefficient of $1/e_\ell$ is $O(R(1+\ell))$ at fine scales and no larger at coarser scales. There are $O(\log(R/\varepsilon))$ layers, giving the stated cost. Appendix~\ref{app:log} bounds the off-band error by entropy-weighted leakage and verifies the complete error budget.
\end{proof}
The cutoff controls the entropy tail, while the same rank promise controls the cost of resolving the retained spectrum. Below $1/R$, halving the spectral scale doubles its processing cost and halves its rank mass bound. Their product stays of order $R$. This cancellation holds for the entire prefix by the survival analysis in Appendix~\ref{app:vt}.

The von Neumann estimator takes controlled purified access and $R,\varepsilon,\nu$. These inputs and \eqref{eq:application-recipe} determine the cutoff, number of levels, local envelopes, mass bounds, and approximation tolerances. We omit inactive bands, apply Proposition~\ref{prop:band-interface} to every active band, and project the final estimate onto $[0,\log R]$. Appendix~\ref{app:explicit-resources} specifies finite execution limits and synthesis precision, including an explicit logarithmic bound. With the same mass bounds, fixed-time readout gives $\Ot(R/\varepsilon^{3/2})$; the two guarantees in Table~\ref{tab:nearest} therefore differ by a factor $\varepsilon^{-1/2}$.

\subsection{R\'enyi Entropy Estimation}
\begin{theorem}[R\'enyi entropy]\label{thm:renyi}
For rank at most $R$, additive-$\varepsilon$ estimation of $S_\alpha^{\mathrm{R}}$ with failure at most $\nu$ has query cost
\begin{equation}\label{eq:renyi-bound}
 Q_{S_\alpha^{\mathrm{R}}}=
 \begin{cases}
 \Ot((R/\varepsilon)^{1/\alpha}),&0<\alpha<1,\\
 \Ot(R/\varepsilon),&1<\alpha\le3,\\
 \Ot\!\left(\min\left\{\frac{R^{3/2-3/(2\alpha)}}{\varepsilon},\frac{R}{\varepsilon^{1+1/\alpha}}\right\}\right),&\alpha>3.
 \end{cases}
\end{equation}
\end{theorem}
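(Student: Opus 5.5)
The plan is to reduce every case to estimating $M_\alpha$ with the profile $g(x)=x^{\alpha-1}$ and envelope $G_\ell=C_\alpha\delta_\ell^{\alpha-1}$ from \eqref{eq:envelopes}. We feed Theorem~\ref{thm:master} the parameters of Table~\ref{tab:entropy-parameters} and then convert through $\log\widehat M/(1-\alpha)$. Additive entropy error $\varepsilon$ requires relative moment error of order $|1-\alpha|\varepsilon$, because $|\log\widehat M-\log M_\alpha|\le|\widehat M-M_\alpha|/\min\{\widehat M,M_\alpha\}$. The whole proof is therefore about estimating $M_\alpha$ to additive error $u\approx|1-\alpha|\varepsilon M_\alpha$.

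\emph{Case $0<\alpha<1$.} Here $M_\alpha\ge1$ by \eqref{eq:moment-ranges}, so additive error $u_<=(1-\alpha)\varepsilon/4$ suffices. We then project onto $[1,R^{1-\alpha}]$, which only decreases the error, and take the logarithm. The estimation of $M_\alpha$ to this accuracy is exactly the rank-based Tsallis computation for $0<\alpha<1$ in Theorem~\ref{thm:tsallis}: cutoff $t_R(u_<)$, caps $S_\ell=B_\ell=\min\{1,8R\delta_\ell\}$, and the finest level dominating. This gives $\Ot((R/\varepsilon)^{1/\alpha})$ directly.

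\emph{Case $\alpha>1$.} Now $M_\alpha$ can be as small as $R^{1-\alpha}$, so an additive tolerance scaled to the worst case would be too costly. I would run a two-phase procedure.

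In the pilot phase I find a scale $s$ with $M_\alpha\le s\le 2M_\alpha$ up to constants. To do this I try $s=1,1/2,1/4,\dots$ down to $R^{1-\alpha}$ and, at each candidate, run the master estimator with tolerance $hs$ with $h=1/16$. I stop at the first $s$ where the estimate exceeds a constant multiple of $s$. This takes $O(\log R)$ rounds, each with failure budget $\nu/O(\log R)$. Once $s$ is known, every eigenvalue satisfies $\lambda\le M_\alpha^{1/\alpha}\le s^{1/\alpha}$, so levels with $\delta_\ell>2s^{1/\alpha}$ can be dropped. Their leakage remains in the budget, and all earlier classifiers are still charged in the prefix costs.

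The selected-mass cap used by this phase is
\[
w_\ell\lesssim \sum_{\lambda_i\in I_\ell}\lambda_i\le (\delta_\ell/2)^{1-\alpha}\sum_i\lambda_i^\alpha\lesssim s\,\delta_\ell^{1-\alpha}.
\]
Hence $B_\ell^s=\min\{S_\ell,c\,s\delta_\ell^{1-\alpha}\}$, plus $\xi^2$ leakage, is valid. The survival caps $S_j=\min\{1,8R\delta_j\}$ are kept for the prefix.

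In the final phase I run with $u=h_f s$ and cutoff $t_R(h_f s)$, where the tail bound is $Rt^\alpha\le u$. I use the refined coefficient $a_\ell=G_\ell\sqrt{S_\ell B_\ell}/\delta_\ell$ from Theorem~\ref{thm:master}, which gives cost
\[
\Ot\Bigl(\bigl(\textstyle\sum_\ell\sqrt{a_\ell}\bigr)^2/(\varepsilon s)\Bigr),
\qquad
a_\ell\approx\delta_\ell^{\alpha-2}\sqrt{\min\{1,R\delta_\ell\}\min\{R\delta_\ell,s\delta_\ell^{1-\alpha}\}}.
\]
The levels satisfy $s^{1/\alpha}\gtrsim\delta_\ell\ge t_R$.

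The main obstacle is bounding the scale sum $\sum_\ell a_\ell/s$ uniformly over $s\in[R^{1-\alpha},1]$. At fine scales ($R\delta\le1$) with $B=R\delta$, we get $a\approx R\delta^{\alpha-1}$, which is fine since $\delta^{\alpha-1}\lesssim s^{(\alpha-1)/\alpha}$. Above $1/R$, the term is $\delta^{\alpha-2}\sqrt{s\delta^{1-\alpha}}=\sqrt s\,\delta^{(\alpha-3)/2}$. For $\alpha\le3$ this is maximized at $\delta\approx1/R$. Using $s\ge R^{1-\alpha}$, we get $\sqrt s\,R^{(3-\alpha)/2}/s\le R$, and the fine-scale term $R\,s^{(\alpha-1)/\alpha}/s\le R$ as well, which yields $\Ot(R/\varepsilon)$. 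For $\alpha>3$ the coarse term peaks at $\delta\approx s^{1/\alpha}$, giving $s^{1/2+(\alpha-3)/(2\alpha)-1}=s^{-3/(2\alpha)}\le R^{3/2-3/(2\alpha)}\cdot R^{0}$ after substituting $s\ge R^{1-\alpha}$. I expect the exponent bookkeeping to reproduce exactly $R^{3/2-3/(2\alpha)}$.

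The alternative $R/\varepsilon^{1+1/\alpha}$ for $\alpha>3$ is the relative-precision bound of \cite{WZL24}, or equivalently the basic rank estimator with cutoff $t_R$. We take whichever is smaller. Finally, the pilot's cost is dominated by the final phase because $h$ is constant, and the failure budgets combine by a union bound.
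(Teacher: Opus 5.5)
Your route is the paper's. You use a pilot that halves $s$ from $1$ at constant $h$, a moment cap applied only to the selected band while keeping the survival caps $S_j$ and every coarse classifier prefix, truncation at $2s^{1/\alpha}$, the coefficient $G_\ell\sqrt{S_\ell B_\ell}/\delta_\ell$, and a final call at $h=\Theta(\varepsilon)$. Your coarse-scale bookkeeping is a valid shortcut through Table~\ref{tab:renyi-scales}. Using $B_\ell\lesssim s\delta_\ell^{1-\alpha}$ on all of $\delta\ge1/R$ overestimates the middle regime, where really $B_\ell=\Theta(1)$ and the coefficient is $\delta^{\alpha-2}$. The single geometric sum of $\delta^{(\alpha-3)/2}/\sqrt s$ still gives $R^{(3-\alpha)/2}/\sqrt s\le R$ for $\alpha<3$, $n/\sqrt s\le nR$ at $\alpha=3$, and $s^{-3/(2\alpha)}\le R^{3/2-3/(2\alpha)}$ for $\alpha>3$. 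So the separate middle regime is not needed for the upper bound.

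\textbf{The fine-scale step is wrong as written.} From $\delta^{\alpha-1}\lesssim s^{(\alpha-1)/\alpha}$ you get $R\,s^{(\alpha-1)/\alpha}/s=R\,s^{-1/\alpha}$. Since $s\le1$, this is at least $R$, and at $s=R^{1-\alpha}$ it equals $R^{2-1/\alpha}$, which would break the claimed $\Ot(R/\varepsilon)$. The fix is to use the defining endpoint of that region rather than the spectral cap. On $\delta\le1/R$ the term $R\delta^{\alpha-1}$ increases with $\delta$, so the dyadic sum is $O(R\cdot R^{1-\alpha})=O(R^{2-\alpha})$. Dividing by $s\ge R^{1-\alpha}$ gives $O(R)$, as in \eqref{eq:renyi-low-sum}. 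For $\alpha>3$ this is dominated by $R^{3/2-3/(2\alpha)}$.

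Two smaller points. First, your aside that $R/\varepsilon^{1+1/\alpha}$ is ``equivalently the basic rank estimator with cutoff $t_R$'' is not correct. With $B_\ell=S_\ell$ and $\alpha>2$, the coarse scales dominate and give $\Ot(R^{2-2/\alpha}/\varepsilon)$ even after the pilot, or $R^{\alpha-1}/\varepsilon$ without it. That term of the minimum must be imported from \cite[Corollary~5]{WZL24}, as the paper does. Second, state the pilot invariant explicitly. With tolerance $s/16$ and stopping threshold $s/4$, continuing certifies $M_\alpha<5s/16<s/2$. This invariant is what keeps the refined cap $B^s_\ell$ and the truncation valid inside every pilot call. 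It is also what makes each pilot call cost only $\Ot(R)$, or $\Ot(R^{3/2-3/(2\alpha)})$, so that the final call dominates.
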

The high-order minimum combines the multi-level construction proved below with the rank-based estimator of~\cite[Corollary~5]{WZL24}; the smaller certified bound is selected from $R,\alpha,\varepsilon$. The construction below proves the new term independently.

For $0<\alpha<1$, $M_\alpha\ge1$, so the logarithm is 1-Lipschitz on the physical moment interval. An additive moment estimate with error $(1-\alpha)\varepsilon/4$ proves the first row. For $\alpha>1$, the logarithm requires moment error proportional to $\varepsilon M_\alpha$. Using the physical lower bound $R^{1-\alpha}$ would demand the precision needed for the smallest possible moment even when the actual moment is much larger. We instead first prove an estimator with error proportional to a supplied upper scale $s$. A coarse pilot will then find $M_\alpha\le s\le8M_\alpha$, allowing a final call at that scale to provide relative moment accuracy. The next proposition supplies the known-scale subroutine used in both steps.

\begin{proposition}[Moment estimation at an upper scale]\label{prop:moment-scale}
Suppose $\alpha>1$ and a known $s$ satisfies $R^{1-\alpha}\le s\le1$ and $M_\alpha\le s$. For $0<h<1/10$, one can estimate $M_\alpha$ to additive error $hs$, with failure at most $\nu$, using $\Ot(R/h)$ queries for $1<\alpha\le3$ and $\Ot(R^{3/2-3/(2\alpha)}/h)$ for $\alpha>3$.
\end{proposition}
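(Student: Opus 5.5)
The plan is to run Theorem~\ref{thm:master} with the profile $g(x)=x^{\alpha-1}$ and the envelope $G_\ell=C_\alpha\delta_\ell^{\alpha-1}$ from \eqref{eq:envelopes}, using the variable-time estimator of Proposition~\ref{prop:band-interface}. The cost coefficient will be the dyadic rank form $a_\ell=G_\ell\sqrt{S_\ell B_\ell}/\delta_\ell$. For the survival bound we keep the rank profile $S_\ell=\min\{1,8R\delta_\ell\}$ of Lemma~\ref{lem:mass}. For the selected mass we take a sharper moment-dependent cap, as Lemma~\ref{lem:mass} permits. Two facts determine its form. An eigenvalue in $I_\ell$ satisfies $\lambda\ge\delta_\ell/2$, hence $\lambda\le(\delta_\ell/2)^{1-\alpha}\lambda^\alpha$. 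Summing, the in-band mass is at most $2^{\alpha-1}s\,\delta_\ell^{1-\alpha}$. Off-band contributions carry weight at most $\xi^2$. This gives
\begin{equation*}
 B_\ell^s=\min\{S_\ell,\;2^{\alpha-1}s\,\delta_\ell^{1-\alpha}+\xi^2\}.
\end{equation*}
Since $\lambda_{\max}^\alpha\le M_\alpha\le s$, every eigenvalue is at most $s^{1/\alpha}$. We therefore retain only levels with $\delta_\ell\le2s^{1/\alpha}$. The coarser bands carry only leakage, which we charge to \eqref{eq:budget}.

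For the deterministic budget, the cutoff is $t_R(hs)$, so by \eqref{eq:power-tails} the omitted tail is at most $Rt^\alpha\le hs/128$. The classifier parameter $\xi$ is chosen polynomially small in $hs/R$, with $\xi^2\le s\delta_L^{1-\alpha}$ and $\xi^2\le R\delta_L$. This keeps the additive $\xi^2$ in $B_\ell^s$ of the same order as the main term. It also lets \eqref{eq:leakage} be bounded by $O(\xi^2 G_\ell)$ summed over $O(\log(R/h))$ levels, which is $\le hs/16$. Only a logarithmic factor is paid in classifier and compiler cost. The approximation errors $\eta_\ell$ are set so that $\sum G_\ell\eta_\ell B_\ell\le hs/16$. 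The statistical tolerance is $E=\Theta(hs)$. By \eqref{eq:master-cost} with uniform allocation, the cost is then $\Ot(\sum_\ell a_\ell/(hs))$.

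The main computation, and the step I expect to need the most care, is the scale sum. The goal is $\sum_\ell a_\ell=\Ot(Rs)$ for $\alpha\le3$ and $\Ot(R^{3/2-3/(2\alpha)}s)$ for $\alpha>3$, using only $R^{1-\alpha}\le s\le1$. I would split the sum at $\delta=1/R$.

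On fine scales the two choices of $B_\ell$ give $a_\ell\lesssim\min\{R\delta^{\alpha-1},\sqrt{Rs}\,\delta^{(\alpha-3)/2}\}$. The maximum over $\delta$ occurs at the crossover $\delta_*=(s/R)^{1/\alpha}$, where its value is $R^{1/\alpha}s^{1-1/\alpha}$. This is at most $Rs$ precisely because $s\ge R^{1-\alpha}$.

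On coarse scales $S_\ell=1$, so $a_\ell\lesssim\sqrt s\,\delta^{(\alpha-3)/2}$.
\begin{itemize}
\item For $\alpha\le3$ this is largest at $\delta=1/R$, giving $\sqrt s\,R^{(3-\alpha)/2}\le Rs$, again by $s\ge R^{1-\alpha}$.
\item For $\alpha>3$ it is largest at the top retained scale $\delta\approx s^{1/\alpha}$, giving $s^{1-3/(2\alpha)}$. Since $s\ge R^{1-\alpha}$, we have $s^{-3/(2\alpha)}\le R^{3/2-3/(2\alpha)}$, which produces the stated exponent. This exponent also dominates the fine-scale bound $R^{1/\alpha}s^{1-1/\alpha}\le Rs$.
\end{itemize}
The number of levels adds only logarithmic factors. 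Dividing by $hs$ gives the claimed $\Ot(R/h)$ and $\Ot(R^{3/2-3/(2\alpha)}/h)$. The subtle points to verify are that $B_\ell^s\le S_\ell$ is valid including transition regions, and that earlier classifiers, which are charged through $S_j$ rather than through $B_\ell^s$, are correctly accounted for by \eqref{eq:rank-cost}.
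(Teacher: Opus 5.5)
Your proof is correct and is essentially the paper's argument. It uses the same profile and envelope, the cutoff $t_R(hs)$, rank survival caps $S_\ell$ with a moment-refined selected-mass cap, truncation to $\delta_\ell\le 2s^{1/\alpha}$ with the coarse classifiers still charged in each prefix, and an endpoint analysis of the dyadic sum driven by $s\ge R^{1-\alpha}$. The paper only splits the coarse range further at $s^{1/(\alpha-1)}$, using the unit cap below that scale, whereas your per-term bound with the moment cap alone costs just a level factor.

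There are two small slips that do not affect the conclusion. First, the fine-scale moment branch is $\sqrt{Rs}\,\delta^{(\alpha-2)/2}$, not $\sqrt{Rs}\,\delta^{(\alpha-3)/2}$. In any case, since $(s/R)^{1/\alpha}\ge 1/R$, the rank branch is the active one on fine scales, and it is bounded by $R^{2-\alpha}\le Rs$. Second, the condition $\xi^2\lesssim s\delta_\ell^{1-\alpha}$ binds at the coarsest retained level, not at $\delta_L$; your polynomially small choice of $\xi$ already covers it.
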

For $x\ge\delta_\ell/2$, $x\le x^\alpha(\delta_\ell/2)^{1-\alpha}$. Summing this inequality over the eigenvalues in the selected band uses $\sum_i\lambda_i^\alpha=M_\alpha\le s$ to bound their total mass. Eigenvalues below the band contribute at most the classifier leakage $\xi^2$. Combining this moment bound with the rank and unit bounds gives
\begin{equation}\label{eq:moment-caps}
 \begin{split}
 S_\ell&=\min\{1,8R\delta_\ell\},\\
 b_\ell&=\min\{1,R\delta_\ell,s\delta_\ell^{1-\alpha}\},\\
 B_\ell&=\min\{S_\ell,c_\alpha b_\ell\}.
 \end{split}
\end{equation}
Here $c_\alpha$ is a sufficiently large fixed constant, and $\xi^2\le\min_\ell b_\ell$ controls the leakage slack. The survival cap continues to be $S_\ell$. Also $\lambda_{\max}\le s^{1/\alpha}$, so layers with $\delta_\ell/2>s^{1/\alpha}$ have only leakage contributions and may be omitted. Appendix~\ref{app:renyi} proves the resulting three-regime sum.

\begin{table}[tbp]\centering\footnotesize
\caption{R\'enyi reference profiles and dominant endpoints of the dyadic sums. Here $\delta_\star=s^{1/(\alpha-1)}$, $\Lambda=s^{1/\alpha}$, and $k(\delta)=\delta^{\alpha-2}\sqrt{S_{\mathrm{ref}}B_{\mathrm{ref}}}/s$. Actual circuit bounds differ by fixed-order constants; see \eqref{eq:cap-equivalence}.}\label{tab:renyi-scales}
\setlength{\tabcolsep}{4pt}\renewcommand{\arraystretch}{1.35}\begin{tabular}{p{.18\textwidth}p{.08\textwidth}p{.13\textwidth}p{.22\textwidth}p{.30\textwidth}}\hline
Active scale & $S_{\mathrm{ref}}$ & $B_{\mathrm{ref}}$ & $k(\delta)$ & Dominant endpoint\\\hline
$\delta\le1/R$ & $R\delta$ & $R\delta$ & $R\delta^{\alpha-1}/s$ & Upper endpoint, since $\alpha>1$.\\[3pt]
$1/R\le\delta\le \delta_\star$ & $1$ & $1$ & $\delta^{\alpha-2}/s$ & Lower endpoint for $1<\alpha<2$; logarithmic level count at $\alpha=2$; upper endpoint for $\alpha>2$.\\[3pt]
$\delta_\star\le\delta\le\Lambda$ & $1$ & $s\delta^{1-\alpha}$ & $\delta^{(\alpha-3)/2}/\sqrt s$ & Lower endpoint for $1<\alpha<3$; logarithmic level count at $\alpha=3$; upper endpoint for $\alpha>3$.\\[3pt]\hline\end{tabular}
\par\smallskip\begin{minipage}{.96\textwidth}\footnotesize Only retained levels contribute in each range. An empty intersection contributes zero, and we assign a shared boundary to one range only. At a critical order, the factor is the actual number of retained levels, at most logarithmic.\end{minipage}\end{table}

Table~\ref{tab:renyi-scales} shows which bound determines the cost. In the middle region the exponent changes sign at $\alpha=2$; in the upper region it changes sign at $\alpha=3$. The latter transition determines whether the upper endpoint $\Lambda$ dominates the rank dependence. The few retained scales between $\Lambda$ and $2\Lambda$ change only fixed-order constants.

\begin{lemma}[Pilot scale estimate]\label{lem:pilot}
For fixed $\alpha>1$ and $R\ge2$, the pilot described below uses at most $N_{\mathrm{pilot}}=\lceil(\alpha-1)\log_2R\rceil+3$ calls to Proposition~\ref{prop:moment-scale}. With probability at least $1-\nu/8$, every call satisfies the upper-scale promise and the retained scale obeys $s/8\le M_\alpha\le s$. Every other transcript also terminates within the same call limit.
\end{lemma}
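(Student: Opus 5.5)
The pilot is a descending search over upper scales. It starts from $s_0=1$ and halves the scale geometrically until the moment estimate certifies that $s$ is within a factor $8$ of $M_\alpha$. Concretely, I would use scales $s_k=2^{-k}$ for $k=0,1,\ldots$, stopping at the physical floor $R^{1-\alpha}$. The number of distinct scales in $[R^{1-\alpha},1]$ is $\lceil(\alpha-1)\log_2 R\rceil+1$; two extra slots absorb the final rounding of the floor and one terminating call. This gives the count $N_{\mathrm{pilot}}$ in the statement.

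At each step, with current scale $s$, I call Proposition~\ref{prop:moment-scale} with $h=1/16$ and failure budget $\nu/(8N_{\mathrm{pilot}})$, and obtain $\widehat M$ with $|\widehat M-M_\alpha|\le s/16$ whenever the promise $M_\alpha\le s$ holds. The stopping rule is as follows. If $\widehat M\ge s/4$, or if $s/2<R^{1-\alpha}$, stop and retain $s$. Otherwise replace $s$ by $s/2$ and continue. The transcript is capped at $N_{\mathrm{pilot}}$ calls regardless of outcomes, which gives the termination claim for every transcript. An improper transcript simply stops at the cap and retains whatever scale it holds, so no call is ever made beyond the limit.

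Correctness is proved by induction on the good event, which is a union bound over at most $N_{\mathrm{pilot}}$ calls and so has probability at least $1-\nu/8$. Assume $M_\alpha\le s$ at the current call.
\begin{itemize}
\item If the pilot stops because $\widehat M\ge s/4$, then $M_\alpha\ge s/4-s/16\ge s/8$.
\item If it continues, then $\widehat M<s/4$, so $M_\alpha<s/4+s/16<s/2$. The promise therefore holds for the next scale $s/2$.
\item If it stops at the floor, then $s<2R^{1-\alpha}\le 2M_\alpha$ by \eqref{eq:moment-ranges}, so again $s/8\le M_\alpha\le s$.
\end{itemize}
The base case $M_\alpha\le1=s_0$ holds trivially.

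I expect the main obstacle to be the bookkeeping at the floor and at the cap. The scales must stay in the admissible range $[R^{1-\alpha},1]$ required by Proposition~\ref{prop:moment-scale}, which means the last scale is clamped to $R^{1-\alpha}$ rather than taken to be a power of two. I also need to check that on the good event the floor is reached within the stated number of halvings. This is a direct count: $\lceil(\alpha-1)\log_2R\rceil$ halvings take $1$ below $2R^{1-\alpha}$, and the remaining slots cover the clamped call.
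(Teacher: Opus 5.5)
Your proposal is correct and follows the paper's argument. Conditional on earlier calls succeeding, continuation gives $M_\alpha<5s/16<s/2$, stopping gives $M_\alpha\ge 3s/16\ge s/8$, and a conditional union bound over at most $N_{\mathrm{pilot}}$ calls, each with allowance $\nu/(8N_{\mathrm{pilot}})$, gives total failure at most $\nu/8$.

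The one difference is how termination is secured. You add a deterministic floor stop at $s/2<R^{1-\alpha}$. This needs a certified comparison for noninteger $\alpha$, and it makes your remark about a clamped last call unnecessary. The paper instead shows that on the good event the estimate itself forces a stop once $s\le 2M_\alpha$, since then $\widehat M\ge 7s/16>s/4$. It handles bad transcripts with an interval-certified lower-scale check and a failure flag.
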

\paragraph{Finding the moment scale and taking the logarithm.}
We initialize the pilot at $s=1$ and allow at most $N_{\mathrm{pilot}}$ calls. Before each call, we apply the certified lower-scale check in Appendix~\ref{app:renyi}; the controller terminates with a failure flag if this check rules out $s\ge R^{1-\alpha}$. At a permitted scale, Proposition~\ref{prop:moment-scale} estimates $M_\alpha$ to error $s/16$, with failure allowance $\nu/(8N_{\mathrm{pilot}})$. We retain $s$ when the rounded estimate is at least $s/4$. Otherwise, we halve $s$ and continue, returning a failure flag if the call limit is exhausted. On simultaneous success, Lemma~\ref{lem:pilot} shows that each new scale remains an upper bound and the retained scale is within a factor of eight of the moment.

At the retained scale, we make one final moment-estimation call with $h=\min\{1,\alpha-1\}\varepsilon/256$ and failure budget $\nu/8$. We project its output onto the physical interval $[R^{1-\alpha},1]$ and then evaluate $\log\widehat M/(1-\alpha)$ to the prescribed rounding tolerance. The pilot determines the scale; the final call supplies the requested accuracy.
Projection cannot increase moment error. On successful calls, $|\widehat M/M_\alpha-1|\le8h$, and the logarithmic conversion in Appendix~\ref{app:renyi} proves Theorem~\ref{thm:renyi}. For $\alpha\ge3/2$, the same moment bounds with fixed-time readout already give the displayed query powers, as summarized in Table~\ref{tab:nearest} and proved in Proposition~\ref{prop:fixed-entropy}.

\begin{proposition}[A one-sided spectral upper bound]\label{prop:upper-spectrum}
For fixed $\alpha>1$ and $C\ge1$, the additional promise $\lambda_{\max}\le C/R$ gives additive-$\varepsilon$ R\'enyi estimation with $\Ot(R/\varepsilon)$ queries.
 When $C=1$, zero queries suffice.
\end{proposition}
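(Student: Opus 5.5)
We apply Theorem~\ref{thm:master} directly to the profile $g(x)=x^{\alpha-1}$ with the rank mass bounds $S_\ell=B_\ell=\min\{1,8R\delta_\ell\}$. The promise gives a moment lower bound strong enough that no pilot is needed. First dispose of $C=1$: the promise $\lambda_{\max}\le 1/R$ together with $r\le R$ and $\sum_i\lambda_i=1$ forces $r=R$ and $\lambda_i=1/R$ for all $i$. Hence $M_\alpha=R^{1-\alpha}$ and $S^{\mathrm R}_\alpha=\log R$ is known, and the algorithm returns it with zero queries. For $C>1$, note that $M_\alpha\ge R^{1-\alpha}$ by \eqref{eq:moment-ranges}. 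An additive moment error $u=h_fR^{1-\alpha}$ with $h_f=\min\{1,\alpha-1\}\varepsilon/256$ is therefore already a relative error $h_f$. After projection onto $[R^{1-\alpha},1]$, the logarithmic conversion used for Theorem~\ref{thm:renyi} in Appendix~\ref{app:renyi} gives entropy error at most $\varepsilon$.

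For the bias, I take the cutoff $t_R(u)$ from Table~\ref{tab:entropy-parameters}. The tail bound $Rt^\alpha$ in \eqref{eq:power-tails} then gives omitted contribution at most $u/128$. The promise says every eigenvalue lies below $C/R$. So I retain only levels with $\delta_\ell\le 2C/R$; coarser levels have zero sharp-cutoff contribution and only classifier leakage. This leakage is made $O(u)$ by choosing $\xi^2$ polynomially small in $u$ and $R$, with $\xi^2\le R\delta_L$ as Lemma~\ref{lem:mass} requires. The leakage terms $\mathcal L_{g,\ell}$ are bounded as in Appendix~\ref{app:power}, since $g$ is bounded by $1$ on $(0,1]$ for $\alpha>1$. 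The approximation tolerances $\eta_\ell$ are chosen so that $\sum_\ell G_\ell\eta_\ell B_\ell\le u/8$. Together these meet \eqref{eq:budget} with $E=\Theta(u)$.

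The cost step is the key point. Using variable-time estimation (Proposition~\ref{prop:band-interface}) with the rank-profile coefficient, we get $a_\ell=\Ot(G_\ell\sqrt{S_\ell B_\ell}/\delta_\ell)=\Ot(\delta_\ell^{\alpha-2}\min\{1,8R\delta_\ell\})$. All retained levels satisfy $\delta_\ell\le 2C/R$, so $\min\{1,8R\delta_\ell\}\le 16C\,R\delta_\ell$ and $a_\ell=O(R\delta_\ell^{\alpha-1})$. Since $\alpha>1$, this is a geometric sum dominated by its largest retained scale $\delta\approx C/R$, giving $\sum_\ell a_\ell=O(C^{\alpha-1}R^{2-\alpha})$. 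There are $L+1=O(\log(R/u))$ levels. With uniform tolerances $e_\ell=u/(8(L+1))$, Theorem~\ref{thm:master} bounds the total cost by
\[
\Ot\!\left(\frac{\sum_\ell a_\ell}{u}\right)=\Ot\!\left(\frac{R^{2-\alpha}}{\varepsilon R^{1-\alpha}}\right)=\Ot(R/\varepsilon).
\]
The Cauchy--Schwarz allocation \eqref{eq:allocation} gives the same bound up to logarithms.

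The step I expect to need most care is confirming that dropping the coarse levels $\delta_\ell>2C/R$ is legitimate and does not disturb the first-stop decomposition. Those classifiers must still be executed as part of every prefix; they are simply not read out. Their costs are $O(1/\delta_j)=O(R)$ per stage. This is no more than the retained prefix cost, and it is already included in $T_{\max,\ell}$ and $T_{2,\ell}$ in \eqref{eq:times} with $S_j\le1$. The only remaining check is that the leakage from eigenvalues in $[C/R,2C/R]$ straddling the last omitted band is assigned to a retained level. The overlapping intervals $I_\ell$ ensure this, and the error is again bounded by $\xi^2$ times constants depending on $C$ and $\alpha$.
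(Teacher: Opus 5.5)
Your proposal is correct and follows the paper's argument. You use the same tolerance $u=h_fR^{1-\alpha}$, justified by $M_\alpha\ge R^{1-\alpha}$, together with rank caps and variable-time readout. You also retain only $\delta_\ell\le2C/R$, charging omitted levels as $\xi^2H_g$ leakage while their classifiers stay in every prefix, and note that $C=1$ forces the uniform spectrum. The differences are cosmetic: you bound every retained coefficient by $O(R\delta_\ell^{\alpha-1})$ and sum one geometric series, which also covers $R\le2C$ automatically, whereas the paper splits at $1/R$ and counts the $O(1+\log C)$ scales up to $2C/R$. Your closing worry about eigenvalues in $[C/R,2C/R]$ is unnecessary, since the promise excludes eigenvalues above $C/R$.
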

The promise gives $R^{1-\alpha}\le M_\alpha\le(C/R)^{\alpha-1}$. Thus a known scale is already within a constant factor of the moment. Retaining only layers with $\delta_\ell\le2C/R$ makes the coefficient sum $O(R^{2-\alpha})$. Division by the required moment tolerance $\Theta(\varepsilon R^{1-\alpha})$ proves the bound; Appendix~\ref{app:renyi} includes the omitted-layer budget.

\subsection{Logarithmic Moments and Entropy Variance}\label{sec:log-moments}
We next apply the functional theorem to moments and fluctuations of the information spectrum. For a fixed positive integer $k$, we define
\begin{equation}\label{eq:log-moment-def}
 \mu_k(\rho)=\Tr[\rho(-\log\rho)^k],\qquad
 V(\rho)=\mu_2(\rho)-S(\rho)^2.
\end{equation}
The traces are restricted to the support of $\rho$; equivalently, the zero-eigenvalue contribution is its continuous limit, zero. Thus $\mu_1=S$, and $V$ is the variance of $-\log\lambda_i$ under probabilities $\lambda_i$. This entropy variance appears in second-order source coding and pure-state entanglement conversion~\cite{DattaLeditzky15,AbdelhadiRenes20}. Its dependence on a single spectrum allows us to estimate it with the same local-profile construction.

\begin{proposition}[Logarithmic moments and entropy variance]\label{prop:log-moments}
For fixed $k\ge1$, controlled purified access and a rank upper bound $R\ge2$ suffice to estimate $\mu_k(\rho)$ to additive error $\varepsilon$ using $\widetilde O(R/\varepsilon)$ queries. Under the same access, $V(\rho)$ can be estimated to additive error $\varepsilon$ using $\widetilde O(R/\varepsilon)$ queries. Both statements hold with constant success probability and require no lower bound on the nonzero eigenvalues.
\end{proposition}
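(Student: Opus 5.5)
We apply Theorem~\ref{thm:master} to the profile $g_k(x)=(\log(1/x))^k$, following the proof of Theorem~\ref{thm:vn} step by step.

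\emph{Admissibility.} We take the envelope $G_\ell=C_k[1+\log(1/\delta_\ell)]^k$. On the common neighborhood of $J$ we have $\log(1/(4\delta_\ell y))=\log(1/(4\delta_\ell))-\log y$. The normalized profile $f_\ell(y)$ is therefore a polynomial of degree $k$ in $\log y$. Its coefficients are $\binom kj\log(1/(4\delta_\ell))^{k-j}/G_\ell$, and each is bounded uniformly in $\ell$. Since $\log y$ is analytic and bounded near $J$, Definition~\ref{def:analytic} holds. Lemma~\ref{lem:compiler} then gives $C_\ell^{\mathrm{loc}}=\Ot(1/\delta_\ell)$.

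\emph{Tail.} We choose the cutoff $t=\varepsilon/(c_kR\log^k(c_kR/\varepsilon))$. The function $x(\log(1/x))^k$ is increasing on $(0,e^{-k})$. Using at most $R$ eigenvalues below $t$, the omitted mass contributes at most $Rt\log^k(1/t)\le\varepsilon/128$ for a suitable constant $c_k$. The terminal branch $m_\perp$ is handled as in Lemma~\ref{lem:mass}.

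\emph{Leakage.} We bound the weighted leakage \eqref{eq:leakage} as in Appendix~\ref{app:log}, with $\log$ replaced by $\log^k$. Eigenvalues above $I_\ell$ carry $g_k\le G_\ell$ and off-band weight at most $\xi^2$. Eigenvalues below $I_\ell$ but above $\delta_L/2$ have $g_k\le O(\log^k(1/t))$. Eigenvalues below $\delta_L/2$ contribute mass $O(Rt)$. Choosing $\xi^2=\mathrm{poly}(\varepsilon/R)$ with logarithmic classifier overhead makes the total leakage over $L=O(\log(R/\varepsilon))$ levels at most $\varepsilon/16$. We also take $\eta_\ell=\varepsilon/(16(L+1)G_\ell)$, which costs only polylogarithmically.

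\emph{Cost.} With the variable-time coefficients and $S_\ell=B_\ell=\min\{1,8R\delta_\ell\}$ we have
\[
a_\ell=\Ot\bigl(G_\ell S_\ell/\delta_\ell\bigr)=\Ot\bigl(R(1+\ell)^k\bigr).
\]
Equation~\eqref{eq:master-cost} therefore gives $\Ot(R/\varepsilon)$, exactly as for $k=1$.

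\emph{Entropy variance.} We estimate $\mu_2$ to error $\varepsilon/3$ and $S$ to error $\varepsilon'=\varepsilon/(9\log R)$. We project the estimate of $S$ onto $[0,\log R]$ and output $\widehat\mu_2-\widehat S^2$. Since $|\widehat S^2-S^2|\le2\log R\,|\widehat S-S|\le 2\varepsilon/9$, the total error is below $\varepsilon$. The extra $\log R$ factor is absorbed into the tilde. Each estimator is run with failure probability $\nu/2$, and a union bound completes the argument. No eigenvalue lower bound is used anywhere, because the cutoff relies only on the rank bound.

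\emph{Main obstacle.} The step needing the most care is the leakage bound for eigenvalues below the band, where $g_k$ is large. The key is that every such eigenvalue lies at or above $\delta_L/2\ge t/4$, so $g_k=O(\log^k(R/\varepsilon))$ there. A misclassification probability $\xi^2$ that is polynomially small in $\varepsilon/R$ therefore suffices.
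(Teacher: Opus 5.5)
Your proposal follows the paper's proof closely in most respects. It uses the same envelope $G_\ell=C_k[1+\log(1/\delta_\ell)]^k$ and the same admissibility argument. The cutoff with $R t[\log(1/t)]^k\le\varepsilon/128$ is justified, as in the paper, by monotonicity of $x[\log(1/x)]^k$ on $(0,e^{-k}]$. The rank-profile coefficient is $a_\ell\le 8RG_\ell=O(R(1+\ell)^k)$, which gives $(\sum_\ell\sqrt{a_\ell})^2=O(R(L+1)^{k+2})$. The reduction for $V$ is also the same: project $\widehat S$ onto $[0,\log R]$ and use $\abs{\widehat S^2-S^2}\le 2\log R\,\abs{\widehat S-S}$.

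The real divergence is the off-band leakage. The paper first proves the a priori bound $\mu_k\le\mathcal M_k(R)=O((1+\log R)^k)$ from the tail law $\Pr(Y\ge t)\le\min\{1,Re^{-t}\}$. This yields $\mathcal L_{g_k,\ell}\le\xi^2[\mathcal M_k(R)+G_\ell]$ and a terminal term of at most $\varepsilon/128+\xi^2\mathcal M_k(R)$, uniformly over the spectrum and with no case split. You split the off-band eigenvalues into three regions instead. That can work, and it only enlarges the polylogarithm inside $\log(1/\xi)$.

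As written, however, your lowest region is mishandled, and it is exactly the region the proposition emphasizes. The leakage \eqref{eq:leakage} is weighted by $\abs{g_k(\lambda_i)}+G_\ell$. Since no eigenvalue lower bound is assumed, $g_k$ is unbounded below $\delta_L/2$. So saying these eigenvalues ``contribute mass $O(Rt)$'' does not bound their contribution. Your claim in the last paragraph, that every eigenvalue below the band lies at or above $\delta_L/2$, is also false.

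The repair uses ingredients you already have. For $\lambda_i<\delta_L/2$ you have $m_\ell(\lambda_i)\le\xi^2$. By the monotonicity from your tail step, $\sum_{\lambda_i<t}\lambda_i[\log(1/\lambda_i)]^k\le Rt[\log(1/t)]^k\le\varepsilon/128$. Hence this region contributes at most $\xi^2(\varepsilon/128+G_\ell)$ per level. The factor $\xi^2$ is essential, because without it the sum over $L+1$ levels would exceed the budget. Alternatively, adopt the paper's bound $\sum_{\lambda_i\notin I_\ell}\lambda_i m_\ell(\lambda_i)g_k(\lambda_i)\le\xi^2\mu_k\le\xi^2\mathcal M_k(R)$, which handles all three regions and the terminal branch at once.

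You should also state explicitly that your choice of $\xi^2$ satisfies $\xi^2\le R\delta_L$. Lemma~\ref{lem:mass} requires this for the caps $S_\ell=B_\ell=\min\{1,8R\delta_\ell\}$. A suitable polynomial in $\varepsilon/R$ can be chosen to meet it.
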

\begin{proof}
We first obtain a computable bound on the singular profile. If $Y$ takes the value $-\log\lambda_i$ with probability $\lambda_i$, then at most $R$ eigenvalues contribute to its upper tail, so
\begin{equation}\label{eq:log-moment-tail-law}
 \Pr(Y\ge t)\le\min\{1,R e^{-t}\},\qquad t\ge0.
\end{equation}
Integrating $k t^{k-1}\Pr(Y\ge t)$ and splitting at $t=\log R$ therefore gives
\begin{equation}\label{eq:log-moment-envelope}
 \begin{aligned}
 \mu_k&\le\mathcal M_k(R),\\
 \mathcal M_k(R)&=(\log R)^k+k!\sum_{j=0}^{k-1}\frac{(\log R)^j}{j!}.
 \end{aligned}
\end{equation}
For fixed $k$, $\mathcal M_k(R)=O((1+\log R)^k)$. We will use this bound to control leakage from the entire spectrum, including eigenvalues outside the retained intervals.

We next choose the cutoff, beginning at $\tau=2^{-2k-3}\le\min\{1/8,e^{-k}\}$. At each trial we compute an upper enclosure of $R\tau[\log(1/\tau)]^k$ with excess at most $\varepsilon/512$. We accept the trial if this upper enclosure is at most $\varepsilon/128$ and otherwise halve $\tau$. The accepted value therefore satisfies
\begin{equation}\label{eq:log-moment-cutoff}
 R\tau[\log(1/\tau)]^k\le\varepsilon/128.
\end{equation}
The function $x[\log(1/x)]^k$ is increasing on $(0,e^{-k}]$, so \eqref{eq:log-moment-cutoff} bounds the contribution of all eigenvalues below $\tau$. A value of order $\varepsilon/[R(1+\log(R/\varepsilon))^k]$, with a sufficiently small constant depending only on $k$, makes the true expression at most $\varepsilon/256$ and hence passes the enclosure test. We choose the least $L$ with $\delta_L\le\tau$, so $\tau/2<\delta_L\le\tau$. The procedure therefore terminates with $L=O(\log(R/\varepsilon))$ levels, using finite certified precision at every trial.

For $g_k(x)=(-\log x)^k$, we use envelopes
\begin{equation}\label{eq:log-moment-local-envelope}
 G_\ell=C_k[1+\log(1/\delta_\ell)]^k.
\end{equation}
Indeed, $g_k(4\delta_\ell y)$ is the $k$th power of $\log(1/(4\delta_\ell))-\log y$. On a fixed complex neighborhood of $J$ that avoids zero and the negative real axis, normalization by \eqref{eq:log-moment-local-envelope} makes these profiles uniformly bounded and analytic. Their coefficients are effectively computable, so Definition~\ref{def:analytic} and Lemma~\ref{lem:compiler} apply.

To check the remaining deterministic errors, let $G_{\max}=\max_{\ell\le L}G_\ell$. The off-band bound of Lemma~\ref{lem:mass} gives $\mathcal L_{g_k,\ell}\le\xi^2[\mathcal M_k(R)+G_\ell]$, while the terminal contribution is at most $\varepsilon/128+\xi^2\mathcal M_k(R)$. We therefore choose
\begin{equation}\label{eq:log-moment-leakage-choice}
 \xi^2\le\min\left\{R\delta_L,\frac14,
 \frac{\varepsilon}{128(L+2)[1+\mathcal M_k(R)+G_{\max}]}\right\}.
\end{equation}
We set $\varepsilon_F=\varepsilon$, use \eqref{eq:eta-budget}, and allocate $E=\varepsilon/4$ to the layer estimates. The total off-band leakage is at most $\varepsilon/128$, and the terminal contribution is at most $\varepsilon/64$. Together with compiler error at most $\varepsilon/8$, these bounds leave room for the rounding allowance in Appendix~\ref{app:budgets}. We assign a total failure budget of $1/12$, including finite synthesis. All additional precision factors are logarithmic in $R/\varepsilon$ for fixed $k$.

It remains to sum the query costs. With $S_\ell=B_\ell=\min\{1,8R\delta_\ell\}$, the rank-profile coefficient in Theorem~\ref{thm:master} satisfies
\begin{equation}\label{eq:log-moment-coefficient}
 a_\ell=G_\ell B_\ell/\delta_\ell
 \le8RG_\ell=O(R(1+\ell)^k).
\end{equation}
Consequently, $(\sum_{\ell=0}^L\sqrt{a_\ell})^2=O(R(L+1)^{k+2})$. Substitution into \eqref{eq:master-cost} proves the stated $\widetilde O(R/\varepsilon)$ bound. This argument also shows why a stronger logarithmic singularity does not change the query powers: its effect is absorbed by the number of levels and the local logarithmic envelope.

Finally, we estimate $\mu_2$ to error $\varepsilon/4$ and $S$ to error $\varepsilon/[8(1+\log R)]$, allocating failure probability at most $1/12$ to each call, including its implementation error. We project the entropy estimate onto $[0,\log R]$. Since both the projected estimate $\widehat S$ and $S$ lie in that interval, $|\widehat S^2-S^2|\le2\log R\,|\widehat S-S|\le\varepsilon/4$. Thus $\widehat\mu_2-\widehat S^2$ estimates $V$ within $\varepsilon/2$, leaving slack for arithmetic. Projection onto $[0,\infty)$ cannot increase the error because $V\ge0$. Both calls use $\widetilde O(R/\varepsilon)$ queries, and the union bound gives success probability at least $5/6$. Rank-one states have $\mu_k=V=0$ and can be handled directly.
\end{proof}

The proposition extends the entropy construction to spectral fluctuation statistics using the same purified access and rank information. In the cited second-order coding results, entropy variance determines the size of the fluctuation term. Our estimate thus supplies this parameter with the same soft rank--precision dependence as entropy itself.

\section{Conclusion and Open Problems}\label{sec:discussion}
The central conclusion of this work is that the multiplicative overhead of a worst-case spectral transform followed by a worst-case readout is not fundamental to quantum entropy estimation. The two costs are usually introduced by different algorithmic steps, but the quantities that determine them are not independent: the same eigenvalues fix the resolution scale, the local function magnitude, and the spectral mass. By retaining this dependence throughout the construction, we replace the global product by a sum of local costs. Local normalization accounts for the rank-independent Tsallis improvement, whereas survival-sensitive estimation supplies the additional saving for von Neumann entropy. For R\'enyi entropy, a moment-scale estimate further bounds the selected mass, while the survival probabilities account for the cost of reaching it.

The resulting rank-independent Tsallis bounds are near-optimal for every fixed order above one: they are $\widetilde\Theta(\varepsilon^{-1/(\alpha-1)})$ for $1<\alpha<2$ when the allowed rank accommodates the hard family, and $\widetilde\Theta(1/\varepsilon)$ for $\alpha\ge2$. For von Neumann and R\'enyi entropies, the new bounds improve the available algorithms without settling all joint rank--precision tradeoffs. In particular, the gap between $\widetilde O(R/\varepsilon)$ and $\widetilde\Omega(R/\sqrt\varepsilon+1/\varepsilon)$ for von Neumann entropy still calls for either a sharper algorithm or a stronger joint lower bound. The corresponding low-rank Tsallis and noninteger-order R\'enyi regimes remain related open problems.

Beyond these particular bounds, Theorem~\ref{thm:master} changes the resource calculation for a broad class of scalar spectral statistics. Once the normalized local profiles are effectively approximable and the discarded contributions are controlled, their estimation cost can be expressed in terms of local quantities rather than an independently optimized global transform. The logarithmic-moment and entropy-variance bounds in Proposition~\ref{prop:log-moments} demonstrate its application to logarithmic profiles as well as powers. Entropy variance describes fluctuations that enter finite-blocklength compression and entanglement conversion~\cite{DattaLeditzky15,AbdelhadiRenes20}; our result estimates that descriptor under coherent purified access, without presupposing a lower bound on the nonzero eigenvalues. Extending this approach to two-state relative-entropy variances would require an estimator that handles both input operators.

This distinction between estimating an expectation and implementing an operator also clarifies what a lower-bound argument must establish. A product lower bound for an explicitly composed input problem does not automatically constrain every algorithm for the corresponding scalar statistic. Conversely, local decomposition cannot remove a product that the statistic itself requires: Proposition~\ref{prop:product-cost} exhibits a bounded polynomial functional on rank-two states with query complexity $\Theta(D/\varepsilon)$. The local approximation and mass conditions in the general theorem identify the structure supporting the entropy improvements.

Several extensions remain worth investigating. Tighter spectral-mass information may sharpen particular rank--precision regimes, while a more economical calibration may reduce logarithmic overhead. Dependence on a varying entropy order is a separate issue: our constants are for fixed $\alpha$, and neither the limit at one nor large orders are uniform, as existing order-sensitive moment bounds also suggest~\cite{WangMonomials25,WangHighOrder26}. A concrete application of the local viewpoint is centered estimation near the maximally mixed spectrum. For known exact rank $R$, subtracting the tangent of $x\log x$ at $1/R$ gives a nonnegative residual whose sum is $\log R-S(\rho)$. Under a two-sided spectral promise, its local magnitude is smaller than that of the uncentered expression. Determining whether this yields a deficit-dependent improvement requires controlling the cost of resolving that narrower interval as well; both costs enter the resulting query bound.

\clearpage\appendix
\section{Coherent Spectral Decomposition and Mass Bounds}\label{app:instrument}

\begin{table}[tbp]\centering\footnotesize
\caption{Registers of one band preparation and their reversal.}\label{tab:registers}
\setlength{\tabcolsep}{3pt}\renewcommand{\arraystretch}{1.2}\begin{tabular}{p{.19\columnwidth}p{.28\columnwidth}p{.43\columnwidth}}\hline
Registers & Role & Retention and reversal\\\hline
$S,E$ & Input system and purification & Only $S$ is processed; the input preparation is reversed last.\\
$A_1,E_1$ & Density-block auxiliaries & Each module starts with clean auxiliaries and reuses them within QSVT.\\
$P_j,h_j$ & Classifier history & All phase and majority records remain until the complete inverse.\\
$\mathsf r_j,\mathsf f_j$ & Running and stop flags & Fresh targets retain the reversible Boolean history.\\
$Z_\ell,H$ & Local transform and readout & The work register $Z_\ell$ is retained and excluded from success.\\
$d,b$ & Selector and reference coin & Both branches remain coherent and are reversed with the prefix.\\\hline\end{tabular}\end{table}

\subsection{Spectral classification and finite implementation}
We implement the spectral classifier by phase estimation on $e^{i\rho}$, whose eigenphases identify the eigenvalues without changing their eigenvectors~\cite{Cleve98}. A signed phase representative in $[-\pi,\pi)$ separates the physical spectrum $[0,1]$ from the branch cut. Phase resolution of a sufficiently small constant times $\delta_j$, followed by a threshold between $\delta_j/2$ and $\delta_j$, distinguishes the two promised regions with constant success probability. We reduce its error to $\xi^2$ by coherent repetition and reversible majority, retaining all records. The total simulated evolution time is $\Ot(1/\delta_j)$, and block-Hamiltonian simulation~\cite{LowChuang19} implements it with the same inverse-scale dependence. Appendix~\ref{app:budgets} accounts for finite simulation and gate precision.

At scale $\delta$, we choose a phase grid of size $N=2^{\lceil\log_2(64\pi/\delta)\rceil}$. Starting from its uniform superposition, we apply the controlled powers of $e^{i\rho}$ and the inverse Fourier transform. On an eigenstate of eigenvalue $x$, outcome $y$ has the geometric-sum amplitude below. We interpret $2\pi y/N$ by its signed representative and compute the high bit by comparison with $3\delta/4$.

\begin{equation}\label{eq:classifier-amplitude}
 v_y(x)=\frac1N\sum_{k=0}^{N-1}e^{ik(x-2\pi y/N)}.
\end{equation}

The standard two-nearest-grid-point event has probability at least $8/\pi^2$ and circular error at most $2\pi/N\le\delta/32$. On that event the threshold separates $x\le\delta/2$ from $x\ge\delta$. The signed representative causes no physical wrap-around ambiguity: the physical spectrum is contained in $[0,1]$, whereas the cut is at $\pm\pi$. A phase estimate just below zero is interpreted as negative and therefore remains low. This is why a signed, rather than unsigned, comparison is used.

We repeat this computation into $k_\xi$ fresh registers, where $k_\xi$ is odd and at least $16\log(2/\xi)$. Conditional on an eigenvalue, the fresh records have a product distribution, so Hoeffding's inequality~\cite{Hoeffding63} bounds the majority error by $\exp(-k_\xi/8)\le\xi^2$. A reversible majority circuit writes the bit used in the flag update. This conditional calculation determines $s_j(x)$ for every eigenvalue; it does not assume independence after averaging over a superposition of eigenvectors.

The sum of evolution times in one phase-estimation register is $N-1$. Its inverse uses the same times with opposite signs. The repetitions therefore require total evolution time $O(\delta^{-1}\log(1/\xi))$. Calling the controlled block-Hamiltonian simulation for each power, to the prescribed finite accuracy, gives the classifier cost stated in Section~\ref{sec:first-stop}. A second control for the running flag is implemented by first computing the conjunction of the controls in a known ancilla. No additional kind of oracle access is required.

The simulation interface is a clean implementation of the target evolution on a zero auxiliary input. If its projected block approximates $e^{it\rho}$ within $\varepsilon_{\mathrm{sim}}$, unitarity bounds the leaked norm by $O(\sqrt{\varepsilon_{\mathrm{sim}}})$, because the target block is unitary. Taking internal accuracy $O(\zeta^2)$ therefore approximates the clean isometry within $O(\zeta)$. Close isometries admit unitary completions at operator distance $O(\zeta)$: rotate their image subspaces by their principal angles, then align the two orthonormal frames. This also controls inverses and controlled versions. Each ideal completion acts as $e^{it\rho}$ on the clean subspace. The actual auxiliary registers are retained, and the full-circuit hybrid argument accounts for their leakage. The quadratic internal accuracy changes only logarithmic simulation factors.

\subsection{Reversible histories and their inverse}
Each flag update in \eqref{eq:reversible-flags} is an XOR into a fresh bit, implemented by a Toffoli gate with a positive or negative control on $h_j$. When $\mathsf r_{j-1}=0$, both new flags remain zero; when $\mathsf r_{j-1}=1$, exactly one of $\mathsf f_j$ and $\mathsf r_j$ becomes one. Induction therefore gives at most one first stop on each reachable history. Different first-stop labels occupy orthogonal computational-basis subspaces, even when their phase records are entangled with the system. We reverse a prefix by undoing its flag updates and controlled classifier in reverse stage order. Because the enabling flags were retained, these inverses act on exactly the branches required to undo the forward circuit, without measurement or reset.

\subsection{Completeness of the retained histories}
We denote by $J_0$ the isometry that initializes all classifier registers and by $V$ the sequential controlled classifier. For each first-stop label $\ell$, $\Pi_\ell$ projects onto that label, while $\Pi_\perp$ projects onto the all-continue branch. Thus $K_b=\Pi_bVJ_0$ for every branch $b$. The projectors are orthogonal and sum to the identity on the reachable subspace, which gives
\begin{equation}
 \sum_b K_b^\dagger K_b
 =J_0^\dagger V^\dagger\left(\sum_b\Pi_b\right)VJ_0=I.
\end{equation}
This remains true for any implemented unitary classifier. Spectral diagonality, used for the analytic decomposition, is exact for the ideal circuit and approximate for its compiled realization.

Classifier $j$ acts only when $\mathsf r_{j-1}=1$, equivalently when all preceding first-stop flags are zero. Conditioned on an eigenvalue, this yields \eqref{eq:responses}. Since the classifier preserves that eigenvector, $K_\ell\ket{u_i}=\ket{u_i}\ket{\kappa_\ell(\lambda_i)}$ for a possibly subnormalized workspace vector. Consequently $K_\ell^\dagger K_\ell=m_\ell(\rho)$, and summing the responses proves $F_g=\sum_\ell A_\ell+A_\perp$ on the support of $\rho$.

\subsection{Conditional states, degeneracies, and retained records}

To analyze the conditional states, we express the input purification in a Schmidt basis; the algorithm does not compute this basis. For an eigenvalue $x$, we denote the classifier workspace vectors after projection onto continue and stop by $v_j^0(x)$ and $v_j^1(x)$. Their squared norms are $1-s_j(x)$ and $s_j(x)$, respectively. Because each classifier uses fresh records, the first-stop-$\ell$ workspace is the tensor product of the earlier continue vectors and the final stop vector, together with the deterministic flags.

\begin{align}\label{eq:history-state}
 \ket{\psi_\rho}&=\sum_i\sqrt{\lambda_i}\ket{u_i}_S\ket{e_i}_E,\nonumber\\
 K_\ell\ket{u_i}&=\ket{u_i}\ket{\kappa_\ell(\lambda_i)},\nonumber\\
 \|\kappa_\ell(x)\|^2&=\|v_\ell^1(x)\|^2
                  \prod_{j<\ell}\|v_j^0(x)\|^2.
\end{align}

If an eigenvalue is degenerate, every vector in its eigenspace has the same scalar response, so the argument is independent of the chosen eigenbasis. Orthogonality of the environment vectors eliminates cross terms when computing branch mass. The local polynomial also acts as a scalar within every eigenspace; hence the same calculation gives the weighted functional contribution. Orthogonality between stop labels, in contrast, comes from the flags and holds independently of the environment. These are separate uses of orthogonality.

The inverse of a prefix must include majority computation, the individual threshold bits, the Fourier transform, controlled evolutions, and initial Hadamards in their reverse order. Uncomputing only the visible high bit would leave which-eigenvalue information in the phase records and would not implement the reflection used by amplitude amplification. Retaining and reversing the complete record is sufficient even when the forward computation has a nonzero classification error: the error changes amplitudes, not reversibility.

\subsection{Proof of Lemma~\ref{lem:mass}}
If $x<\delta_\ell/2$, stopping at $\ell$ requires a false-high outcome there, so $m_\ell(x)\le\xi^2$. If $x>2\delta_\ell$ and $\ell\ge1$, reaching $\ell$ requires a false-low outcome at $\ell-1$, whose threshold scale is $2\delta_\ell$. The case $\ell=0$ has no eigenvalues above $2\delta_0=1$. Thus one decisive incorrect test suffices to bound each off-band response; no sum over all previous classification errors is needed.

Separating eigenvalues below and above $2\delta_\ell$ gives
\begin{equation}
 w_\ell\le\sum_{\lambda_i\le2\delta_\ell}\lambda_i
       +\xi^2\sum_{\lambda_i>2\delta_\ell}\lambda_i
 \le2R\delta_\ell+\xi^2.
\end{equation}
The same argument bounds the probability of being active immediately before stage $\ell\ge1$. At stage zero the reach probability is one, and $S_0=\min\{1,8R\delta_0\}=1$ covers it. Surviving stage $L$ when $x\ge\delta_L$ requires a false-low outcome there. In particular, the tail mass is at most $R\delta_L+\xi^2$. These estimates prove the lemma and identify the deterministic bounds needed for the variable-time circuit.

\subsection{Adjacent-classifier comparison}\label{app:adjacent-comparison}
The input of Chen et al.'s Theorem~13~\cite{Multilevel26} has the following interpretation. Denote its envelope by $A_j$ (its $B_j$, unrelated to our probability cap). Up to its displayed level and approximation factors, its two cost terms are $A_j\sqrt{n_j}/u$ and $\sqrt{A_j/u}/\phi_{j+1}$. Its square-polynomial condition is on $g(x^2)$, using singular values $\sqrt{p_i}/2$. Substituting $p_i=\lambda_i$ does not provide that labeled oracle from an arbitrary purification. The density block instead supplies $\lambda_i$, so such a substitution also changes the scale cost. We therefore compare implementations after converting the spectral access, as described below. 
\begin{proposition}[Two-classifier fixed-time construction]\label{prop:adjacent-bands}
For the nonnegative entropy profiles and bounds of Appendix~\ref{app:entropy}, two adjacent classifiers per independently prepared band, bounded local QSVT, and Proposition~\ref{prop:fixed-band} give coefficient $G_\ell\sqrt{B_\ell}/\delta_\ell$. They require no calibrated variable-time estimation. The construction below gives this density-state adaptation explicitly.
\end{proposition}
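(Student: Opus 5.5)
The plan is to build, for each retained level $\ell$, a band circuit that selects the band through two adjacent classifiers prepared afresh. This replaces the sequential first-stop prefix. Concretely, I run the scale-$\delta_\ell$ classifier and the scale-$2\delta_\ell$ classifier of \eqref{eq:classifier} on $\ket{\psi_\rho}$, each with fresh retained records, and compute the flag
\[
 \mathsf f_\ell=h_\ell(1-h_{\ell-1})
\]
by one Toffoli gate with a negative control. At $\ell=0$ the flag is simply $h_0$. The scalar response is then $m'_\ell(x)=s_\ell(x)[1-s_{\ell-1}(x)]$. The conditional-state argument of Appendix~\ref{app:instrument} applies verbatim and gives $K'^\dagger_\ell K'_\ell=m'_\ell(\rho)$. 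The hypothesis that the profiles are nonnegative matters here: the responses $m'_\ell$ need not sum to one, so I cannot use the exact decomposition \eqref{eq:complete} unchanged.

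The first step is to recover a decomposition. On $[\tau,1]$ the two-classifier responses sum to $1\pm O(L\xi^2)$. This holds because an eigenvalue $x$ lies in at most two overlapping intervals $I_\ell$. Outside those intervals each response is at most $\xi^2$, by the one-decisive-error argument of Lemma~\ref{lem:mass}. Inside them, the transition regions of neighbouring tests telescope. To make the telescoping exact, I reuse the same scale-$\delta_j$ classifier in bands $j$ and $j+1$. Writing $\sigma_j=s_j$, this gives $\sum_{\ell}\sigma_\ell(1-\sigma_{\ell-1})=\sum_\ell(\sigma_\ell-\sigma_\ell\sigma_{\ell-1})$, which differs from a telescoping sum $\sum_\ell(\sigma_\ell-\sigma_{\ell-1})$ only by the cross terms $\sigma_{\ell-1}(1-\sigma_\ell)$. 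Each cross term is at most $\xi^2$ unless $x$ sits in the narrow window where both tests are uncertain. That window is empty by \eqref{eq:classifier}, since the uncertain windows $(\delta_j/2,\delta_j)$ of adjacent scales are disjoint. Hence $\bigl|\sum_\ell m'_\ell(x)-1\bigr|\le 2(L+1)\xi^2$ for $x\ge\delta_L$. For smaller $x$ the sum is $O(L\xi^2)$ plus a tail term. Because $g\ge0$, this deviation enters the error budget \eqref{eq:budget} as an additional leakage term of size $O(L\xi^2\,\max(\mathcal M,G_{\max}))$, where $\mathcal M$ is the tail or moment bound from Appendix~\ref{app:entropy}. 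That term is absorbed by shrinking $\xi$ polylogarithmically.

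The remaining steps are routine:
\begin{itemize}
\item The band mass satisfies $w'_\ell\le B_\ell$ by the same split at $2\delta_\ell$ as in Lemma~\ref{lem:mass}. For the R\'enyi rows this uses the moment caps \eqref{eq:moment-caps}, and their proof transfers unchanged because only the decisive-test argument is used.
\item Each band costs $C'_\ell=d_0+c_\ell+c_{\ell-1}+C^{\mathrm{loc}}_\ell=\Ot(1/\delta_\ell)$. This is exactly the bound assumed in Proposition~\ref{prop:fixed-band}, with the same Hadamard readout \eqref{eq:hadamard}.
\end{itemize}
Proposition~\ref{prop:fixed-band} then gives per-band cost $\Ot(G_\ell\sqrt{B_\ell}/(e_\ell\delta_\ell))$, that is, coefficient $a_\ell=G_\ell\sqrt{B_\ell}/\delta_\ell$. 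No calibrated variable-time estimation enters, and Theorem~\ref{thm:master} then applies with the fixed-time coefficients.

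The main obstacle is the cross-term and overlap bookkeeping in the first step. The bound must be uniform over the transition regions, and it must be weighted by the singular logarithmic profile near the cutoff. I would handle this by following the leakage estimate \eqref{eq:leakage} and bounding the excess in $\sum_\ell m'_\ell$ times $|g|$. Off-band I would use the global bound $\mathcal M_k(R)$ or $\log R$, and on-band I would use $G_\ell$. This reduces the obstacle to the choice \eqref{eq:log-moment-leakage-choice} with an extra factor of $L$.
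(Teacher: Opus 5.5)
Your proposal is correct and follows the paper's proof. It uses the same adjacent-pair weights $s_\ell(1-s_{\ell-1})$ and the same telescoping, with cross terms $s_{\ell-1}(1-s_\ell)\le\xi^2$ for every $x$. It also uses the same decisive-test mass bounds and the same appeal to Proposition~\ref{prop:fixed-band} with prefix cost $\Ot(1/\delta_\ell)$.

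The paper's version of your first step is cleaner. It adjoins the terminal weight $1-s_L$, which gives the exact identity $\sum_{\ell}\omega_\ell+\omega_\perp=1+\sum_{\ell=1}^L s_{\ell-1}(1-s_\ell)$ with nonnegative excess at most $L\xi^2$ for all $x$. The bias is then just $L\xi^2H_g$. This makes the case split at $\delta_L$ unnecessary, as well as the heuristic count of overlapping intervals (which is three, not two, at dyadic endpoints).
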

\begin{proof}
For the two-classifier comparison, we use the high-response probabilities $s_j(x)$ from the same classifiers, but place the two tests in distinct work registers. Both preserve each system eigenvector. Selecting a low response at $\ell-1$ and a high response at $\ell$ therefore gives the weights
\begin{align}\label{eq:adjacent-weights}
 \omega_0(x)&=s_0(x),\quad \omega_\perp(x)=1-s_L(x),\nonumber\\
 \omega_\ell(x)&=(1-s_{\ell-1}(x))s_\ell(x)\quad(\ell\ge1),\nonumber\\
 \sum_{\ell=0}^L\omega_\ell(x)+\omega_\perp(x)
 &=1+\sum_{\ell=1}^Ls_{\ell-1}(x)(1-s_\ell(x)),\nonumber\\
 0&\le\sum_{\ell=1}^Ls_{\ell-1}(x)(1-s_\ell(x))\le L\xi^2.
\end{align}
Unlike the first-stop construction, these weights form an approximate partition. If $x\le\delta_\ell=\delta_{\ell-1}/2$, then $s_{\ell-1}(x)\le\xi^2$; if $x\ge\delta_\ell$, then $1-s_\ell(x)\le\xi^2$. The resulting excess is nonnegative and biases the functional by at most $L\xi^2H_g$. Each off-band weight is at most $\xi^2$, so the rank and moment mass bounds remain valid, and the terminal branch has the same tail-plus-leakage bound. We use one quarter of the classifier-error allowance in \eqref{eq:application-recipe} to cover this excess together with the original leakage. The local approximation, signed readout with $\beta=2$, and zero-output rule are unchanged. Since the two classifiers together cost $\Ot(\delta_\ell^{-1})$, Proposition~\ref{prop:fixed-band} proves the same soft query bound as for a complete first-stop prefix. Each inverse reverses both work registers and the input preparation.
\end{proof}
We can also express the access conversion for the closest distribution framework. Theorem~12 of Chen \emph{et al.}~\cite[pp.~58:10--58:11]{Multilevel26} applies to projected-unitary encodings. A known rotation scales our density block to $\rho/2$. Its Hermitian dilation has eigenvectors $\ket+\ket{u_i}$ with eigenvalues $\lambda_i/2$, and the state $\ket+\ket{\psi_\rho}$ places weight $\lambda_i$ on these vectors without revealing their basis. With threshold $\delta_\ell/2$ and gap ratio two, the discrimination cost is $O(\log(1/\varepsilon_1)+\delta_\ell^{-1}\log(1/\varepsilon_2))$ purified queries. Thus the conversion has inverse-eigenvalue resolution cost. Our proposition uses the phase classifier above for its error analysis and gives the resulting soft query powers; the full logarithmic and workspace costs depend on the chosen implementation.

\section{Bounded Local Polynomial Approximation and Readout}\label{app:compiler}
\subsection{A globally bounded fixed-domain approximation}
We first approximate the normalized profile on a fixed interval and then control the polynomial outside that interval. We use $K=[1/16,5/8]\subset\operatorname{int}J$. Definition~\ref{def:analytic} supplies a fixed Bernstein ellipse around $J$ inside the common analytic neighborhood. The Bernstein-ellipse bound~\cite[ch.~8]{Trefethen13} gives Chebyshev coefficients satisfying $|c_k|\le2M\varrho^{-k}$ on the affine image of $J$, with $M$ and $\varrho>1$ independent of the level. Consequently, the degree-$d$ truncation $P_{\ell,d}$ satisfies
\begin{equation}
 \sup_{y\in J}|P_{\ell,d}(y)-f_\ell(y)|\le C\varrho^{-d}.
\end{equation}
Taking $d=O(\log(1/\eta))$ gives error $O(\eta)$ and keeps $P_{\ell,d}$ bounded by $1/3$ on $J$, after choosing the envelope slack and constants appropriately.

For a fixed level, write $P_d=P_{\ell,d}$ and set $M_d=\max_{y\in[-1,1]}|P_d(y)|$. The Chebyshev recurrence gives $M_d\le\exp(O(d))$. The explicit window below suppresses this exterior growth while preserving the local approximation. Coefficients are converted to the global Chebyshev basis before perturbation, so their $\ell_1$ error controls the norm on the actual QSVT domain. The fixed norm margin permits finite phase factorization~\cite{LowChuang17,QSVT19,Haah19}.

\subsection{An explicit fixed-gap window}
To bound the same polynomial on the full domain, we use a polynomial window between $K=[1/16,5/8]$ and $J=[1/32,3/4]$. We choose transition centers $a=3/64$ and $b=11/16$, and a sign approximant $S_\omega$ with error $\omega$ and gap $1/64$, obtained by rescaling Lemma~14 of~\cite{QSVT19}. We take its bounded domain to be $[-2,2]$, which contains both shifted arguments below for every $y\in[-1,1]$.

\begin{equation}\label{eq:explicit-window}
 W_\omega(y)=\frac{1+S_\omega(y-a)}2\,
             \frac{1-S_\omega(y-b)}2.
\end{equation}

Each factor is in $[0,1]$ globally. On $K$, both desired signs are separated from zero by at least the gap, so $|1-W_\omega|\le\omega$. Outside $J$, one factor is at most $\omega/2$, so $|W_\omega|\le\omega/2$. The degree is $O(\log(1/\omega))$, because all gaps here are fixed. This verifies the window assertions used above on the entire outer domain, including its negative part.

The windowed polynomial has norm at most $\|P_d\|_J$ on $J$ and at most $\omega M_d/2$ outside $J$. On $K$, its approximation error is bounded by that of $P_d$ plus $\omega\|P_d\|_K$. Choosing $\omega$ proportional to $\eta/\max\{1,M_d\}$ therefore gives the required accuracy and a constant global norm margin. Since $M_d\le\exp(O(d))$, the added degree is $O(d+\log(1/\eta))$. The extension outside $J$ thus costs another logarithmic degree, not another inverse spectral scale.

To preserve the global norm bound with finite coefficients, we first convert the exact windowed polynomial to the Chebyshev basis on $[-1,1]$. We then approximate its coefficients with total absolute error at most $\eta/10$, refining their bit precision during the conversion until a certified enclosure meets this tolerance. Every basis function has magnitude at most one on the QSVT domain, so the coefficient error gives the same uniform bound on the polynomial perturbation. The norm margin retained above consequently permits phase factorization after rounding.

We set $P_\ell=W_\omega P_{\ell,d}$ with the accuracy just specified. The truncation has norm at most $1/3$ on $J$, so after tightening fixed constants the product has norm at most $1/2$ on the full interval. We retain this margin when rounding the coefficients.

\subsection{Spectral rescaling and composition}
The polynomial construction for uniform spectral amplification~\cite[Theorem~17]{QSVT19} supplies an odd bounded polynomial $r_\ell$ with
\begin{equation}\label{eq:amplifier}
 \abs{r_\ell(x)-x/(4\delta_\ell)}\le\eta_{\mathrm{sc}},\quad
 0\le x\le2\delta_\ell,
\end{equation}
and degree $O(\delta_\ell^{-1}\log(1/(\delta_\ell\eta_{\mathrm{sc}})))$. At the two coarsest scales $\delta_\ell\ge1/4$, the linear polynomial $x/(4\delta_\ell)$ suffices. We compose $r_\ell$ with the outer polynomial before compiling the quantum circuit.

For $x\in I_\ell$, the ideal argument $x/(4\delta_\ell)$ lies in $[1/8,1/2]$. For sufficiently small $\eta_{\mathrm{sc}}$, the actual argument lies in $K$. The analytic profiles have a uniform derivative bound there. Comparing the outer approximation to the profile before comparing the two arguments gives
\begin{align}
 &\abs{P_\ell(r_\ell(x))-f_\ell(x/(4\delta_\ell))}\nonumber\\
 &\quad\le\abs{P_\ell(r_\ell(x))-f_\ell(r_\ell(x))}
        +O(\eta_{\mathrm{sc}})\le O(\eta).
\end{align}
We choose $\eta_{\mathrm{sc}}=\Theta(\eta)$ with a sufficiently small constant and set $q_\ell=P_\ell\circ r_\ell$. Since $r_\ell$ maps $[-1,1]$ into itself and $P_\ell$ is bounded there, $q_\ell$ is a global contraction. To implement a polynomial of arbitrary parity, we separate its even and odd parts, $q_{\ell,\mathrm e}(x)=[q_\ell(x)+q_\ell(-x)]/2$ and $q_{\ell,\mathrm o}(x)=[q_\ell(x)-q_\ell(-x)]/2$. Corollary~11 of~\cite{QSVT19} implements each bounded real transform. Because $\rho$ is positive semidefinite, these singular-value transforms are the required eigenvalue transforms. A balanced control ancilla combines the two blocks into $q_\ell(\rho)/2$, equivalently the arbitrary-parity construction of~\cite[Theorem~31]{QSVT19} applied to $q_\ell/2$. We therefore use the fixed normalization $\beta=2$. The degree is at most $\deg(P_\ell)\deg(r_\ell)$, giving $\Ot(1/\delta_\ell)$ queries and proving Lemma~\ref{lem:compiler}.

\subsection{Registers in the arbitrary-parity readout}

The local compiler and the Hadamard test use distinct controls. We denote the parity selector by $z$ and the Hadamard ancilla by $h$. After padding the parity circuits to a common workspace, the compiler is $(H_z\otimes I)(|0\rangle\langle0|_z\otimes U_{\mathrm{e}}+|1\rangle\langle1|_z\otimes U_{\mathrm{o}})(H_z\otimes I)$, with zero-input block $q_\ell(\rho)/2$. We control this complete unitary on $h$ and enable the test only on the selected first-stop flag. The selected-branch output is

\begin{align}\label{eq:hadamard-vector}
 \ket{\Omega_\ell}={}&\tfrac12\ket0_h(I+U_{\ell,g})\ket{\zeta_\ell,0^a}\nonumber\\
 &+\tfrac12\ket1_h(I-U_{\ell,g})\ket{\zeta_\ell,0^a}.
\end{align}

Taking the squared norms of the two components gives $(w_\ell\pm\operatorname{Re}\langle\zeta_\ell,0^a|U_{\ell,g}|\zeta_\ell,0^a\rangle)/2$. The zero auxiliaries occur in the input bra and ket because they were prepared in zero, not because they were measured and accepted at the output. All compiler output auxiliaries are summed over by the Hadamard measurement.

Since $\|q_\ell\|\le1$ and $\beta=2$, each exact sign probability lies between $w_\ell/4$ and $3w_\ell/4$. Thus a nonzero band always gives two nonzero sign probabilities, although both vanish when $w_\ell=0$. We nevertheless calibrate each sign with the known cap $B_\ell$, since the actual mass may be unknown and arbitrarily small. The known-success construction supplies a uniform positive scale for both signs without having to determine whether the physical band is empty.

\begin{equation}\label{eq:sign-range}
 \frac{w_\ell}{4}\le p_{\ell,\pm}\le\frac{3w_\ell}{4},
 \qquad \bar p_{\ell,\pm}\in[B_\ell/2,B_\ell].
\end{equation}

The same calculation is valid for a mixed conditional system state, since the purification and all records were kept in the vector above. It also allows a real profile that changes sign: its sign is represented by the difference of the two joint probabilities, while the positive envelope determines accuracy. Entropy profiles are nonnegative on the physical spectrum, but the framework and the parity construction do not rely on this additional property.

\subsection{Approximation error}
Equation~\eqref{eq:hadamard-vector} already proves \eqref{eq:hadamard}: the two probabilities sum to $w_\ell$ and their difference is the real zero-input matrix element. It remains to charge the local and off-band approximation errors.

On $I_\ell$, the contribution to $|A_\ell-\widetilde A_\ell|$ is at most $G_\ell\eta_\ell w_\ell$. Off-band, global boundedness of $q_\ell$ gives the summand $\lambda_i m_\ell(\lambda_i)(|g(\lambda_i)|+G_\ell)$. Summing yields \eqref{eq:local-error}.

\subsection{A constructive smooth-cutoff compiler for entropy profiles}\label{app:smooth-compiler}

The composite construction above applies to the admissible analytic class. For the entropy profiles there is also a direct even-polynomial construction, useful for making effective generation explicit. We give it on the same local interval $I_\ell=[\delta_\ell/2,2\delta_\ell]\cap[0,1]$; it produces the polynomial of Lemma~\ref{lem:compiler} with the same access and readout normalization.

We construct the cutoff from the standard smooth step
\begin{equation}\label{eq:smooth-bump}
 b(t)=\begin{cases}e^{-1/t},&t>0,\\0,&t\le0,\end{cases}
 \qquad u(t)=\frac{b(t)}{b(t)+b(1-t)}.
\end{equation}

The denominator is positive everywhere, so $u$ is smooth, equals zero on $(-\infty,0]$, and equals one on $[1,\infty)$. We form $\chi_\delta$ as a product of two rescaled copies, equal to one on $[\delta/2,2\delta]$ and zero outside $[\delta/4,4\delta]$. After enlarging the envelope by a fixed-order constant, the even function

\begin{equation}\label{eq:smooth-profile}
 h_\delta(x)=\chi_\delta(|x|)g(|x|)/G_\delta,
 \qquad h_\delta(0)=0
\end{equation}

has norm at most $1/4$ on the real line. We take $G_\delta=c_\alpha\delta^{\alpha-1}$ for powers and $G_\delta=c(1+\log(1/\delta))$ for the logarithm. For this classical construction, the logarithm is evaluated at positive arguments up to $4\delta$, although the density spectrum remains in $[0,1]$. Since the cutoff support stays away from zero, the same construction also covers $0<\alpha<1$.

To control the degree needed for a uniform polynomial approximation, we establish the derivative estimate
\begin{equation}\label{eq:smooth-derivatives}
 \|h_\delta^{(k)}\|_\infty
 \le K_g(C_g/\delta)^k(k!)^2,\qquad k\ge0.
\end{equation}

The estimate follows by tracking derivatives through the cutoff construction. Cauchy's formula for $e^{-1/z}$ on a circle of radius $t/2$ around $t>0$ gives $|b^{(k)}(t)|\le k!(2/t)^k e^{-2/(3t)}$. Maximizing $t^{-k}e^{-2/(3t)}$ bounds this by $C^k(k!)^2$. On $[0,1]$, the denominator of $u$ is bounded below by a positive constant. Its reciprocal remains in the same derivative class, as we see by differentiating the reciprocal identity and using $\binom{k}{j}(j!)^2((k-j)!)^2\le(k!)^2$ in the product-rule convolution. A larger geometric constant absorbs the sum. Rescaling the cutoff contributes $\delta^{-k}$, and the normalized logarithm and fixed powers on $[\delta/4,4\delta]$ satisfy the same estimate. A final product-rule convolution proves \eqref{eq:smooth-derivatives}; the even extension introduces no singularity because the cutoff vanishes near zero.

We expand $h_\delta$ in Chebyshev polynomials, or equivalently expand $h_\delta(\cos\theta)$ in a cosine series. The chain-rule expansion preserves the derivative bound after changing $C_g$: derivatives of cosine are bounded by one, and the partition coefficients are absorbed by the factorial-squared factor. Integrating the $n$th Fourier coefficient by parts $k$ times gives the bound $K_g(C_g/(n\delta))^k(k!)^2$. Choosing $k$ proportional to $\sqrt{n\delta}$ yields

\begin{align}\label{eq:smooth-tail}
 |c_n|&\le C_g e^{-c_g\sqrt{n\delta}},\nonumber\\
 \sum_{n>D}|c_n|&\le\frac{C_g}{\delta}(1+\sqrt{D\delta})
                  e^{-c_g\sqrt{D\delta}}.
\end{align}

The second inequality follows by comparison with an integral under the substitution $v=\sqrt{x\delta}$. A degree $D=O(\delta^{-1}\log^2(1/(\delta\eta)))$ makes the tail at most $\eta/2$. Only even coefficients occur, and we compute them with total absolute error at most $\eta/2$. Since $|T_n(x)|\le1$ on $[-1,1]$, the resulting polynomial approximates $h_\delta$ uniformly to error $\eta$ and has norm below $1/2$ for $\eta<1/8$. It therefore satisfies the local and global requirements of Lemma~\ref{lem:compiler}.

One finite coefficient algorithm integrates the smooth cosine integrand on a uniform mesh, using \eqref{eq:smooth-derivatives} to bound the quadrature error. We refine the mesh and arithmetic until each of the at most $D+1$ coefficient enclosures has width below $\eta/[2(D+1)]$. This construction has classical work polynomial in degree and inverse tolerance; the analytic composite construction above supplies the stronger precision-bit-length guarantee required by Section~\ref{sec:local-condition}. Neither preprocessing cost is an oracle query. Phase factorization uses the global norm margin~\cite{Haah19}, and the readout still has block $q_\ell/2$, so both constructions give the same joint-probability formula.

\section{Calibrated Variable-Time Estimation and Probability Recovery}\label{app:vt}

We now derive the variable-time estimator used for a selected spectral band. The construction follows the variable-time amplification principle of Ambainis~\cite{Ambainis12} and the gain-tracking approach of Chakraborty, Gily\'en, and Jeffery~\cite{CGJ19}. Our purpose is to express the required procedure in terms of ordinary amplitude estimation, with explicit amplification choices and a finite query bound. We first obtain a relative probability estimate from a known positive lower bound, then recover the product of the actual amplification gains, and finally bound the cost of every execution. This derivation proves Theorem~\ref{thm:calibrated} and the band guarantee in Proposition~\ref{prop:band-interface}.

\subsection{Ordinary relative estimation with a finite search}\label{app:ordinary-relative}

We begin with a circuit $A$ that prepares a success event of probability $b\ge b_*>0$. Ordinary amplitude estimation with integer parameter $M$ returns $\widetilde b\in[0,1]$ satisfying~\cite[Theorem~12]{BHMT02}

\begin{equation}\label{eq:cal-bhmt}
 |\widetilde b-b|\le 2\pi\sqrt{b(1-b)}/M+\pi^2/M^2
\end{equation}

with probability at least $8/\pi^2$. We use an implementation with at most $8(M+1)$ calls to $A$ and $A^\dagger$ per repetition, including the calls inside each Grover iterate. For an odd number $r\ge8\log(1/\zeta)$ of independent repetitions, Hoeffding's inequality~\cite{Hoeffding63} bounds the median failure probability by $\zeta$.

To find the probability scale, we try $M=1,2,4,\ldots,2^J$, where $J=\lceil\log_2(32\pi/\sqrt{b_*})\rceil$, and accept a median as soon as it is at least $64\pi^2/M^2$. Exhausting this finite search returns a failure flag. On an accurate estimate, the quantity $x=M\sqrt b$ must satisfy $x\ge7\pi$ at acceptance; otherwise $M^2\widetilde b\le(x+\pi)^2<64\pi^2$. Hence an accepted median lies in $[b/2,2b]$. Conversely, $x\ge16\pi$ forces acceptance, so on simultaneous success the search ends before $M$ reaches $32\pi/\sqrt b$. We then obtain logarithmic relative error $|\log(\widehat b/b)|\le\eta$ by taking a final median with

\begin{equation}\label{eq:cal-final-M}
 M_{\mathrm{fin}}=\left\lceil\frac{64\pi}{\eta\sqrt{\widetilde b}}\right\rceil,
 \qquad 0<\eta<1/10.
\end{equation}

Substitution in \eqref{eq:cal-bhmt} proves the logarithmic guarantee with slack. The geometric sum of the coarse lengths and the final length uses at most $10^4r/(\eta\sqrt b)$ preparation calls. The value $b_*$ bounds the number of medians and their prescribed search lengths on every transcript; an invalid accepted value is rejected before taking a square root or dividing.

\subsection{Monotone stages and the small-target schedule}

We consider a preparation of cost $d_0$, followed by $m\ge1$ reversible stages $V_j$ of costs $d_j$. Each stage appends fresh records and separates histories that have not failed from permanently failed histories. Later stages preserve failure, while a stopped success remains in the not-failed subspace and is otherwise left unchanged. We write $a_j$ for the not-failed probability of the original, unamplified prefix and $\mathcal H_j$ for its probability-weighted cumulative cost:

\begin{equation}\label{eq:cal-stage-masses}
 \begin{gathered}1=a_0\ge a_1\ge\cdots\ge a_m=p\ge p_*>0,\\
 \mathcal H_j=d_0+\sum_{i=1}^j d_i\sqrt{a_{i-1}}.\end{gathered}
\end{equation}

The not-failed projectors are determined by explicit flags, so their reflections require no knowledge of the conditional system state. For a selected band, an early stop outside the target band and rejection by the final Hadamard test are failures; a success in the reference branch is already stopped. We choose a small amplification target and per-stage estimation tolerance as follows:

\begin{equation}\label{eq:cal-target}
 \kappa=\frac1{10^4m},\qquad \eta=\frac{\gamma}{100m},
 \qquad 0<\gamma<1/4.
\end{equation}

We first attenuate the original input with a known coin of active probability $\kappa$, marking its inactive outcome as failed. The resulting preparation $\mathcal A_0$ has not-failed probability $\sigma_0=\kappa$. Given $\mathcal A_{j-1}$, we apply $V_j$ only on its not-failed subspace to obtain the raw preparation $\mathcal B_j$. The preceding relative-estimation procedure supplies an estimate $\widehat b_j$ of its not-failed probability $b_j$. For $j<m$, we choose the largest positive odd integer below the following threshold, with one as the default:

\begin{equation}\label{eq:cal-odd}
 k_j=\max\left\{1,\ \max\left\{k\in2\mathbb N+1:
 k\le\sqrt{\kappa/(4\widehat b_j)}\right\}\right\}.
\end{equation}

When the inner set is empty, we omit its maximum. We apply $(k_j-1)/2$ Grover iterations to $\mathcal B_j$ and call the resulting preparation $\mathcal A_j$. At the final stage $j=m$, we estimate $b_m$ without amplifying it. The two-dimensional amplitude-amplification formula~\cite{BHMT02} gives the exact relations

\begin{equation}\label{eq:cal-exact-stage}
 \begin{gathered}b_j=\sigma_{j-1}\frac{a_j}{a_{j-1}},\qquad \sigma_j=f_{k_j}(b_j),\\
 f_k(b)=\sin^2(k\arcsin\sqrt b).\end{gathered}
\end{equation}

Amplification multiplies the complete not-failed component by one scalar. Its normalized state is therefore the original conditional prefix state, up to a common phase. Induction proves \eqref{eq:cal-exact-stage}, including entanglement with all retained histories. Multiplying the ratios $a_j/a_{j-1}$ then yields

\begin{equation}\label{eq:cal-product}
 p=\frac{b_m}{\kappa}\prod_{j<m}\frac{b_j}{f_{k_j}(b_j)},
 \qquad
 \widehat p=\frac{\widehat b_m}{\kappa}
 \prod_{j<m}\frac{\widehat b_j}{f_{k_j}(\widehat b_j)}.
\end{equation}

The integers $k_j$ in the reported expression are the integers used by the actual circuit. No small-angle approximation replaces the sine in this identity. All auxiliary registers are retained, so each Grover reflection uses a complete prefix and its inverse.

\subsection{Stage invariants and stable probability recovery}

For the following analysis, we condition on all probability estimates so far having logarithmic relative error at most $\eta$. We will prove that every amplified stage then remains within the interval

\begin{equation}\label{eq:cal-invariant}
 \kappa/100\le \sigma_j\le\kappa\quad(0\le j<m),
 \qquad k_j\arcsin\sqrt{b_j}\le2\sqrt\kappa.
\end{equation}

The upper bound begins with $b_j\le\sigma_{j-1}\le\kappa$. If $k_j=1$, the threshold in \eqref{eq:cal-odd} is smaller than three, so $\widehat b_j>\kappa/36$ and $b_j\ge\kappa/40$. If $k_j\ge3$, the selected odd integer lies between one third and one times the threshold, giving $\kappa/40\le k_j^2b_j\le\kappa/3$. The inequalities $\sqrt b\le\arcsin\sqrt b\le\sqrt b(1+2\kappa)$ and $\sin x\ge x(1-x^2/6)$ now imply \eqref{eq:cal-invariant}. With a factor-two allowance at the endpoints, the same bounds hold for every argument between $b_j$ and $\widehat b_j$.

\begin{lemma}[Stable correction]\label{lem:cal-stability}
For fixed odd $k$ and $0<k\theta<\pi/2$, where $\theta=\arcsin\sqrt b$,
\begin{equation}\label{eq:cal-derivative}
 \frac{d\log f_k(b)}{d\log b}=k\tan\theta\cot(k\theta)\in[0,1].
\end{equation}
\end{lemma}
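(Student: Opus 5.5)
We compute the derivative directly and then bound it using elementary trigonometric inequalities on $(0,\pi/2)$. Write $\theta=\arcsin\sqrt b$, so $b=\sin^2\theta$ and $f_k(b)=\sin^2(k\theta)$. Then $\log f_k=2\log\sin(k\theta)$ and $\log b=2\log\sin\theta$. By the chain rule,
\begin{equation*}
 \frac{d\log f_k}{d\log b}
 =\frac{2k\cot(k\theta)\,d\theta}{2\cot\theta\,d\theta}
 =k\tan\theta\cot(k\theta).
\end{equation*}
This is valid because $0<\theta\le k\theta<\pi/2$. On this range both $\sin\theta$ and $\sin(k\theta)$ are positive, and $\theta\mapsto b$ is a smooth increasing bijection, so the derivative in $\log b$ is well defined. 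This gives the identity in \eqref{eq:cal-derivative}.

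Nonnegativity is immediate: for $k\theta\in(0,\pi/2)$ we have $\tan\theta>0$ and $\cot(k\theta)>0$. Oddness of $k$ is not needed here. It only ensures that the amplified amplitude is $\sin(k\theta)$ with the Grover count $(k-1)/2$ used in \eqref{eq:cal-exact-stage}.

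For the upper bound we must show $k\tan\theta\le\tan(k\theta)$ whenever $0<k\theta<\pi/2$ and $k\ge1$. The case $k=1$ is equality. For general $k$, use the convexity of $\tan$ on $[0,\pi/2)$ together with $\tan0=0$. Convexity implies that $x\mapsto\tan(x)/x$ is nondecreasing on $(0,\pi/2)$. Since $\theta\le k\theta$, this gives $\tan\theta/\theta\le\tan(k\theta)/(k\theta)$, which is the required inequality $k\tan\theta\le\tan(k\theta)$. Dividing by $\tan(k\theta)>0$ yields $k\tan\theta\cot(k\theta)\le1$.

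The only step that needs care is the monotonicity of $\tan x/x$. To justify it, differentiate: the derivative is $(x\sec^2x-\tan x)/x^2$, and $x\sec^2x\ge\tan x$ is equivalent to $2x\ge\sin 2x$, which holds for $x\ge0$.

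This lemma is then used with the invariant \eqref{eq:cal-invariant}, which keeps $k_j\theta$ at most $2\sqrt\kappa<\pi/2$ for every argument between $b_j$ and $\widehat b_j$. The mean value theorem in logarithmic coordinates then shows that each correction factor $b/f_{k_j}(b)$ in \eqref{eq:cal-product} inherits logarithmic error at most $\eta$ from the estimate.
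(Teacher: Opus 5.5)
Your proof is correct and follows essentially the same route as the paper. You differentiate in $\theta$ to get $k\tan\theta\cot(k\theta)$, obtain nonnegativity from the signs on $(0,\pi/2)$, and derive the upper bound from $\tan(k\theta)\ge k\tan\theta$ via monotonicity of $\tan x/x$, which you additionally justify through $2x\ge\sin 2x$ where the paper leaves it implicit.
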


\begin{proof}Differentiation gives the displayed expression. It is nonnegative, and $\tan(k\theta)\ge k\tan\theta$ follows from the monotonicity of $\tan x/x$ on $(0,\pi/2)$. Therefore $\log(b/f_k(b))$ is 1-Lipschitz as a function of $\log b$.\end{proof}

\begin{equation}\label{eq:cal-recovery-error}
 \left|\log\frac{\widehat p}{p}\right|
 \le\sum_{j=1}^m|\log(\widehat b_j/b_j)|\le m\eta\le\gamma/100.
\end{equation}

Here Lemma~\ref{lem:cal-stability} is applied with each selected $k_j$ fixed. Exponentiation gives $|\widehat p-p|\le\gamma p/20$, leaving slack for finite arithmetic. The correction factor retains the dependence of its numerator and amplified denominator on the same estimate.

\subsection{Weighted prefix costs and a computable execution cap}

To account for all nested preparations and inverses, we denote the query cost of $\mathcal A_j$ by $q_j$ and its amplitude gain by $g_j=\sqrt{\sigma_j/b_j}$. A complete Grover iteration uses both the preparation and its inverse, so the cost recurrence and its weighted form are

\begin{align}\label{eq:cal-weighted-recurrence}
 q_j&=k_j(q_{j-1}+d_j),\quad q_0=d_0,\nonumber\\
 q_j\sqrt{a_j/\sigma_j}
 &=\frac{k_j}{g_j}(q_{j-1}+d_j)\sqrt{a_{j-1}/\sigma_{j-1}}.
\end{align}

We write $z_j=q_j\sqrt{a_j/\sigma_j}$ and unroll the weighted recurrence. This expresses the cost in terms of the original stage probabilities, rather than the artificially amplified ones:
\begin{align}\label{eq:cal-unrolled}
 z_j={}&\left(\prod_{t=1}^j\frac{k_t}{g_t}\right)
          \frac{d_0}{\sqrt\kappa}\nonumber\\
 &+\sum_{i=1}^j\left(\prod_{t=i}^j\frac{k_t}{g_t}\right)
 d_i\sqrt{a_{i-1}/\sigma_{i-1}}.
\end{align}
The product factors are at most $e^{2m\kappa}<2$. The exact invariant $\sigma_j\ge\kappa/100$ gives the first bound below; certified threshold rounding weakens its constant from $20$ to $40$ using $\sigma_j\ge\kappa/200$. Direct substitution of $b_j=\sigma_{j-1}a_j/a_{j-1}$ gives
\begin{align}\label{eq:cal-raw-expanded}
 \frac{q_{j-1}+d_j}{\sqrt{b_j}}
 &=\frac{q_{j-1}\sqrt{a_{j-1}}+
              d_j\sqrt{a_{j-1}}}{\sqrt{\sigma_{j-1}a_j}}\nonumber\\
 &\le\frac{41\sqrt{200}\,\overline{\mathcal H}}{\sqrt{\kappa p_*}}
 <\frac{1000\overline{\mathcal H}}{\sqrt{\kappa p_*}}.
\end{align}
Thus, in exact arithmetic,
\begin{align}\label{eq:cal-weighted-bound}
 q_j\sqrt{a_j}&\le20\mathcal H_j,\nonumber\\
 \frac{q_{j-1}+d_j}{\sqrt{b_j}}
 &\le\frac{1000\overline{\mathcal H}}{\sqrt\kappa\sqrt{p_*}},
 \qquad \overline{\mathcal H}\ge\max\{1,\mathcal H_m\}.
\end{align}

The second inequality uses $a_j\ge p_*$ and the lower amplified-mass bound. On a successful history, $b_j\ge\kappa p_*/100$ in exact arithmetic; the rounded construction retains a constant-factor version of this bound. We therefore use the smaller common lower bound $b_*$ below in every coarse search. For a complete call with failure allowance $\zeta$, our finite search and query limits are

\begin{align}\label{eq:cal-counter-data}
 b_*&=\kappa p_*/1000,\quad
 J=\left\lceil\log_2(32\pi/\sqrt{b_*})\right\rceil,\nonumber\\
 r&=2\left\lceil4\log\frac{4m(J+2)}{\zeta}\right\rceil+1,\nonumber\\
 K_{\mathrm{call}}&=\left\lceil
 \frac{2^{30}mr\overline{\mathcal H}}{\eta\sqrt\kappa\sqrt{p_*}}
 \right\rceil.
\end{align}

There are at most $m(J+2)$ medians. Conditional on previous good estimates, each median fails with probability at most $\zeta/[4m(J+2)]$. Their conditional union bound is at most $\zeta/4$. The preparation-call bound following \eqref{eq:cal-final-M}, multiplied by \eqref{eq:cal-weighted-bound} and summed over $m$ stages, is strictly below $K_{\mathrm{call}}$. The resulting coefficient is $10^7<2^{30}$, with slack for interval arithmetic. Initial preparations, controlled inverses, coarse trials, and confidence repetitions are all included. These counts give the displayed executable cap.

We check the query count of each requested block before executing it, evaluating the recurrence $q_j=k_j(q_{j-1}+d_j)$ with saturation at $K_{\mathrm{call}}+1$. A block whose cost exceeds the remaining budget is rejected without execution. We likewise reject invalid probabilities and searches that exhaust $J$. The preceding analysis shows that these rejections do not occur on the successful-estimation event, while the same counter bounds every other transcript. This is a bound on oracle use, distinct from the classical arithmetic cost. The resulting relative estimator uses

\begin{equation}\label{eq:cal-relative-cost}
 O\!\left(\frac{m^{5/2}r\overline{\mathcal H}}{\gamma\sqrt{p_*}}\right)
\end{equation}

queries. The factors $m,J,r$ are logarithmic or polylogarithmic in the entropy parameters for the circuits used here. Classical outcomes select later complete preparations; no measurement interrupts a coherent prefix.

\subsection{A complete calibration call}\label{app:calibration-call}

We now collect the preceding constructions into one calibration call. Its classical inputs are the stage descriptions, exact query counts $d_j$, a lower success bound $p_*$, an upper bound $\overline{\mathcal H}$ on the weighted cost, a relative tolerance $\gamma$, and a failure allowance $\zeta$. These inputs determine the coarse-search schedule and execution limits before the first quantum call; the measured probabilities determine the final estimation lengths and the permitted amplification counts used.

From these inputs we set $\kappa,\eta,J,r$, and $K_{\mathrm{call}}$ as above. We initialize the attenuated preparation $\mathcal A_0$ and record each integer amplification choice when it is selected. At stage $j$, this record determines the complete raw preparation $\mathcal B_j$ and its inverse. The finite relative-estimation procedure of Appendix~\ref{app:ordinary-relative} then supplies $\widehat b_j$ with tolerance $\eta$ and the allocated failure probability.

For $j<m$, we choose the odd integer $k_j$ by \eqref{eq:cal-odd}, construct the corresponding amplified preparation, and retain its full description for later calls and inverses. The final stage is not amplified. We then evaluate the exact product estimator \eqref{eq:cal-product} from the measured probabilities and the selected integers, using the reserved arithmetic precision. The query counter is checked before each charged block; a failed validity check or an attempted call beyond $K_{\mathrm{call}}$ returns a failure flag. Thus even an unsuccessful calibration has a predetermined finite cost.

The controller stores circuit descriptions and classical estimates; it does not store a quantum copy of the unknown conditional state between statistical calls. Every invocation of $\mathcal B_j$ starts with all inputs in their known zero states and executes the recorded sequence. Its inverse reverses that sequence. A median therefore samples repeated executions of one fixed circuit, conditional on the preceding classical transcript. This conditional viewpoint is what permits adaptive amplification choices without an independence assumption between stages.

A stopped success is included in every later not-failed projector. This convention is essential for probability recovery. The reference-success amplitude is physically idle but is amplified together with every other not-failed amplitude. It is therefore charged by the $\sqrt{B_\ell}$ term in every prefix weight. A stopped failure is excluded from every later not-failed projector and can never be restored by a physical stage. Grover amplification can rotate between total good and bad components, but its good component remains the same conditional prefix state; it does not redefine which histories a subsequent stage accepts.

\subsection{Specialized estimation theorem}
\begin{theorem}[Calibrated monotone stages]\label{thm:calibrated}
Consider the monotone reversible stages of \eqref{eq:cal-stage-masses}, with known $p_*>0$ and $\overline{\mathcal H}\ge\max\{1,\mathcal H_m\}$. For $0<\gamma<1/4$ and $0<\zeta<1/4$, the calibration procedure of Appendix~\ref{app:calibration-call}, with the certified arithmetic of Appendix~\ref{app:budgets}, returns a relative-$\gamma$ estimate of $p$ with probability at least $1-\zeta$. It uses at most $K_{\mathrm{call}}$ oracle queries on every transcript, with the integer parameters in \eqref{eq:cal-counter-data}. Its query scaling is \eqref{eq:cal-relative-cost}.
\end{theorem}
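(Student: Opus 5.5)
The plan is to assemble Theorem~\ref{thm:calibrated} from the pieces already built in this appendix, in three parts: correctness on a good event, failure probability of that event, and a transcript-independent query cap. \emph{Good event.} Define $\mathcal G$ as the event that every median produced by the finite search of Appendix~\ref{app:ordinary-relative} (coarse trials and final lengths, over all $m$ stages) is accurate in the sense of \eqref{eq:cal-bhmt}. On $\mathcal G$, each accepted coarse value lies in $[b_j/2,2b_j]$, and each final estimate has logarithmic relative error at most $\eta=\gamma/(100m)$. The coarse values fall in this window provided the true $b_j$ stays above $b_*$, and I will verify that inductively along the stages.

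\emph{Correctness on $\mathcal G$.} I proceed by induction on $j$. Assuming all earlier estimates are $\eta$-accurate, the choice \eqref{eq:cal-odd} yields the invariant \eqref{eq:cal-invariant}, namely $\kappa/100\le\sigma_j\le\kappa$ and $k_j\theta_j\le2\sqrt\kappa<\pi/2$. This gives $b_{j+1}=\sigma_j a_{j+1}/a_j\ge\kappa p_*/100>b_*$, so the next search terminates before exhausting $J$, and no validity rejection occurs. Next, the exact identity \eqref{eq:cal-product} expresses $p$ through the true $b_j$ and the \emph{actually used} integers $k_j$. Lemma~\ref{lem:cal-stability} applies because $k_j\theta<\pi/2$ holds on the whole segment between $b_j$ and $\widehat b_j$; this is the factor-two endpoint allowance noted after \eqref{eq:cal-invariant}. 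It makes each correction term $\log(b/f_{k_j}(b))$ $1$-Lipschitz in $\log b$. Summing as in \eqref{eq:cal-recovery-error} gives $|\log(\widehat p/p)|\le\gamma/100$, hence $|\widehat p-p|\le\gamma p/20$. The remaining slack up to $\gamma p$ absorbs the certified finite arithmetic of Appendix~\ref{app:budgets}.

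\emph{Failure probability.} Each median uses $r$ repetitions with $r$ as in \eqref{eq:cal-counter-data}. Applying Hoeffding's inequality conditionally on the classical transcript, each median fails with probability at most $\zeta/[4m(J+2)]$. There are at most $m(J+2)$ medians, so a conditional union bound gives $\Pr[\neg\mathcal G]\le\zeta/4<\zeta$. The conditioning matters because later circuits depend on earlier outcomes; each median is therefore analysed as repeated runs of one fixed circuit given the past.

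\emph{Query cap and scaling.} This is the main obstacle: the cost of the amplified preparations must be charged to the original masses $a_j$ and not to the amplified ones. I will use the weighted recurrence \eqref{eq:cal-weighted-recurrence} and its unrolled form \eqref{eq:cal-unrolled}. The gain products satisfy $\prod k_t/g_t\le e^{2m\kappa}<2$, which comes from the sine expansion and the bound $k_t\theta_t\le2\sqrt\kappa$. Together with $\sigma_j\ge\kappa/100$, this gives \eqref{eq:cal-weighted-bound}. Hence each raw preparation $\mathcal B_j$ costs at most $1000\overline{\mathcal H}\sqrt{b_j}/\sqrt{\kappa p_*}$.

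Multiplying by the per-stage call count $10^4r/(\eta\sqrt{b_j})$ cancels $\sqrt{b_j}$. Summing over the $m$ stages yields a total strictly below $K_{\mathrm{call}}$, which proves the count on $\mathcal G$. Off $\mathcal G$, the saturating counter rejects any block that would exceed the budget before it is executed, so every transcript uses at most $K_{\mathrm{call}}$ queries. Finally, substituting $\kappa=1/(10^4m)$ and $\eta=\gamma/(100m)$ into $K_{\mathrm{call}}$ gives $O(m^{5/2}r\overline{\mathcal H}/(\gamma\sqrt{p_*}))$, which is \eqref{eq:cal-relative-cost}.
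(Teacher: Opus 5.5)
Your proposal follows essentially the same route as the paper's proof. It runs an induction over stages on the event that all $m(J+2)$ medians succeed, uses the conditional union bound for the $\zeta/4$ failure term and the fixed-integer Lipschitz recovery, and obtains the deterministic cap $K_{\mathrm{call}}$ from the unrolled weighted recurrence together with the pre-execution counter. The paper additionally reserves a second $\zeta/4$ for the finite-synthesis distributional transfer (Lemma~\ref{lem:adaptive-transfer}) and lets certified threshold rounding weaken the invariant to $\sigma_j\ge\kappa/200$. You omit both steps, but both fit within your existing slack, since $\kappa p_*/200>b_*$ and $\zeta/4+\zeta/4<\zeta$.
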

\begin{proof}
At the first stage, the attenuated preparation has not-failed mass $\kappa$. Suppose all preceding coarse and final medians satisfy their amplitude-estimation guarantees. The next raw mass is at least $\kappa p_*/200$, including the slack for an interval-rounded odd choice. In particular it exceeds the supplied common lower bound $b_*$. The coarse search terminates within its preset number of trials and returns a constant-factor scale. The final median has logarithmic relative error at most $\eta$, with margin for representing its output. This establishes the induction hypothesis for the next stage.

Conditional on any such preceding transcript, the ordinary median guarantee holds with its allocated failure probability. A union bound over at most $m(J+2)$ medians shows that all these guarantees hold with probability at least $1-\zeta/4$. The chosen odd integers preserve the small-angle interval on this event. The exact state relation \eqref{eq:cal-exact-stage} and the fixed-integer Lipschitz estimate therefore apply even though the integers were chosen adaptively. Equation~\eqref{eq:cal-recovery-error} bounds the logarithmic recovery error; the finite termwise evaluations and final exponential bring it to less than the requested $\gamma$ relative error.

The unrolled recurrence bounds every requested preparation length on the same event. Multiplying its raw cost by the ordinary estimation count and summing over stages yields less than $K_{\mathrm{call}}$. Therefore the counter cannot abort on the event just analyzed. On its complement, checking each circuit block before its execution ensures the same deterministic query cap, irrespective of the values returned by failed medians. A reported failure symbol is included in the unsuccessful probability.

Finally, finite circuit synthesis is compared with the complete ideal controlled execution, using the cap and the slot precision in Appendix~\ref{app:budgets}. Reserving at most $\zeta/4$ for this distributional transfer still leaves total unsuccessful probability below $\zeta$. Thus the stated guarantee covers a finite circuit and finite classical arithmetic, rather than an ideal real-arithmetic procedure alone.
\end{proof}

The theorem separates the data needed to run the algorithm from the probabilities used in its analysis. Its inputs include stage descriptions, a positive lower success bound, and an upper bound on the weighted cost. Ordinary amplitude estimation supplies the numerical choices made during execution. The true conditional masses $a_j$, amplified masses $\sigma_j$, and raw probabilities $b_j$ enter the proof. We use the same distinction in the functional estimator.

\subsection{From staged calibration to the band cost}

For either sign, the success projector of the augmented preparation is

\begin{equation}\label{eq:success-projector}
 \Pi_{\ell,\pm}=\ket1\bra1_d\otimes\ket1\bra1_b
 +\ket0\bra0_d\otimes\ket1\bra1_{\mathsf f_\ell}\otimes\Pi_{H,\pm}.
\end{equation}

Orthogonality gives $\bar p=(B_\ell+p_{\ell,\pm})/2$ and $p_*=B_\ell/2$. The reference success stops physically at setup and remains in every not-failed projector. The abstract initial mass is nevertheless $a_0=1$: reference-branch failure is excluded only at the end of the first classifier stage. Thus the first classifier must be charged in full. For $j\ge1$ the entering mass is at most $(S_j+B_\ell)/2$, and before local readout it is at most $B_\ell$. For the integer segment counts defined in the main text, these entering masses give the bound in \eqref{eq:cal-band-H}.

Since $S_0=1$, we have $c_0\le\sqrt2\,c_0\sqrt{(S_0+B_\ell)/2}$, so the full first-stage charge is compatible with the weighted scale bound. There are $m_\ell=\ell+2$ stages. Substituting $\gamma=e_\ell/(8\beta B_\ell G_\ell)$ in \eqref{eq:cal-relative-cost} and recovering both signed probabilities gives

\begin{equation}\label{eq:cal-band-cost}
 Q_\ell=\Ot\!\left(\frac{G_\ell\sqrt{B_\ell}}{e_\ell}
                 \overline{\mathcal H}_\ell\right).
\end{equation}

\begin{proof}[Proof of Proposition~\ref{prop:band-interface}]
The known-success branch gives \eqref{eq:reference}, allowing Theorem~\ref{thm:calibrated} to apply even when the real success probability is zero. Recovering each sign incurs error at most $2\gamma_\ell B_\ell$, so after multiplication by $\beta G_\ell$ the two errors sum to $e_\ell/2$. The remaining allowance covers arithmetic. The reference success remains in every not-failed projection and is therefore included in each weight of \eqref{eq:cal-band-H}. Finally, when the level is skipped, $|\widetilde A_\ell|\le B_\ell G_\ell\le e_\ell$, which proves the zero-output case.
\end{proof}

To relate the weighted cost to stopping times, we write $P_j$ for the actual real-branch probability of reaching stage $j$. Along any path, the sum of squared segment lengths is no larger than the square of the total length. Averaging this inequality gives $\sum_jc_j^2P_j\le T_{2,\ell}^2$, after which Cauchy--Schwarz bounds $\sum_jc_j\sqrt{P_j}$ by $\sqrt m\,T_{2,\ell}$. The reference success is different: it persists through calibration and contributes at most $\sqrt{B_\ell}T_{\max,\ell}$. Replacing the actual probabilities by certified upper bounds yields the computable stopping-time bound used in the main theorem, namely,

\begin{equation}\label{eq:T2-proof}
 T_{2,\ell}^2\le\Ot\!\left(1+\sum_{j=0}^\ell S_j\delta_j^{-2}\right)
       +O(B_\ell(C_\ell^{\mathrm{loc}})^2).
\end{equation}

Equations~\eqref{eq:cal-band-H}, \eqref{eq:cal-band-cost}, and~\eqref{eq:T2-proof} prove \eqref{eq:general-vt-cost}. For the dyadic rank profile, splitting at $\delta=1/R$ gives

\begin{equation}\label{eq:survival-sum}
 1+\sum_{j\le\ell}S_j\delta_j^{-2}=O(S_\ell\delta_\ell^{-2}).
\end{equation}

Before the split the summands grow as $\delta_j^{-2}$; afterwards they grow as $R/\delta_j$. Both sums are controlled by their last terms. Alternatively, summing $c_j\sqrt{S_j}$ directly gives $\Ot(\sqrt{S_\ell}/\delta_\ell)$. Since $B_\ell\le S_\ell$, the reference and local-readout terms obey the same bound. This proves \eqref{eq:rank-cost}, including the preparation term. With $S_j=B_\ell=1$ it gives the rank-free form. If $B_\ell G_\ell\le e_\ell$, the level is skipped before invoking relative estimation. A zero true success probability causes no exception, since its padded probability is $B_\ell/2$.

\subsection{Zero probabilities and extreme mass separation}

These bounds also describe the limiting mass configurations without modifying the circuit. If the physical success probability vanishes, the known-success branch still leaves probability $B_\ell/2$, so the relative estimator and clipping rule of Proposition~\ref{prop:band-interface} remain valid.

If $B_\ell$ is much smaller than $S_\ell$, the known-success branch gives a small final scale without reducing the real prefix workload. The bound therefore contains $\sqrt{S_\ell B_\ell}$, not $B_\ell$. At the opposite extreme $B_\ell=S_\ell$, both pieces of the mass bound coincide and the rank-sensitive cost becomes linear in that cap. At $B_\ell=1$ all calibration parameters remain valid. A zero mass bound causes the band to be omitted, with its deterministic errors still charged; relative estimation is used only for positive mass bounds.

\section{Global Error Analysis and Implementation Precision}\label{app:budgets}
\subsection{Error composition and allocation}
On the event that all estimators succeed, both estimated and skipped layers satisfy $|\widehat A_\ell-\widetilde A_\ell|\le e_\ell$. Equations~\eqref{eq:local-error} and \eqref{eq:complete} imply
\begin{equation}
 |\widehat F-F_g|\le \mathcal T_g+
 \sum_{\ell\in\mathcal J}(e_\ell+G_\ell\eta_\ell B_\ell+\mathcal L_{g,\ell}).
\end{equation}
We allocate $\varepsilon_F/8$ to each of the tail, total leakage, and total compiler error, and $E=\varepsilon_F/4$ to the layer errors. The remaining tolerance accommodates classical rounding. Synthesis accuracy is treated separately because it changes the distribution of the output rather than the value of a fixed statistical record. For every executed local transform, we choose
\begin{equation}\label{eq:eta-budget}
 \eta_\ell\le\min\left\{\frac1{20},
       \frac{\varepsilon_F}{8(L+1)G_\ell B_\ell}\right\}.
\end{equation} A band with $G_\ell B_\ell=0$ is omitted before this division, since its local transform is unnecessary.
The sum of compiler errors is then at most $\varepsilon_F/8$.

The case $\mathcal J_+=\varnothing$ is handled in Section~\ref{sec:master}. Otherwise, Cauchy--Schwarz on the positive-coefficient set gives
\begin{equation}
 \left(\sum_{\ell\in\mathcal J_+}\sqrt{a_\ell}\right)^2
 \le\left(\sum_{\ell\in\mathcal J_+} a_\ell/e_\ell\right)\left(\sum_{\ell\in\mathcal J_+} e_\ell\right).
\end{equation}
Equality holds at \eqref{eq:allocation}, proving the optimal allocation for this cost model. On every active layer $B_\ell G_\ell/e_\ell>1$, so the cost includes at least one execution of its spectral prefix and compiler. Skipped layers require no such execution. Summing the appropriate layer costs proves \eqref{eq:master-cost}.

\subsection{Errors in a complete entropy call}

The deterministic error budget and the probability budget have different roles. Tail truncation, off-band functional leakage, local polynomial approximation, and skipped contributions bound the value of the ideal estimator on its statistical success event. Synthesis error instead changes the distribution of that event. We control this change by comparing the complete output distributions, which also includes nonlinear probability recovery and adaptive decisions.

For a variable-time band, we divide its statistical allowance between the two Hadamard signs. Each sign estimates $\bar p$ within relative error $e_\ell/(8\beta B_\ell G_\ell)$. After subtraction of the reference contribution, this gives probability error at most $e_\ell/(4\beta G_\ell)$. Multiplication by $\beta G_\ell$ and addition of the two sign errors therefore use at most $e_\ell/2$. The remaining half accommodates certified arithmetic and is consistent with the zero-output rule for skipped bands. The local approximation error is charged separately, although its bound also contains $G_\ell$.

For an adaptive sequence of moment calls, we assign each call a failure allowance in advance, using its index and the deterministic upper bound on the number of calls. Conditional on all earlier calls succeeding, the next input promise holds, so its estimation theorem applies. Summing these conditional allowances bounds the probability of the first failure. If an earlier promise has failed, the query counter and outer iteration limit still ensure termination; the correctness argument relies only on histories for which the promises remain valid.

\subsection{Bounded execution and adaptive failure budgets}

We assign $\nu_{\mathrm{call}}=\nu/[8(L+1)]$ to each sign of each active level. The calibration parameters are known before execution: $p_*=B_\ell/2$, $m=\ell+2$, the integer segment lengths, and the weighted cost bound in \eqref{eq:cal-band-H}. Together with the search and median parameters in \eqref{eq:cal-counter-data}, these determine the execution limit below. The controller records the oracle count of every compiled prefix and inverse, and checks the remaining allowance before starting a requested block.

\begin{equation}\label{eq:explicit-cap}
 H_\ell=K_{\mathrm{call}}
 \quad\text{with }p_*=B_\ell/2,\quad \zeta=\nu_{\mathrm{call}}.
\end{equation}
\begin{equation}\label{eq:cap}
 H_\ell=\Ot\!\left[\gamma_\ell^{-1}
 \left(T_{\max,\ell}+(T_{2,\ell}+1)/\sqrt{B_\ell}\right)\right].
\end{equation}

Equation~\eqref{eq:explicit-cap} is the executable integer bound; \eqref{eq:cap} summarizes its scaling. The counter is checked before a requested block. Appendix~\ref{app:vt} proves that a counter abort is impossible when its relative estimates succeed. Fixed-time estimation instead has its prescribed amplitude-estimation lengths and median repetitions. Summing the caps of all signs gives a deterministic bound $\overline Q$ on every transcript, including failed transcripts. The conditional union bound over levels is at most $\nu/4$; no independence between adaptive choices is required.

\subsection{Finite arithmetic and threshold decisions}

We evaluate \eqref{eq:cal-product} in the logarithmic domain. Each summand is $\log\widehat b_j-\log f_{k_j}(\widehat b_j)$, which is exactly zero when $k_j=1$. We compute each other summand to error at most $\gamma/(100m)$ and the final exponential to relative error at most $\gamma/100$. Together with \eqref{eq:cal-recovery-error}, these choices give the requested relative tolerance. The lower bound $b_*$ and the small-angle interval keep the arguments a known distance from singular denominators. The controller rejects records outside these intervals before performing a division.

We evaluate thresholds with certified intervals rather than exact real comparisons. For the odd-integer choice, we refine the relevant interval to width at most $1/4$ and select the smaller neighboring odd integer if it straddles a boundary. This can weaken the lower bound in \eqref{eq:cal-invariant} to $\kappa/200$; the constants $1000$ and $2^{30}$ in the execution bounds include this slack. For coarse search, we compute an interval of error at most $\pi^2/M^2$ and accept only when its lower endpoint exceeds the threshold. Acceptance still certifies a constant-factor estimate, while $M\sqrt b\ge16\pi$ still guarantees acceptance. These interval rules make all decisions finite without changing the query powers.

For a pilot call, we round the moment estimate before comparing it with $s/4$ and include this rounding in the additive tolerance $s/16$. The dyadic updates of $s$ are exact. A certified lower enclosure for $R^{1-\alpha}$, refined as described in Appendix~\ref{app:renyi}, and the fixed iteration limit bound all unsuccessful histories. We allocate at most $\varepsilon_F/8$ to final summation and functional rounding, in addition to the entropy-conversion allowance in Appendix~\ref{app:renyi}. Projection of recovered probabilities onto $[0,B_\ell]$ cannot increase their error.

\subsection{Choosing precision without a circular resource bound}

We next choose the accuracy of each synthesized unitary. Since \eqref{eq:synthesis} compares the complete execution with its ideal counterpart, this accuracy depends on the total number of approximate-unitary slots. We therefore bound the slot count as a function of the precision and solve for a finite choice before execution. We first hold the semantic parameters fixed: the cutoff, local approximation tolerance, classifier error, positive mass bounds, and confidence. We choose $X\ge2$ to dominate their positive values and reciprocals, the numbers of levels and pilot calls, and the effective description lengths of the supplied profile bounds. For the entropy instances, each of these quantities is polynomially bounded in the displayed problem parameters at fixed order.

The prescribed phase-estimation powers, polynomial degrees, median lengths, and Grover recurrences give constants $c_0,c_1,c_2$ for which the number of slots at $b$ accuracy bits is at most $c_0X^{c_1}(1+b)^{c_2}$. These constants bound the circuit description before execution. We select the first positive integer $b$ satisfying the following inequality and use that precision throughout the call:

\begin{equation}\label{eq:precision-closure}
 2^{-b}c_0X^{c_1}(1+b)^{c_2}\le\nu/[64(L+1)].
\end{equation}

The left-hand side tends to zero exponentially, so this finite search terminates. Taking logarithms shows $b=O(\log X+\log(1/\nu))$ after increasing the constant to absorb $c_2\log(1+b)$. Substituting the selected precision into the simulation and compiler degrees therefore contributes only logarithmic factors. The resulting slot count is an actual computable bound for all transcripts. Oracle queries themselves are exact resources in this model; $b$ controls the known operations that surround them.

This general argument establishes finite precision. The entropy instances use the explicit $M(b)$ and integer search in Section~\ref{app:explicit-resources}. Classical preprocessing and non-query resources are accounted for at the end of this appendix.

\begin{lemma}[Counting approximate modules]\label{lem:module-count}
For the circuit templates in \eqref{eq:explicit-module-costs}, the $M(b)$ of \eqref{eq:explicit-total-queries} bounds every executed approximate module, including modules with zero input queries.
\end{lemma}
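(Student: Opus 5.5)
The plan is a structural induction over the circuit templates in \eqref{eq:explicit-module-costs}, tracking two counts at once. The first is oracle queries. The second is approximate-unitary slots, meaning synthesized modules with nonzero synthesis error, including those with zero input queries: the reference coin, the attenuation coin, fixed-gap window phases, Fourier transforms, majority and flag arithmetic, and reflections about clean inputs.

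I first fix the elementary modules at precision $b$ bits. These are one controlled block-Hamiltonian simulation, one QSVT phase gate, one coin rotation, one reflection, and one Boolean update. For each, the template in \eqref{eq:explicit-module-costs} gives a bound on the number of synthesized gates that is polynomial in $b$, with no dependence on the unknown state. I then compose these bounds upward through the hierarchy.

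\textbf{Classifier stage.} A classifier stage uses $k_\xi$ phase-estimation registers with $N$ controlled powers each, plus the majority and flag gates. Its slot count is therefore explicit in $\delta_j$, $\xi$ and $b$.

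\textbf{Band prefix.} A band prefix sums these slot counts over $j\le\ell$ and adds the two-stage compiler. The compiler contributes at most $\deg(P_\ell)\deg(r_\ell)$ phase slots plus the parity selector and the Hadamard gates.

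\textbf{Amplified preparation.} An amplified preparation $\mathcal A_j$ obeys the same recurrence $q_j=k_j(q_{j-1}+d_j)$ as the query cost, but with each $d_j$ replaced by the slot count of stage $j$ plus a constant for the reflections. Every Grover iterate contains one forward copy, one inverse copy, and two reflections, each built from modules already counted.

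The key observation is that every oracle-free module is executed only inside a block whose query count the controller already charges. The exceptions are the reference coin and the attenuation coin, each of which appears once per preparation. Consequently the slot count is at most a constant multiple of the query count, plus one slot per preparation call. The preparation-call count is itself bounded by $K_{\mathrm{call}}$ for each call.

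At the top level, the counter enforcing $K_{\mathrm{call}}$ for each sign and level, together with the fixed pilot limit $N_{\mathrm{pilot}}$ and the final-call budgets, bounds every transcript, including failed ones. A block is rejected before execution whenever it would exceed the remaining budget. Summing the per-call slot bounds over signs, levels and calls therefore gives the quantity $M(b)$ of \eqref{eq:explicit-total-queries}, provided that expression was formed with these per-module multipliers. I would check term by term that each template's multiplier appears in $M(b)$.

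\textbf{Main obstacle.} The hardest step is the zero-query modules. These include the coins, the modules for the $d=1$ branch, the synthesized phases of $r_\ell$ when $\delta_\ell\ge1/4$ (where the linear polynomial is used), and the reflection about the success projector \eqref{eq:success-projector}. Such modules are not protected by the query counter, so a naive charge to oracle calls could miss them. I will handle this by attaching each of them to the enclosing preparation call, which the counter does count, and by adding an explicit additive term per preparation in $M(b)$. I will also verify that aborted transcripts execute no module beyond the last block the counter checked.
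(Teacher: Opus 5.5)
Your proposal is correct in substance and shares the paper's setup: a module recurrence parallel to $q_j=k_j(q_{j-1}+d_j)$, equal counts for inverses, and no modules executed by rejected blocks. The final comparison is different.

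\textbf{The paper's route.} The paper divides the two recurrences. From $Z_j=k_j(Z_{j-1}+v_j)$ and $q_{j-1}\ge1$ it obtains $Z_j\le(v_0+\sum_iv_i)\,q_j$, which holds even when some $d_j=0$. The base count $v_0+\sum_iv_i\le C'r_\xi(n+1)^2$ is exactly the prefactor of $M(b)=\lceil Cr_\xi(n+1)^2(\overline Q(b)+1)\rceil$. Amplitude-estimation Fourier modules are charged to the preparation each repetition calls. Summing against the query cap then finishes the proof without inspecting any individual stage.

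\textbf{Your route.} You charge modules locally to queries. If carried out, this gives the sharper bound $O(\overline Q(b))$, so your hedge about how $M(b)$ is formed is moot: $M(b)$ already dominates any bound linear in the queries.

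\textbf{What your key step needs.} The justification ``executed only inside a block whose query count the controller already charges'' does not by itself give a constant ratio. You need bounded multiplicity per stage:
\begin{itemize}
\item the $r_\xi$ Fourier modules of classifier $j$ against its $r_\xi h_j\ge r_\xi$ evolutions, each of which uses at least one query;
\item the constant number of setup coins against $d_0\ge1$;
\item each amplitude-estimation repetition's Fourier modules against the preparation it calls.
\end{itemize}
With $v_j\le c\,d_j$ and $v_0\le c\,d_0+O(1)$, the amplification recurrence gives $Z_j\le c\,q_j+O(P_j)\le c'q_j$. Here $P_j\le q_j$ counts setup executions. These checks do hold for the templates in \eqref{eq:explicit-module-costs}. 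The paper's multiplicative bound avoids them entirely, at the cost of the polylogarithmic factor that $M(b)$ already carries.

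\textbf{Minor corrections.}
\begin{itemize}
\item At $\delta_\ell\ge1/4$ the linear rescaling sits inside the same QSVT module, which does use queries; it is not a zero-query module.
\item Reflections about zero inputs and Boolean flags are exact in the paper and need not be counted. Counting them is harmless.
\item The lemma counts complete modules, not synthesized elementary gates, so your per-gate counts in $b$ play no role. Counting individual QSVT phases also works, since each sequence has at most one more phase than block-encoding calls.
\end{itemize}
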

\begin{proof}
We write $v_0,v_j$ for the numbers of approximate modules in setup and stage $j$, and $Z_j$ for their number in the full amplified preparation. The inverse has the same counts, while reflections about explicit Boolean flags are exact. Starting from $q_0=d_0\ge1$, the preparation recurrences imply
\begin{align}\label{eq:module-recurrence}
 Z_j&=k_j(Z_{j-1}+v_j),\quad q_j=k_j(q_{j-1}+d_j),\nonumber\\
 Z_j&\le\left(v_0+\sum_{i=1}^jv_i\right)q_j,\nonumber\\
 v_0+\sum_i v_i&\le C\left[r_\xi\sum_{j=0}^L(h_j+1)+n+1\right]
 \le C' r_\xi(n+1)^2.
\end{align}
The first bound follows by dividing the two recurrences and using $q_{j-1}\ge1$, even when $d_j=0$. Each classifier contributes $r_\xi h_j$ evolution and $r_\xi$ Fourier modules; local QSVT and coins contribute the remaining terms. Ordinary amplitude estimation adds a constant number of Fourier modules per repetition, each of which executes a query-nonzero preparation. Summing against the query cap proves the claim for either statistical route. Rejected blocks add no executed modules. A degree-$D$ QSVT module allocates its error to $O(D)$ phases, requiring $O(b+\log D)$ phase bits without increasing its oracle count. This lemma counts complete modules, not elementary gates.
\end{proof}

\subsection{Finite synthesis precision}
For the capped algorithm, let $\overline M$ bound the number of approximate-unitary slots on every transcript, including uses inside amplification, inverses, and confidence repetitions. A slot is one simulated evolution, polynomial transform, or reference rotation; exact input-oracle queries contribute no synthesis error. The circuits and execution limits give a computable $\overline M$, polynomial in the query budget and precision bit lengths. We approximate each slot to operator-norm error at most
\begin{equation}\label{eq:synthesis}
 \delta_{\mathrm{syn}}\le\frac{\nu}{64n\max\{1,\overline M\}},\qquad n=L+1.
\end{equation}
\begin{lemma}[Adaptive distributional transfer]\label{lem:adaptive-transfer}
Suppose the ideal and synthesized executions use the same finite classical controller. If each history-selected module has operator-norm error at most $\delta_{\mathrm{syn}}$ and at most $\overline M$ modules execute on any transcript, the final output distributions differ in total variation by at most $2\overline M\delta_{\mathrm{syn}}$.
\end{lemma}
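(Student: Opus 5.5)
The plan is a standard hybrid argument over the executed modules. Handling the adaptivity carefully is the only nonroutine part. Fix the finite classical controller. A transcript is the sequence of classical measurement records observed so far, and the controller maps each such record to the next module to execute. So the ideal and synthesized executions are the same adaptive tree of quantum circuits, with classical branching. I would purify the whole computation: defer every intermediate measurement by copying its outcome coherently into a fresh classical register, and implement each classical choice as a control on those registers. Rejected blocks and counter checks are also controlled by classical registers; they execute no approximate module and add no error.

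After purification, each run becomes one unitary sequence acting on the initial state. It has a bounded number of \emph{slots}, each a controlled operation $\sum_h \ket h\bra h\otimes V_h$ over histories $h$. By the cap, each branch $h$ contains at most $\overline M$ approximate modules. A control register does not increase operator-norm error: if $\norm{V_h-\widetilde V_h}\le\delta_{\mathrm{syn}}$ for every $h$, then $\norm{\sum_h\ket h\bra h\otimes(V_h-\widetilde V_h)}\le\delta_{\mathrm{syn}}$, since the operator is block diagonal. Exact oracle calls and flag reflections are identical in the two executions.

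The main obstacle is that different transcripts place approximate modules at different positions. The uniform cap $\overline M$ applies per transcript, not per position. A naive count of slot positions across all branches could exceed $\overline M$. To handle this, I would index the purified sequence by a global step counter $t=1,\dots,T$. At step $t$, the controlled operation applies the $t$-th module of branch $h$, or the identity if branch $h$ has already finished or that module is exact. I would then make the hybrid argument pathwise. Define hybrid states $\ket{\phi_t}$ that use synthesized modules for the first $t$ steps and ideal ones afterwards. Because the operator at each step is block diagonal in the history register, and the history register is never changed after it is written, the squared difference splits over histories: $\norm{\ket{\phi_t}-\ket{\phi_{t-1}}}^2=\sum_h \norm{P_h(\cdots)}^2$. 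Each term of this sum is at most $\delta_{\mathrm{syn}}^2$ times the weight of branch $h$, and it is nonzero only when step $t$ is an approximate module on $h$.

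A cleaner way to get the linear bound is a telescoping estimate applied inside each history block, using the triangle inequality on the amplitude of each fixed final history $h$. This gives $\norm{\ket{\Phi}-\ket{\widetilde\Phi}}\le \overline M\delta_{\mathrm{syn}}$, because along any history at most $\overline M$ nonidentity approximate factors occur. Conditioning on the realized history register makes this precise. Finally, for pure states $\ket{\Phi}$ and $\ket{\widetilde\Phi}$, the total variation distance between the output distributions is at most their trace distance, which is at most $2\norm{\ket\Phi-\ket{\widetilde\Phi}}$. This yields $2\overline M\delta_{\mathrm{syn}}$. The constant $2$ also absorbs phase conventions.
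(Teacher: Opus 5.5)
Your setup is fine: deferring measurements, treating history-selected modules as block-diagonal controlled operations, and converting a state-norm bound into a total-variation bound. You also locate the crux correctly. The gap is that the step you offer to resolve it is asserted rather than proved. A hybrid over a global clock $t=1,\dots,T$ only gives $\norm{\ket\Phi-\ket{\widetilde\Phi}}\le\delta_{\mathrm{syn}}\sum_t\sqrt{\mu_t}$, where $\mu_t$ is the probability that step $t$ is an approximate module. This is not $O(\overline M\delta_{\mathrm{syn}})$ in general; it reaches $\sqrt{T\overline M}\,\delta_{\mathrm{syn}}$ when different histories use disjoint time windows.

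Your ``cleaner way'' does not close this. A per-history triangle inequality yields one of two bounds, and neither works:
\begin{itemize}
\item An absolute bound $\norm{\ket{\phi_h}-\ket{\widetilde\phi_h}}\le\overline M\delta_{\mathrm{syn}}$ becomes useless after summing squares over exponentially many final histories.
\item After ``conditioning on the realized history,'' you get a relative bound $\norm{\ket{\phi_h}-\ket{\widetilde\phi_h}}\le\overline M\delta_{\mathrm{syn}}\norm{\ket{\phi_h}}$. This is false: a single rotation error of size $\delta_{\mathrm{syn}}$ before a measurement puts amplitude of order $\delta_{\mathrm{syn}}$ on a history with zero ideal weight.
\end{itemize}
So the linear bound is claimed exactly where the difficulty lies. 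The paper explicitly avoids conditioning on a history for this reason.

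The paper removes the global clock. It keeps the classical history as a register and indexes the hybrid by the ordinal $k$ of the approximate module along each history. Step $k$ is a direct-sum classical--quantum channel: the exact history-dependent operations followed by the $k$th selected module. Terminated, rejected, or shorter histories are padded with identity channels. This gives exactly $\overline M$ hybrid steps. Each ideal and synthesized pair differs in diamond norm by at most $2\delta_{\mathrm{syn}}$, uniformly over histories, so replacing them one at a time and using contractivity gives $2\overline M\delta_{\mathrm{syn}}$.

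If you prefer to keep your global-time telescoping, you need the following bookkeeping.
\begin{enumerate}
\item Write $\ket\Phi-\ket{\widetilde\Phi}=\sum_t W_T\cdots W_{t+1}(W_t-\widetilde W_t)\ket{\widetilde\chi_{t-1}}$, with synthesized modules before step $t$.
\item For each final history $h$, apply Cauchy--Schwarz over its at most $\overline M$ approximate steps. This gives $\norm{\ket\Phi-\ket{\widetilde\Phi}}^2\le\overline M\sum_t\sum_h b_{h,t}^2$.
\item Fix $t$ and group final histories by their prefix $g$ at time $t$. The later evolution is unitary and never alters written history bits, so the terms over extensions of $g$ sum to $\norm{\Pi_g(W_t-\widetilde W_t)\ket{\widetilde\chi_{t-1}}}^2\le\delta_{\mathrm{syn}}^2\norm{\Pi_g\ket{\widetilde\chi_{t-1}}}^2$.
\item Summing over $g$ and $t$ gives $\delta_{\mathrm{syn}}^2$ times the expected number of approximate modules in the synthesized run. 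This is at most $\overline M$, so $\norm{\ket\Phi-\ket{\widetilde\Phi}}\le\overline M\delta_{\mathrm{syn}}$.
\end{enumerate}
Without one of these two arguments, the proof is incomplete.
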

\begin{proof}
We retain the classical history as a register. At each step, the controller selects a module conditioned on that register, so the two resulting direct-sum classical--quantum channels differ in diamond norm by at most $2\delta_{\mathrm{syn}}$, uniformly over histories. We pad terminated and rejected histories with identity channels to the common length $\overline M$. Replacing one channel at a time and applying contractivity gives the bound. Subsequent measurement and deterministic processing, including thresholds, logarithmic recovery, projection, and failure outputs, cannot increase the distance. The argument compares complete output distributions without conditioning on a rare history.
\end{proof}
Here $\nu$ is the failure allowance for one complete functional call, and each sign receives $\zeta_\ell=\nu/(8n)$. Equation~\eqref{eq:synthesis} gives $2\overline M\delta_{\mathrm{syn}}\le\nu/(32n)=\zeta_\ell/4$, even if one sign uses the full slot allowance. The same transfer estimate supplies both the local reserve in Theorem~\ref{thm:calibrated} or Proposition~\ref{prop:fixed-band} and the final guarantee. The union of the implemented sign failures has probability at most $\nu/4<\nu$. We use each pilot call's assigned functional allowance in this calculation. On statistical success, \eqref{eq:budget} and rounding give error at most $7\varepsilon_F/8$. Simulation~\cite{LowChuang19} and phase synthesis~\cite{Haah19} add only logarithmic quantum precision overhead; we account for their classical preprocessing separately.

\subsection{Explicit query accounting for the entropy instances}\label{app:explicit-resources}
We now give executable resource bounds for the composite compiler of Appendix~\ref{app:compiler}, the phase classifier of Appendix~\ref{app:instrument}, and the common entropy error allocation. We choose the largest permitted $\xi$ and $\eta_\ell$, with a fixed margin for certified rounding, and require operator-norm accuracy $2^{-b}$ for each complete approximate module. The constants $C,C_g\ge2$ below are fixed upper bounds from these constructions and do not depend on the entropy parameters. The resulting conservative segment lengths are
\begin{align}\label{eq:explicit-module-costs}
 r_\xi&=\lceil C\log(2/\xi)\rceil,\quad
 h_j=\lceil\log_2(64\pi/\delta_j)\rceil,\nonumber\\
 c_j(b)&=\left\lceil C r_\xi\left[\delta_j^{-1}
 +h_j\{b+\log(2r_\xi h_j)\}\right]\right\rceil,\nonumber\\
 D_\ell&=\left\lceil C_g\delta_\ell^{-1}
 \log(C_g/\eta_\ell)\log(C_g/(\delta_\ell\eta_\ell))\right\rceil,\nonumber\\
 C_\ell^{\mathrm{loc}}&\le C(D_\ell+1).
\end{align}
Each of the $r_\xi$ phase-estimation repetitions uses $h_j$ evolutions with total time $O(\delta_j^{-1})$. We assign error $2^{-b}/(r_\xi h_j)$ to each evolution. The clean-isometry conversion squares the internal block tolerance, and the conservative qubitization bound $O(t+\log(1/\varepsilon_{\mathrm{sim}}))$~\cite{LowChuang19} then gives $c_j(b)$. The local degree is the product of the logarithmic outer degree and the rescaling degree in \eqref{eq:amplifier}; phase precision changes known gates but not this degree. Reflections and controls refer to explicit flags and require no further input-oracle queries.

Substitution of these segment lengths into \eqref{eq:cal-band-H} and \eqref{eq:cal-counter-data}, with the parameters in \eqref{eq:band-interface-cap} and $\zeta_\ell=\nu/(8n)$, gives a finite sum for the total query count. We use that sum to select the precision without a circular dependence:
\begin{align}\label{eq:explicit-total-queries}
 \overline Q(b)&=\sum_{\ell\in\mathcal J_{\mathrm{active}}}
                  (K_{\ell,+}(b)+K_{\ell,-}(b)),\nonumber\\
 M(b)&=\left\lceil C r_\xi(n+1)^2(\overline Q(b)+1)\right\rceil,\nonumber\\
 b&=\min\{z\in\mathbb N:z\ge1,\ 2^{-z}M(z)\le\nu/(64n)\}.
\end{align}
Here $M$ bounds complete simulation, polynomial-transform, Fourier-transform, and coin slots throughout all capped executions; the factor $r_\xi(n+1)^2$ conservatively counts the slots of a base preparation. A polynomial transform is one slot and uses its fixed degree; its phase errors are budgeted internally. Gates implementing Boolean predicates are exact. Lemma~\ref{lem:module-count} justifies this bound also for slots without queries. The constant $C$ may be enlarged for this count. Since only $c_j(b)$ depends on $b$, $\overline Q(b)$ and $M(b)$ grow at most linearly in $b$, up to ceilings. Thus the integer search terminates, with $b=O(\log(nM(1)/\nu))$. The full-transcript hybrid argument then applies. For the fixed-time rows, we use $K_{\ell,\pm}^{\mathrm{FT}}$ from \eqref{eq:fixed-cap} in place of $K_{\ell,\pm}$ and include the bounded pilot calls in the total whenever they are needed.

\begin{proposition}[An explicit logarithmic bound]\label{prop:vn-explicit}
For the von Neumann instance, $R\ge2$, $0<\varepsilon<1/10$, and $0<\nu<1/3$, the preceding implementation satisfies
\begin{equation}\label{eq:vn-explicit-logs}
 \mathcal L=1+\log\frac{R}{\varepsilon\nu},\qquad
 Q_S=O\!\left(\frac R\varepsilon\mathcal L^{15/2}
                   \log\frac{2\mathcal L}{\nu}\right).
\end{equation}
The constant is independent of $R,\varepsilon,\nu$. The logarithmic exponent follows from the phase classifier, composite compiler, first-stage bound, and per-sign precision allocation fixed above. The estimate is conservative; optimizing these logarithmic factors is separate from the stated rank--precision dependence.
\end{proposition}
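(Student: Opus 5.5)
We trace the explicit resource formulas of Appendix~\ref{app:explicit-resources} through the von Neumann parameter choices and collect every logarithm as a power of $\mathcal L$. The parameters come from the proof of Theorem~\ref{thm:vn}. The cutoff \eqref{eq:log-cutoff} gives $L+1=n=O(\mathcal L)$ levels. The envelopes satisfy $G_\ell=O(\mathcal L)$. The leakage choice \eqref{eq:log-moment-leakage-choice} with $k=1$ gives $\log(1/\xi)=O(\mathcal L)$, hence $r_\xi=O(\mathcal L)$. The compiler tolerance \eqref{eq:eta-budget} gives $\log(1/\eta_\ell)=O(\mathcal L)$. Consequently $h_j=O(\mathcal L)$ and, by \eqref{eq:explicit-module-costs}, $D_\ell=O(\delta_\ell^{-1}\mathcal L^2)$, so $C^{\mathrm{loc}}_\ell=O(\delta_\ell^{-1}\mathcal L^2)$. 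The classifier segment satisfies $c_j(b)=O(r_\xi[\delta_j^{-1}+h_j(b+\log\mathcal L)])$. The precision search in \eqref{eq:explicit-total-queries} will return $b=O(\mathcal L)$. Given that, $c_j=O(\mathcal L\,\delta_j^{-1}+\mathcal L^3)$. The additive $\mathcal L^3$ term is dominated once $\delta_j^{-1}\gtrsim\mathcal L^2$, and at coarse levels it is absorbed into the prefactor.

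Next we bound the weighted cost $\overline{\mathcal H}_\ell$ of \eqref{eq:cal-band-H} using $S_j=\min\{1,8R\delta_j\}$ and $B_\ell\le S_\ell$. The sum $\sum_j c_j\sqrt{S_j}$ is geometric, as in \eqref{eq:survival-sum}. It gives $O(\mathcal L\sqrt{S_\ell}/\delta_\ell)$ for the classifier part. The first stage $c_0$ is charged in full but is $O(\mathcal L^3)$. The local term contributes $O(\mathcal L^2\sqrt{B_\ell}/\delta_\ell)$. Altogether $\overline{\mathcal H}_\ell=O(\mathcal L^2\sqrt{S_\ell}/\delta_\ell)$.

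We then substitute into $K_{\mathrm{call}}$ from \eqref{eq:cal-counter-data}. There $m=\ell+2=O(\mathcal L)$ and $\eta=\gamma/(100m)$, where $\gamma=e_\ell/(8\beta B_\ell G_\ell)$. We also have $\kappa^{-1/2}=O(\sqrt m)$ and $p_*=B_\ell/2$. The median count is $r=O(\log(m(J+2)/\zeta_\ell))$ with $\zeta_\ell=\nu/(8n)$ and $J=O(\mathcal L)$, so $r=O(\log(2\mathcal L/\nu))$. These give
\[
 K_{\ell,\pm}=O\!\left(m^{5/2}\,r\,\frac{B_\ell G_\ell}{e_\ell}\cdot\frac{\overline{\mathcal H}_\ell}{\sqrt{B_\ell}}\right)
 =O\!\left(\mathcal L^{5/2}\log\frac{2\mathcal L}{\nu}\cdot\frac{\mathcal L\cdot\mathcal L^2 S_\ell}{e_\ell\delta_\ell}\right),
\]
using $B_\ell\le S_\ell$. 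Since $S_\ell/\delta_\ell\le8R$, each level costs $O(R\mathcal L^{11/2}\log(2\mathcal L/\nu)/e_\ell)$. We use the uniform allocation $e_\ell=\varepsilon/(4n)$ and sum over $n=O(\mathcal L)$ levels. This multiplies by $n^2=O(\mathcal L^2)$ and yields the claimed $O((R/\varepsilon)\mathcal L^{15/2}\log(2\mathcal L/\nu))$.

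The main obstacle is closing the precision loop without an extra factor. We must show $b=O(\mathcal L)$ and that the $b$-dependence of $c_j(b)$ adds no further power of $\mathcal L$ to the dominant term. The plan is as follows. Because $\overline Q(b)$ is linear in $b$, $M(1)$ is polynomial in $R/(\varepsilon\nu)$, so the minimal $b$ in \eqref{eq:explicit-total-queries} is $O(\log(nM(1)/\nu))=O(\mathcal L)$. The term $r_\xi h_j b=O(\mathcal L^3)$ is additive per classifier, not multiplicative with $\delta_j^{-1}$. We therefore separate the dominant $\delta_j^{-1}$ part of each segment from the lower-order part. At levels with $\delta_j^{-1}\le\mathcal L^2$, we bound the whole segment by $O(\mathcal L^3)$ and check that these levels contribute no more than the displayed bound. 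Fine levels dominate through $S_\ell/\delta_\ell\le8R$, while coarse levels have $R\ge2$ and $\varepsilon<1/10$ available to absorb constants.
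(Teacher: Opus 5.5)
Your ledger is the paper's: $n,m_\ell=O(\mathcal L)$, $r=O(\log(2\mathcal L/\nu))$, $C_\ell^{\mathrm{loc}}=O(\mathcal L^2\delta_\ell^{-1})$, $\overline{\mathcal H}_\ell=O(\mathcal L^2\sqrt{B_\ell}/\delta_\ell)$, the $m_\ell^{5/2}$ calibration factor, uniform $e_\ell$, and $S_\ell/\delta_\ell\le 8R$, giving the exponent $5/2+1+2+2=15/2$. The step that does not go through as written is the passage from your classifier bound $c_j=O(\mathcal L\delta_j^{-1}+\mathcal L^3)$ to $\overline{\mathcal H}_\ell=O(\mathcal L^2\sqrt{S_\ell}/\delta_\ell)$.

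Your bound only gives $c_0=O(\mathcal L^3)$, so it cannot yield $\overline{\mathcal H}_0=O(\mathcal L^2\sqrt{S_0}/\delta_0)=O(\mathcal L^2)$. Moreover, every band's first-stop prefix runs the coarse classifiers, with $c_0$ charged in full. The additive term therefore enters every $\overline{\mathcal H}_\ell$ as roughly $\mathcal L^3\sum_{j\le\ell}\sqrt{S_j}$. That excess is a power of $\mathcal L$, not a constant, so ``absorbing it into the prefactor'' using $R\ge2$ and $\varepsilon<1/10$ is not an argument. A crude count makes this concrete: take $O(\log\mathcal L)$ coarse stages at $\mathcal L^3$ each and $G_\ell=O(\mathcal L)$. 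This leaves an extra $\log\mathcal L$ factor, which $R\ge2$ cannot absorb at fixed small $R$.

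The missing observation is that $h_j=\lceil\log_2(64\pi/\delta_j)\rceil$ is $O(\delta_j^{-1})$ as well as $O(\mathcal L)$. Once $b=O(\mathcal L)$, this gives $r_\xi h_j\{b+\log(2r_\xi h_j)\}=O(\mathcal L^2\delta_j^{-1})$. Hence $c_j(b)=O(\mathcal L^2\delta_j^{-1})$ uniformly in $j$, and in particular $c_0=O(\mathcal L^2)$. The geometric sum then gives $\overline{\mathcal H}_\ell=O(\mathcal L^2\sqrt{B_\ell}/\delta_\ell)$ directly, with no coarse/fine split; this is exactly how the paper proceeds.

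If you prefer to keep your split, you must close it with a weighted count. Use $\sum_{j}\sqrt{S_j}=O(1+\log R)$ and $\sum_\ell\sqrt{B_\ell}=O(1+\log R)$. These bound the total excess by $O(\varepsilon^{-1}\mathcal L^{15/2}(1+\log R)^2\log(2\mathcal L/\nu))$, which fits the claim because $(1+\log R)^2=O(R)$.

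Two smaller points:
\begin{itemize}
\item $M(1)$ is polynomial in $R/(\varepsilon\nu)$ because $B_\ell\ge c\varepsilon/\log(R/\varepsilon)$ and $\gamma_\ell^{-1}$ is polynomial in $R/\varepsilon$. Linearity of $\overline Q(b)$ in $b$ is a separate fact: it is what makes the search stop at $b=O(\log(nM(1)/\nu))$.
\item The von Neumann instance takes $\xi$, $\eta_\ell$, and $e_\ell=\varepsilon/(8n)$ from \eqref{eq:application-recipe}, not from \eqref{eq:log-moment-leakage-choice} or $\varepsilon/(4n)$. This is harmless for the orders.
\end{itemize}
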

\begin{proof}
The cutoff gives $n=O(\log(R/\varepsilon))$. The chosen leakage and approximation tolerances have inverse values polynomial in $R/\varepsilon$, so their logarithms are $O(\mathcal L)$. Also $B_\ell\ge c\varepsilon/\log(R/\varepsilon)$, $\gamma_\ell^{-1}$ is polynomial in $R/\varepsilon$, and every segment bound at $b=1$ is polynomial in these parameters. The exact cap therefore makes $M(1)$ polynomial in $R/(\varepsilon\nu)$, proving $b=O(\mathcal L)$ without assuming the desired query exponent. Consequently
\begin{align}\label{eq:vn-log-ledger}
 n,m_\ell&=O(\mathcal L),\quad
 r_\ell=O(\log(2\mathcal L/\nu)),\nonumber\\
 c_j(b),C_j^{\mathrm{loc}}&=O(\delta_j^{-1}\mathcal L^2),\nonumber\\
 \overline{\mathcal H}_\ell&=O(\mathcal L^2\sqrt{B_\ell}/\delta_\ell),\nonumber\\
 \overline Q&=O\!\left(\frac{n^{7/2}\mathcal L^2}{\varepsilon}
 \log\frac{2\mathcal L}{\nu}\sum_\ell\frac{B_\ell G_\ell}{\delta_\ell}\right).
\end{align}
For the classifier, $h_j=O(\mathcal L)$ and $h_j=O(\delta_j^{-1})$, so \eqref{eq:explicit-module-costs} gives the displayed bound. Since $B_\ell=S_\ell$, summing $c_j\sqrt{S_j}$ geometrically and charging the persistent reference term gives the bound on $\overline{\mathcal H}_\ell$. The last line follows from the $m_\ell^{5/2}$ calibration factor in \eqref{eq:cal-relative-cost} and $e_\ell=\varepsilon/(8n)$. Finally \eqref{eq:log-sum} bounds the remaining sum by $O(Rn^2)$, yielding exponent $7/2+2+2=15/2$. The ceilings contribute only $O(n)$, already covered. This separates the layer allocation, calibration, module precision, envelope sum, and confidence factors.
\end{proof}

\subsection{Workspace, non-query operations, and preprocessing}
To make the retained-history requirement explicit, we write $n_S,n_E$ for the system and purification-environment widths. The two preparation register sets require $2(n_S+n_E)$ qubits. At scale $\delta_j=2^{-j-1}$, constant-resolution phase estimation uses $O(j+1)$ phase bits per repetition, so $O(\log(1/\xi))$ repetitions retain $O((j+1)\log(1/\xi))$ bits. Summing through $L$ gives $O((L+1)^2\log(1/\xi))$ phase records and $O(L+1)$ flags. These counts concern only the classifier history; the simulation and polynomial-processing auxiliaries must still be included.

A conservative finite implementation allocates a fresh clean workspace to each approximate-unitary slot of one unamplified preparation and retains it until that preparation is reversed. If there are $M_{\mathrm{base}}$ such slots and each needs at most $a_{\max}$ auxiliary qubits, an upper bound on the quantum workspace is
\begin{equation}\label{eq:workspace-bound}
 \begin{split}
 n_{\mathrm{work},\ell}=O\!\bigl(&n_S+n_E+(L+1)^2\log(1/\xi)\\
 &+M_{\mathrm{base}}a_{\max}+\log H_\ell\bigr).
 \end{split}
\end{equation}
The final term covers an amplitude-estimation phase register; the execution counter itself can be classical. Amplification reuses the same full preparation and its inverse on this workspace, rather than allocating a copy at each recursive call. Separate complete statistical calls may reuse it after measurement and reinitialization. This conservative allocation is sufficient for the retained-auxiliary convention of Appendix~\ref{app:instrument}.

SWAPs, Boolean controls, and reflections about explicit bit patterns have finite gate decompositions with polynomial overhead in register width. The oracle's own gate implementation is additional input data. Coefficient generation, basis conversion, and phase synthesis occur classically before their corresponding quantum calls. The admissibility condition guarantees that these descriptions can be produced to certified precision; their cost is not included in oracle-query bounds. In particular, logarithmic synthesis-accuracy overhead in quantum queries does not imply logarithmic classical preprocessing time. The general profile theorem measures oracle-query resources. For the elementary power and logarithmic profiles, computable analytic bounds provide the finite coefficient data used by the compiler.

More explicitly, let $g_U$ bound the gates for any allowed input-oracle call, and let $N_{\mathrm{known}}$ be the maximum number of elementary known operations obtained by expanding the capped circuit templates (including controls, reflections, and Fourier transforms). For a chosen finite gate set let $C_{\mathrm{synth}}(a)$ bound the cost of approximating one such operation to error $a$. Then
\begin{equation}\label{eq:gate-resource-layer}
 G_{\mathrm{total}}\le \overline Q g_U+
 N_{\mathrm{known}}\,C_{\mathrm{synth}}\!\left(\frac{2^{-b}}{2\max\{1,N_{\mathrm{known}}\}}\right).
\end{equation} We reserve half of each module's error for its analytic approximation and half for known-gate synthesis. The displayed conservative per-gate tolerance bounds the synthesis error of any module by $2^{-b}/2$ and is compatible with the same $M(b)$ transfer. Replacing internal tolerances by half their values only enlarges the fixed constants in \eqref{eq:explicit-module-costs}.
This gate bound depends on the supplied implementation of the oracle and the chosen gate set, neither of which is specified by query access. The workspace bound counts clean auxiliaries at the entry to each complete module; internal QSVT calls share those auxiliaries. The composite compiler requires classical work polynomial in its degree and precision bit length. The optional smooth quadrature construction instead guarantees polynomial work in inverse tolerance. We keep these classical and gate resources distinct from the query complexity.

\section{Supplementary Proofs for Entropy Applications}\label{app:entropy}
\subsection{Power functions, tails, and a common error recipe}\label{app:power}
We write $n=L+1$ for the number of levels. Since the application profiles are nonnegative on $(0,1]$, an upper bound $H_g$ on $F_g$ also controls their functional-weighted leakage. We use $G_*=\max_{\ell\le L}G_\ell$ for the largest local envelope. The following choices of $H_g$ follow from trace normalization and the rank promise:
\begin{equation}\label{eq:functional-bound}
 H_g=\begin{cases}
 1,&g(x)=x^{\alpha-1},\ \alpha>1,\\
 R^{1-\alpha},&g(x)=x^{\alpha-1},\ 0<\alpha<1,\\
 \log R,&g(x)=\log(1/x).
 \end{cases}
\end{equation}
By localization and \eqref{eq:leakage},
\begin{equation}\label{eq:application-leakage}
 \mathcal L_{g,\ell}\le\xi^2(H_g+G_\ell).
\end{equation}
The terminal contribution is bounded by the true spectral tail below $\delta_L$ plus $\xi^2H_g$. If a known spectral upper bound $\Lambda$ permits omission of a layer with $\delta_\ell/2>\Lambda$, all its eigenvalues lie below that band's lower edge; hence $|A_\ell|\le\xi^2H_g$. Summing the retained leakages and omitted-layer errors, including the terminal error, gives at most $(n+1)\xi^2(H_g+G_*)$ in addition to the true low-spectrum tail.

For functional error $u\in(0,1/10)$, we choose a cutoff $t\le1/8$ whose tail contribution is at most $u/128$, and take $L$ to be the least index with $\delta_L\le t$. Thus $t/2<\delta_L\le t$. The classifier, approximation, and statistical tolerances are
\begin{equation}\label{eq:application-recipe}
 \begin{split}
 \xi^2&\le\min\left\{\frac{u}{256(n+1)(1+H_g+G_*)},
                       \frac{b_{\mathrm{leak}}}{100n}\right\},\\
 \eta_\ell&\le\min\left\{\frac1{20},
            \frac{u}{128n(1+G_\ell B_\ell)}\right\},\qquad
 e_\ell=\frac{u}{8n}.
 \end{split}
\end{equation}
Here $b_{\mathrm{leak}}$ is a computable lower scale used to keep classifier leakage within the selected mass bounds.

We use $b_{\mathrm{leak}} =\min\{1,R\delta_L\}$ for the ordinary rank bounds, $b_{\mathrm{leak}} =\min_\ell b_\ell$ for the refined moment bounds, and $b_{\mathrm{leak}}=1$ for rank-independent estimation. The deterministic bias is at most $u/128+u/256+u/128$, and the statistical tolerances sum to at most $u/8$. These choices satisfy \eqref{eq:budget} with room for rounding. Uniform allocation suffices for all application bounds, including cases in which we retain only a subset of levels; the optimized allocation remains available. Failure allowances are assigned as in Appendix~\ref{app:budgets}.

\subsubsection{Power tails and the transition at order two}

We next derive the query powers in Theorem~\ref{thm:tsallis}. We use its moment tolerance $u$ and suppress only fixed-order envelope constants. For $\alpha>1$, trace normalization gives the rank-free cutoff $t_0=\min\{1/8,(u/128)^{1/(\alpha-1)}\}$ through the estimate

\begin{equation}\label{eq:power-tail-expanded}
 \sum_{0<\lambda_i<t}\lambda_i^\alpha
 \le t^{\alpha-1}\sum_{0<\lambda_i<t}\lambda_i\le t^{\alpha-1}.
\end{equation}

With rank at most $R$, counting terms instead gives $Rt^\alpha$, valid also below order one. Thus $t_R=\min\{1/8,(u/(128R))^{1/\alpha}\}$ is sufficient for every positive order. The two tail estimates are distinct: trace normalization gives the first without a rank promise, whereas the second uses a bound on the number of nonzero eigenvalues.

\begin{equation}\label{eq:power-coeff-expanded}
 a^{(0)}(\delta)=\delta^{\alpha-2},\qquad
 a^{(R)}(\delta)=\min\{1,R\delta\}\delta^{\alpha-2}.
\end{equation}

These reference coefficients express the chosen envelope bounds up to fixed-order constants; the actual branch masses are bounded by those envelopes. Under the uniform choice $e_\ell=u/(8n)$, total cost is $\Ot(n\sum_\ell a_\ell/u)$. For any fixed nonzero exponent $z$, the finite dyadic sum obeys the following bounds, whose constants depend on $z$:

\begin{equation}\label{eq:dyadic-power-sum}
 \sum_{\ell=0}^L\delta_\ell^z
 =\begin{cases}O(1),&z>0,\\O(\delta_L^z),&z<0,\end{cases}
 \qquad \sum_{\ell=0}^L\delta_\ell^0=n.
\end{equation}

The constants for $z\ne0$ need not remain bounded as $z\to0$. The equality at $z=0$ explains the boundary order. For $1<\alpha<2$, the rank-free sum is controlled by its finest scale. For $\alpha>2$, it is controlled by the coarsest scales. Substituting $\delta_L=\Theta(t_0)$ gives

\begin{equation}\label{eq:rankfree-power-expanded}
 \frac1u\sum_\ell a^{(0)}(\delta_\ell)
 =\begin{cases}
 O(u^{-1/(\alpha-1)}),&1<\alpha<2,\\
 n/u,&\alpha=2,\\
 O(1/u),&\alpha>2.
 \end{cases}
\end{equation}

The total number of levels and all precision logarithms depend only on $u$ and the fixed order in this construction. No hidden rank parameter is introduced by the classifier accuracy, compiler tolerance, or execution cap. The actual system and purification registers still determine nonquery workspace.

\subsubsection{Rank-sensitive summation and orders below one}
The rank bound changes the fine-scale coefficients. To see where its effect enters, we split the reference profile at $\delta=1/R$:
\begin{equation}\label{eq:rank-power-split}
 a^{(R)}(\delta)=\begin{cases}
 R\delta^{\alpha-1},&\delta\le1/R,\\
 \delta^{\alpha-2},&\delta\ge1/R.
 \end{cases}
\end{equation}

For $1<\alpha<2$, the coarse coefficients increase toward $1/R$, whereas the fine coefficients decrease beyond that scale. Both geometric sums are therefore bounded by their common endpoint value, up to a fixed-order factor, giving $O(R^{2-\alpha})$. If the cutoff removes the crossover, the retained sum can only be smaller. Comparing this certified bound with the rank-free one gives the minimum in Theorem~\ref{thm:tsallis}; we select the less expensive construction before execution, without estimating the unknown rank.

For $0<\alpha<1$, the fine coefficients instead increase toward the cutoff. Since $u<1/10$, $t_R<1/R$, and

\begin{align}\label{eq:belowone-power-expanded}
 \sum_\ell a^{(R)}(\delta_\ell)
 &\le O(R^{2-\alpha}+Rt_R^{\alpha-1})\nonumber\\
 &=O(Rt_R^{\alpha-1}),\nonumber\\
 \frac{Rt_R^{\alpha-1}}u&=O((R/u)^{1/\alpha}).
\end{align}

This explains why rank information is essential to this below-one construction: the function envelope grows near zero, and the tail is controlled by counting eigenvalues. The compiler remains valid because it approximates the negative power only on intervals separated from zero. Its global polynomial is bounded, so it never evaluates a singular function at a zero eigenvalue.

Every executed level has $B_\ell G_\ell/e_\ell>1$, so its displayed statistical cost already dominates one full prefix and readout execution. An additional independent $1/t_R$ setup charge is unnecessary. A skipped level has no execution cost, but its contribution remains charged to $e_\ell$. This active-level accounting also covers all-skipped instances.

Projection of a moment estimate onto its physical interval cannot increase error. For Tsallis, the affine conversion divides the error by $|1-\alpha|$. Choosing the stated $u$ and performing the final arithmetic within $\varepsilon/4$ gives Theorem~\ref{thm:tsallis}. For $R=1$, the direct zero output handles all orders.

\subsection{Entropy-weighted logarithmic leakage and the complete bound}\label{app:log}

For $R\ge2$, we use the cutoff in \eqref{eq:log-cutoff} and take $L$ to be the least index with $\delta_L\le t$. The resulting $n=L+1$ levels satisfy $t/2<\delta_L\le t$. The exact first-stop responses decompose the entropy as

\begin{equation}\label{eq:vn-exact-decomposition}
 S(\rho)=\sum_{\ell=0}^L A_\ell+A_\perp,\qquad
 A_\perp=\sum_i\lambda_i m_\perp(\lambda_i)\log(1/\lambda_i).
\end{equation}

The identity remains exact when an eigenvalue in a transition region contributes to neighboring branches, because it uses the actual coherent responses. For $v=256R/\varepsilon$, the cutoff is $t=1/(v\log v)$. The monotonicity of $x\log(1/x)$ below $e^{-1}$ and the rank bound imply

\begin{align}\label{eq:vn-tail-expanded}
 \sum_{0<\lambda_i<\delta_L}\lambda_i\log(1/\lambda_i)
 &\le Rt\log(1/t)\nonumber\\
 &=\frac{\varepsilon(\log v+\log\log v)}{256\log v}
 \le\varepsilon/128.
\end{align}

For $\lambda_i\ge\delta_L$, terminal survival entails a false-low result at the final test and has probability at most $\xi^2$. Splitting the terminal sum at $\delta_L$ therefore proves

\begin{equation}\label{eq:vn-terminal-expanded}
 0\le A_\perp\le\varepsilon/128+\xi^2 S(\rho)
 \le\varepsilon/128+\xi^2\log R.
\end{equation}

For a nonterminal branch, the pointwise response bound $m_\ell(x)\le\xi^2$ outside $I_\ell$ must be applied to the entropy-weighted sum. It gives

\begin{equation}\label{eq:weighted-log}
 \sum_{\lambda_i\notin I_\ell}\lambda_i m_\ell(\lambda_i)
       \log(1/\lambda_i)\le\xi^2 S(\rho)\le\xi^2\log R.
\end{equation}

Separately applying the same pointwise bound to $\sum_i\lambda_i=1$ bounds the unweighted off-band mass by $\xi^2$. Together with the entropy-weighted estimate, this controls the logarithmic singularity. Thus a globally bounded compiled polynomial contributes at most $G_\ell\xi^2$ outside the band. On the band, local approximation contributes at most $G_\ell\eta_\ell B_\ell$. Combining these terms gives

\begin{equation}\label{eq:vn-local-expanded}
 |A_\ell-\widetilde A_\ell|
 \le G_\ell\eta_\ell B_\ell+\xi^2(\log R+G_\ell).
\end{equation}

Here $\widetilde A_\ell=\beta G_\ell(p_{\ell,+}-p_{\ell,-})$ with $\beta=2$. We apply \eqref{eq:application-recipe} with $u=\varepsilon$, $H_g=\log R$, and $B_\ell=S_\ell$. Equations~\eqref{eq:vn-terminal-expanded} and \eqref{eq:vn-local-expanded} then bound the deterministic bias by $\varepsilon/128+\varepsilon/256+\varepsilon/128$. Estimated and skipped levels contribute at most $\sum e_\ell=\varepsilon/8$. Their sum is smaller than $\varepsilon/2$, leaving the prescribed margin for rounding.

We next sum the query costs. On coarse scales $\delta\ge1/R$, the bounds $B_\ell\le1$ and $\delta^{-1}\le R$ apply; on fine scales, we have $B_\ell\le8R\delta$. Both regions therefore satisfy $B_\ell G_\ell/\delta_\ell=O(R(1+\ell))$. The fine-scale calculation expresses the relevant cancellation: doubling the spectral resolution halves the rank-based mass bound. Consequently,

\begin{equation}\label{eq:log-sum}
 \sum_{\ell=0}^L\frac{B_\ell G_\ell}{\delta_\ell}
 =O(Rn^2),\qquad
 \sum_{\ell=0}^L Q_\ell=\Ot(R/\varepsilon).
\end{equation}

Here $n=O(\log(R/\varepsilon))$, and the chosen compiler and classifier precisions have logarithmic bit lengths. Appendix~\ref{app:budgets} transfers the successful ideal-output event to the implemented circuit using its deterministic cap. Finally, projection onto $[0,\log R]$ cannot increase error. This proves Theorem~\ref{thm:vn}.

\subsection{Relative moment accuracy for R\'enyi entropy}\label{app:renyi}
\subsubsection{Refined mass bounds}
On $x\ge\delta_\ell/2$, the inequality preceding \eqref{eq:moment-caps} gives
\begin{equation}\label{eq:moment-mass-proof}
 w_\ell\le 2^{\alpha-1}s\delta_\ell^{1-\alpha}+\xi^2.
\end{equation}
For smaller eigenvalues, the stop probability is at most $\xi^2$. We combine this estimate with the rank and unit bounds. When $\xi^2\le b_\ell$, a sufficiently large fixed $c_\alpha$ in \eqref{eq:moment-caps} makes $B_\ell$ a valid upper bound on $w_\ell$. To evaluate the cost, we compare the implemented bounds with the simpler profiles
\begin{equation}\label{eq:cap-equivalence}
 \begin{aligned}
 S_{\mathrm{ref}}(\delta)&:=\min\{1,R\delta\},\\
 B_{\mathrm{ref}}(\delta)&:=\min\{1,R\delta,s\delta^{1-\alpha}\}.
 \end{aligned}
\end{equation}
The implemented bounds satisfy $S_\ell=\Theta(S_{\mathrm{ref}}(\delta_\ell))$ and $B_\ell=\Theta(B_{\mathrm{ref}}(\delta_\ell))$. The reference profiles below have exact piecewise expressions; comparison with the implemented bounds costs only fixed-order constants.
This refinement applies only to the selected band. Smaller eigenvalues can still reach its classifier before being rejected, so the same moment cap cannot replace the survival bound $S_\ell$ in earlier stage costs.

\subsubsection{Proof of Proposition~\ref{prop:moment-scale}}
We use moment tolerance $u=hs$, its rank-dependent cutoff $t_R$, and the error allocation in \eqref{eq:application-recipe}. The promise $M_\alpha\le s$ implies that every eigenvalue is at most $\Lambda=s^{1/\alpha}$, so we retain only levels with $\delta_\ell\le2\Lambda$. Their earlier classification stages remain in the query count. The preceding leakage bound accounts for all omitted levels.

The moment bound changes from the unit bound to $s\delta^{1-\alpha}$ at the scale $\delta_\star=s^{1/(\alpha-1)}$, which satisfies $1/R\le\delta_\star\le\Lambda\le1$. Using the reference profiles in \eqref{eq:cap-equivalence}, we obtain the following three exact expressions for the cost coefficient divided by $u$:
\begin{equation}\label{eq:renyi-regimes}
 \begin{aligned}
 &\frac{\delta^{\alpha-2}\sqrt{S_{\mathrm{ref}}(\delta)B_{\mathrm{ref}}(\delta)}}{hs}\\
 &\quad=\begin{cases}
 R\delta^{\alpha-1}/(hs),&\delta\le1/R,\\
 \delta^{\alpha-2}/(hs),&1/R\le\delta\le \delta_\star,\\
 \delta^{(\alpha-3)/2}/(h\sqrt s),&\delta_\star\le\delta\le\Lambda.
 \end{cases}
 \end{aligned}
\end{equation}
The first regime follows because $s\ge R^{1-\alpha}\ge R\delta^\alpha$. The second has $S_{\mathrm{ref}}=B_{\mathrm{ref}}=1$, and the third has $S_{\mathrm{ref}}=1$, $B_{\mathrm{ref}}=s\delta^{1-\alpha}$.

The finite geometric sums \eqref{eq:renyi-low-sum}--\eqref{eq:renyi-upper-sum} below bound the first two regions by $O(nR/h)$. They bound the third by the same expression up to order three; above three its endpoint is
\begin{equation}\label{eq:high-renyi}
 \frac{s^{-3/(2\alpha)}}h
 \le\frac{R^{3/2-3/(2\alpha)}}h.
\end{equation}
Empty regions contribute zero and shared endpoints are counted once. The detailed accounting below includes the extra scales up to $2\Lambda$, the critical orders, and every omitted classifier prefix. Together with the factor $8n$ from uniform allocation, it completes the proof of Proposition~\ref{prop:moment-scale}.

\paragraph{Upper scales and retained prefixes.}

We now account for omitted bands and retained prefixes. The rank cutoff makes the contribution from $x<\delta_L$ at most $R\delta_L^\alpha\le u/128$, after the constants in the common recipe are chosen. For a discarded coarse layer with $\delta_\ell/2>\Lambda=s^{1/\alpha}$, every eigenvalue lies strictly below that layer's low threshold. Its stop response is therefore at most $\xi^2$, even though its contribution is not identically zero in the actual classifier.

\begin{align}\label{eq:renyi-omitted}
 \sum_{\ell:\delta_\ell>2\Lambda} A_\ell
 &\le n\xi^2 M_\alpha\le n\xi^2s,\nonumber\\
 A_\perp&\le R\delta_L^\alpha+\xi^2M_\alpha.
\end{align}

Both terms fit the common error recipe, which uses the weaker bound $M_\alpha\le1$ when selecting $\xi$. The omitted layers need no local transforms or statistical calls. Their classification stages, however, still occur inside every retained prefix: a retained band is defined by first stop, including survival through those coarse thresholds. All such stages remain in $\overline{\mathcal H}_\ell$. Replacing the prefix by a fresh classifier starting at the first retained scale would define different branch weights and would require a separate decomposition proof.

For the dyadic coefficient sum, let $\mathcal I_1,\mathcal I_2,\mathcal I_3$ be the retained scales in the three regions of \eqref{eq:renyi-regimes}, excluding the at most two scales above $\Lambda$. Uniform errors $e_\ell=u/(8n)$ suffice for the stated soft bounds. With this allocation, the total cost is bounded by $8n$ times the sum of the displayed coefficients, retaining their dependence on the same spectral scale. The following endpoint estimates show all parameter dependences before logarithms are suppressed.

\begin{align}\label{eq:renyi-low-sum}
 \sum_{\delta\in\mathcal I_1}\frac{R\delta^{\alpha-1}}{hs}
 &\le\frac{R^{2-\alpha}}{hs(1-2^{1-\alpha})}\nonumber\\
 &\le\frac{R}{h(1-2^{1-\alpha})}.
\end{align}

The first inequality extends the finite sum down to zero, which increases it because $\alpha>1$. The second uses $s\ge R^{1-\alpha}$. If the cutoff is above $1/R$, this region is empty and its contribution is zero. For the middle region, a geometric sum gives the following alternatives; constants depend on the fixed order and include the possible factor-two mismatch of a dyadic endpoint.

\begin{equation}\label{eq:renyi-middle-sum}
 \sum_{\delta\in\mathcal I_2}\frac{\delta^{\alpha-2}}{hs}
 \le\begin{cases}
 O(R^{2-\alpha}/(hs)),&1<\alpha<2,\\
 n/(hs),&\alpha=2,\\
 O(\delta_\star^{\alpha-2}/(hs)),&\alpha>2.
 \end{cases}
\end{equation}

For $1<\alpha<2$, the physical lower bound $s\ge R^{1-\alpha}$ bounds the first expression by $O(R/h)$. At $\alpha=2$, it gives $nR/h$. For $\alpha>2$, we have $\delta_\star^{\alpha-2}/s=1/\delta_\star\le R$. Thus the middle region contributes at most $O(nR/h)$. The same conclusion holds when $\delta_\star=1/R$, where the two endpoints coincide and the region contains at most one dyadic scale.

\begin{equation}\label{eq:renyi-upper-sum}
 \sum_{\delta\in\mathcal I_3}\frac{\delta^{(\alpha-3)/2}}{h\sqrt s}
 \le\begin{cases}
 O(1/(h\delta_\star)),&1<\alpha<3,\\
 n/(h\sqrt s),&\alpha=3,\\
 O(s^{-3/(2\alpha)}/h),&\alpha>3.
 \end{cases}
\end{equation}

The identity at the lower endpoint is $\delta_\star^{(\alpha-3)/2}/\sqrt s=1/\delta_\star$. At $\alpha=3$, the physical bound $s\ge R^{-2}$ gives $nR/h$. Above three, substituting $\Lambda=s^{1/\alpha}$ yields the upper-endpoint power $s^{-3/(2\alpha)}$. The extra retained scales between $\Lambda$ and $2\Lambda$ have the same size up to fixed-order constants. These endpoint calculations complete the sum of the three reference coefficients.

\subsubsection{Proof of Lemma~\ref{lem:pilot}: validity and termination}
We analyze the pilot conditional on all preceding calls satisfying their complete additive-error guarantees. Initially $M_\alpha\le s=1$. If the pilot continues, then
\begin{equation}\label{eq:pilot-invariant}
 M_\alpha\le\widetilde M+s/16<5s/16<s/2.
\end{equation}
Halving therefore preserves the upper promise and cannot reduce $s$ below the physical lower bound $R^{1-\alpha}$. If the pilot stops, then $M_\alpha\ge\widetilde M-s/16\ge3s/16>s/8$. Once $s\le2M_\alpha$, the estimate is at least $M_\alpha-s/16\ge7s/16>s/4$, so the pilot must stop. This occurs after at most $k=\max\{0,\lceil\log_2(1/(2M_\alpha))\rceil\}$ halvings. Since $k\le\lceil(\alpha-1)\log_2R\rceil$, fewer than $N_{\mathrm{pilot}}$ calls suffice. Each conditional failure probability is at most $\nu/(8N_{\mathrm{pilot}})$; the union bound gives total pilot failure at most $\nu/8$. The iteration and scale checks ensure termination on the remaining histories as well.

\paragraph{Finite checks and endpoint behavior.}

At each scale, the pilot's additive tolerance $s/16$ includes both local approximation and classical output rounding. We compare the rounded value with $s/4$ and stop when equality holds. On the successful event, this rule still gives $M_\alpha\ge3s/16$. No additional comparison error is needed at the threshold.

For noninteger $\alpha$, the lower moment $R^{1-\alpha}$ need not have an exact finite representation. At a current dyadic scale $s$, we compute an interval $[l,u]$ containing it with width at most $s/32$. The controller rejects the call only when $s<l$; otherwise, the fixed iteration and query limits remain in force. On a successful history, the invariant already proves that $s$ is at least the true lower moment. An accepted call with an invalid promise can therefore occur only after a failure already charged to the probability budget.

There is a further deterministic consequence of this check. Any allowed call has $s\ge l\ge R^{1-\alpha}-s/32$, hence $s\ge(32/33)R^{1-\alpha}$. Thus even a bad transcript cannot select an arbitrarily tiny tolerance before being rejected. The parameter sizes in its per-call cap remain bounded by the same powers of $R$ and $1/\varepsilon$. This supplies the bounded-transcript parameter range used in the synthesis analysis.

At $M_\alpha=R^{1-\alpha}$ the same termination argument stops before an invalid halving; no flatness test is needed. A loose rank cap only permits an earlier stop relative to its worst-case bound.

\subsubsection{Final relative accuracy and entropy conversion}
On successful pilot and final calls, projection preserves the final moment error and
\begin{equation}\label{eq:renyi-conversion}
 \begin{gathered}
 \left|\widehat M/M_\alpha-1\right|\le8h<1/2,\\
 \frac{|\log\widehat M-\log M_\alpha|}{\alpha-1}
 \le\frac{16h}{\alpha-1}\le\varepsilon/16.
 \end{gathered}
\end{equation}
Here $h=\min\{1,\alpha-1\}\varepsilon/256$. Each call is the implemented estimator, whose guarantee already includes synthesis and arithmetic. We reserve another $\varepsilon/8$ for the final logarithm and rounding. The combined failure probability is at most $\nu/4<\nu$. Since the pilot uses only logarithmically many calls at constant relative-scale precision, the final call determines the $1/\varepsilon$ power. This proves the above-one rows of Theorem~\ref{thm:renyi}. Below one, projection onto $[1,R^{1-\alpha}]$ gives $|\log\widehat M-\log M_\alpha|\le|\widehat M-M_\alpha|$, so the moment construction in Theorem~\ref{thm:tsallis} with $u=(1-\alpha)\varepsilon/4$ proves the remaining row.

\subsubsection{Proof of Proposition~\ref{prop:upper-spectrum}}
We combine the ordinary rank bounds with $\Lambda=\min\{1,C/R\}$, retaining scales $\delta\le2\Lambda$. The moment tolerance is $u=h_fR^{1-\alpha}=\min\{1,\alpha-1\}\varepsilon R^{1-\alpha}/256$, which suffices for relative estimation because $M_\alpha\ge R^{1-\alpha}$. Below $1/R$, the coefficient sum is $O(R^{2-\alpha})$. Between $1/R$ and $2C/R$, there are $O(1+\log C)$ scales, each bounded by an $(\alpha,C)$-dependent multiple of $R^{2-\alpha}$. This also covers $R\le2C$, where $R$ is bounded in terms of $C$. Dividing the resulting $O(R^{2-\alpha})$ sum by $u$ and applying the logarithmic conversion gives $\Ot(R/\varepsilon)$. The common error allocation covers the omitted levels.

\subsection{Entropy bounds for the optional fixed-time route}\label{app:fixed-entropy}
\begin{proposition}[Fixed-time entropy bounds]\label{prop:fixed-entropy}
For $R\ge2$, fixed order, and the error and confidence ranges of Section~\ref{sec:applications}, the estimator of Section~\ref{sec:master} with Proposition~\ref{prop:fixed-band} satisfies the following bounds. Order one is excluded from the moment line. The moment tolerance is $u$; the other two lines use additive entropy error $\varepsilon$.
\begin{align}\label{eq:fixed-entropy-bounds}
 Q_S^{\mathrm{FT}}&=\Ot(R/\varepsilon^{3/2}),\nonumber\\
 Q_{M_\alpha}^{\mathrm{FT}}(u)&=\begin{cases}
 \Ot(R^{3/(2\alpha)-1/2}u^{-3/(2\alpha)}),&0<\alpha<3/2,\\
 \Ot(R^{2-\alpha}/u),&3/2\le\alpha<2,\\
 \Ot(1/u),&\alpha\ge2,
 \end{cases}\nonumber\\
 Q_{{\mathrm{R}},\alpha}^{\mathrm{FT}}&=\begin{cases}
 \Ot(R/\varepsilon^{3/(2\alpha)}),&1<\alpha<3/2,\\
 \Ot(R/\varepsilon),&3/2\le\alpha\le3,\\
 \Ot(R^{3/2-3/(2\alpha)}/\varepsilon),&\alpha>3.
 \end{cases}
\end{align}
Tsallis estimation uses $u=\Theta(\varepsilon)$; R\'enyi estimation below order one uses the same moment tolerance. For $1<\alpha<2$, the rank-independent alternative gives $\Ot(u^{-1/(\alpha-1)})$. Under the additional promise $\lambda_{\max}\le C/R$, fixed-time readout gives $\Ot(R/\varepsilon)$ for $\alpha\ge3/2$. Below $3/2$, the present fixed-time construction retains the first above-one R\'enyi bound, while variable time gives the $R/\varepsilon$ spectral-promise bound. Rank one and the exactly uniform promise require no queries.
\end{proposition}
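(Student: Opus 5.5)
The plan is to run Theorem~\ref{thm:master} with the fixed-time coefficient $a_\ell=G_\ell\sqrt{B_\ell}/\delta_\ell$ from \eqref{eq:coefficients}. We keep exactly the cutoffs, envelopes, mass bounds and uniform allocation $e_\ell=u/(8n)$ of \eqref{eq:application-recipe} that the variable-time proofs of Appendix~\ref{app:entropy} use. In each case the deterministic error budget is then already verified, and the only new work is the dyadic sum $\sum_\ell a_\ell$. The total cost is $\Ot(n\sum_\ell a_\ell/u)$, and since $n$ is logarithmic only the dominant endpoint of each geometric sum matters. We reuse \eqref{eq:dyadic-power-sum} throughout.

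\emph{von Neumann.} Take $G_\ell=\Theta(1+\ell)$ and $B_\ell=\min\{1,8R\delta_\ell\}$. On coarse scales $\delta\ge1/R$ we have $a_\ell\le G_\ell/\delta_\ell\le R G_\ell$. On fine scales $a_\ell=\Theta(G_\ell\sqrt R\,\delta_\ell^{-1/2})$, which increases toward the cutoff $t$ of \eqref{eq:log-cutoff}. So the sum is $\Ot(\sqrt{R/t})=\Ot(R/\sqrt\varepsilon)$, and dividing by $\varepsilon$ gives $\Ot(R/\varepsilon^{3/2})$.

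\emph{Moments.} Here $a_\ell=\delta^{\alpha-2}\min\{1,\sqrt{R\delta}\}$, so the fine part behaves like $\sqrt R\,\delta^{\alpha-3/2}$. For $\alpha<3/2$ this grows toward $t_R=\Theta((u/R)^{1/\alpha})$, giving $\sqrt R\,t_R^{\alpha-3/2}/u=R^{3/(2\alpha)-1/2}u^{-3/(2\alpha)}$. For $\alpha>3/2$ the fine part peaks at $1/R$ with value $R^{2-\alpha}$. The coarse part $\delta^{\alpha-2}$ is bounded by $R^{2-\alpha}$ for $\alpha<2$, and by a constant (respectively logarithmic) for $\alpha>2$ (respectively $\alpha=2$). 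The boundary order $3/2$ contributes only a level-count factor. The rank-free line takes $B_\ell=1$ with the cutoff $t_0$, as in \eqref{eq:rankfree-power-expanded}.

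\emph{R\'enyi.} We reuse Proposition~\ref{prop:moment-scale} and Lemma~\ref{lem:pilot} unchanged, with $u=hs$ and the refined caps \eqref{eq:moment-caps}. The fixed-time coefficient in the three regimes of \eqref{eq:renyi-regimes} uses $\sqrt{B_{\mathrm{ref}}}$ in place of $\sqrt{S_{\mathrm{ref}}B_{\mathrm{ref}}}$.
\begin{itemize}
\item Middle and upper regimes: $S_{\mathrm{ref}}=1$, so the coefficients, and hence the estimates \eqref{eq:renyi-middle-sum}--\eqref{eq:renyi-upper-sum}, are identical. These give $O(nR/h)$ for $\alpha\le3$ and the endpoint bound \eqref{eq:high-renyi} for $\alpha>3$.
\item Fine regime $\delta\le1/R$: the coefficient is $\sqrt R\,\delta^{\alpha-3/2}/(hs)$. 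For $\alpha\ge3/2$ it is controlled at $1/R$ by $R^{2-\alpha}/(hs)\le R/h$. For $\alpha<3/2$ it is controlled at $t_R(hs)$, giving $\sqrt R\,(hs/R)^{1-3/(2\alpha)}/(hs)=R^{3/(2\alpha)-1/2}(hs)^{-3/(2\alpha)}$.
\end{itemize}
In the $\alpha<3/2$ case we use $s\ge R^{1-\alpha}$, which yields $(hs)^{-3/(2\alpha)}\le h^{-3/(2\alpha)}R^{3(\alpha-1)/(2\alpha)}$. Combining the powers of $R$, the bound is $R\,h^{-3/(2\alpha)}$. With $h=\Theta(\varepsilon)$ in the final call this gives the first R\'enyi line. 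The pilot calls use constant $h$ and cost only $\Ot(R)$ each.

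\emph{Below one and the spectral promise.} Orders below one follow from the moment line with $u=(1-\alpha)\varepsilon/4$. For the spectral promise we retain only $\delta\le2C/R$ and take $u=\Theta(\varepsilon R^{1-\alpha})$. For $\alpha\ge3/2$ the fine sum is $O(R^{2-\alpha})$, giving $R/\varepsilon$. For $\alpha<3/2$ the fine sum is dominated at $t_R$ and reproduces the first R\'enyi line.

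The main obstacle is bookkeeping at the cutoff-dominated endpoint for $\alpha<3/2$. There we must check that substituting $s\ge R^{1-\alpha}$ produces exactly $R$ rather than a larger rank power. We must also check that the logarithmic factors at the critical orders $3/2$, $2$ and $3$ are absorbed by the count of retained levels.
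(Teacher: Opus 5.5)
Your proposal is correct and follows essentially the same route as the paper's proof. It keeps the fixed-time coefficient $G_\ell\sqrt{B_\ell}/\delta_\ell$ with the unchanged cutoffs and uniform allocation, uses the $\sqrt R\,\delta^{\alpha-3/2}$ fine-scale profile whose dominant endpoint switches at $\alpha=3/2$, and for R\'enyi reuses the middle and upper regime sums (where $S_{\mathrm{ref}}=1$) together with $s\ge R^{1-\alpha}$ to reduce the fine endpoint to $R/h^{3/(2\alpha)}$, treating the von Neumann and spectral-promise cases exactly as the paper does.
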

\begin{proof}
We use the common error allocation and mass bounds, with uniform $e_\ell=u/(8n)$. For a power profile, the fixed-time coefficient is $\delta^{\alpha-2}\sqrt{\min\{1,8R\delta\}}$, which becomes $O(\sqrt R\delta^{\alpha-3/2})$ below $1/R$. When $\alpha<3/2$, the finest scale $t_R=\Theta((u/R)^{1/\alpha})$ dominates and gives the first moment bound. For $3/2\le\alpha<2$, splitting the sum at $1/R$ gives $O(R^{2-\alpha})$, up to a level factor. For $\alpha\ge2$, the rank-independent construction of Appendix~\ref{app:power} applies. The earlier affine and logarithmic conversions give the Tsallis and below-one R\'enyi bounds.

For the logarithm, the fine coefficient is $O(\sqrt R\delta^{-1/2}(1+\log(1/\delta)))$. With $t=\Theta(\varepsilon/[R\log(R/\varepsilon)])$, its sum gives $\Ot(R/\sqrt\varepsilon)$ before the $1/\varepsilon$ statistical factor. Coarser levels cost at most $\Ot(R)$.

For R\'enyi order above one, the pilot of Lemma~\ref{lem:pilot} supplies a moment upper scale $s$, and the final moment tolerance is $u=hs$. As in Appendix~\ref{app:renyi}, we set $\delta_\star=s^{1/(\alpha-1)}$ and $\Lambda=s^{1/\alpha}$. The three spectral regions then have coefficients
\begin{equation}\label{eq:fixed-renyi-regions}
 \frac{G\sqrt B}{hs\delta}=
 \begin{cases}
 \Theta(\sqrt R\,\delta^{\alpha-3/2}/(hs)),&\delta\le1/R,\\
 \Theta(\delta^{\alpha-2}/(hs)),&1/R\le\delta\le \delta_\star,\\
 \Theta(\delta^{(\alpha-3)/2}/(h\sqrt s)),&\delta_\star\le\delta\le\Lambda.
 \end{cases}
\end{equation}
Empty intervals contribute zero. The finest endpoint in the first interval gives $R^{3/(2\alpha)-1/2}(hs)^{-3/(2\alpha)}$ for $\alpha<3/2$; using $s\ge R^{1-\alpha}$ bounds it by $R/h^{3/(2\alpha)}$. At and above $3/2$ that interval contributes at most $\Ot(R/h)$. On the remaining intervals $S=1$, so their sums are exactly the upper-scale sums already proved in Appendix~\ref{app:renyi}: at most $\Ot(R/h)$ up to order three and $\Ot(R^{3/2-3/(2\alpha)}/h)$ above it. The bounded pilot uses constant $h$ and the final call uses $h=\Theta(\varepsilon)$, so the same finite pilot proof applies with the fixed-time caps. For $1<\alpha<2$, setting $u=\Theta(\varepsilon R^{1-\alpha})$ also proves the relevant bounds without a pilot. With $\Lambda=C/R$, the upper-scale interval is truncated at a constant multiple of $1/R$, giving the stated spectral-promise bound for $\alpha\ge3/2$. For $1<\alpha<3/2$ the fine endpoint is unchanged, so this argument does not remove its extra precision factor. At $\alpha=3/2,2,3$, endpoint sums may add level logarithms; constants are not asserted uniform as the fixed order approaches a boundary.
\end{proof}
The gain over this stronger fixed-time comparator is therefore concentrated in the small-survival region. For von Neumann entropy it is a factor $\varepsilon^{-1/2}$ in the proved upper bounds. For Tsallis order at least $3/2$ and R\'enyi order at least $3/2$, the displayed best powers can already be recovered by fixed time. Local normalization and moment caps determine these shared powers; survival information determines the additional variable-time saving. The first-stop version remains the formal optional branch of the estimator of Section~\ref{sec:master}; Proposition~\ref{prop:adjacent-bands} separately shows that exact first-stop histories are not necessary for these fixed-time comparison powers.

\section{Accuracy Lower Bounds and Oracle Compatibility}\label{app:lower}
\subsection{A rank-two accuracy lower bound}
\begin{proposition}\label{prop:accuracy-lower}
For each fixed $\alpha>0$, $\alpha\ne1$, and sufficiently small $\varepsilon>0$, estimating any of $T_\alpha$, $S_\alpha^{\mathrm{R}}$, or $S$ to additive error $\varepsilon$ with success at least $2/3$ requires $\Omega(1/\varepsilon)$ controlled purified queries in the worst case, even at rank two. The constant for $S$ is absolute.
\end{proposition}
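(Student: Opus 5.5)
We reduce each entropy lower bound to a single distinguishing task between two rank-two states, and then invoke the standard amplitude-estimation (Heisenberg-limit) lower bound for purified access. Concretely, we take the rank-two family
\begin{equation*}
 \rho_p=p\ket0\bra0+(1-p)\ket1\bra1 .
\end{equation*}
We fix a base point $p_0\in(0,1/2)$ away from the endpoints, for instance $p_0=1/4$, and consider $p_1=p_0+c\varepsilon$ for a constant $c$ to be chosen. For each of $T_\alpha$, $S_\alpha^{\mathrm R}$, and $S$, the map $p\mapsto F(\rho_p)$ is smooth near $p_0$ with nonzero derivative. The derivatives are $-\alpha(p^{\alpha-1}-(1-p)^{\alpha-1})/(\alpha-1)$ for Tsallis, a similar expression divided by $M_\alpha$ for R\'enyi, and $\log((1-p)/p)$ for von Neumann. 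All three are nonzero at $p_0\ne1/2$. Choosing $c$ large enough, depending on $\alpha$, makes $|F(\rho_{p_1})-F(\rho_{p_0})|>2\varepsilon$ for all sufficiently small $\varepsilon$. For $S$ at $p_0=1/4$ the derivative is $\log 3$, so $c$ and hence the constant are absolute. An $\varepsilon$-accurate estimator with success probability $2/3$ therefore distinguishes $\rho_{p_0}$ from $\rho_{p_1}$ with success probability $2/3$.

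Next we exhibit purification oracles that are close as unitaries. We use $U_p$ acting on a qubit pair as $\ket{00}\mapsto\sqrt p\ket{00}+\sqrt{1-p}\ket{11}$, completed as the rotation $R_y(2\theta_p)$ on the first qubit followed by a CNOT, with $\sin\theta_p=\sqrt p$. Then
\begin{equation*}
 \norm{U_{p_0}-U_{p_1}}=O(|\theta_{p_0}-\theta_{p_1}|)=O(\varepsilon),
\end{equation*}
since $\theta_p$ is smooth away from $p\in\{0,1\}$. The same bound holds for the inverses and for controlled versions.

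The hybrid argument~\cite{BBBV97} then applies. Replacing each of the $T$ oracle calls (forward, inverse, or controlled) one at a time changes the final state by at most $T\cdot O(\varepsilon)$ in norm. Distinguishing with constant bias requires this distance to be $\Omega(1)$, so $T=\Omega(1/\varepsilon)$. The lower bound must hold for every purification and completion consistent with $\rho_p$, as the model requires. We therefore only need these particular completions: a worst-case algorithm must succeed on them, so exhibiting one hard pair of oracles suffices.

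The main point needing care is the nondegeneracy of the derivative for all three functionals uniformly in the fixed parameter. This includes $0<\alpha<1$, where $M_\alpha\ge1$ and the logarithm is harmless, and the R\'enyi case, where dividing by $M_\alpha\in[2^{1-\alpha},1]$ changes only constants. Choosing $p_0\ne1/2$ handles this, and the rank-two restriction is then automatic.
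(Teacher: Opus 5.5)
Your proposal is correct and follows essentially the same route as the paper's proof: the rank-two family $\operatorname{diag}(t,1-t)$ near $t=1/4$, a rotation-then-CNOT purification whose forward, inverse, and controlled forms differ by $O(\varepsilon)$ in operator norm, nonvanishing derivatives of all three entropies at that point, and the hybrid argument. The only slip is cosmetic: $R_y(2\theta_p)$ followed by CNOT puts amplitudes $\cos\theta_p$ and $\sin\theta_p$ on $\ket{00}$ and $\ket{11}$, so your choice $\sin\theta_p=\sqrt p$ prepares $\operatorname{diag}(1-p,p)$ rather than $\operatorname{diag}(p,1-p)$; this has the same spectrum and changes nothing.
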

\begin{proof}
We restrict to $\rho_t=\operatorname{diag}(t,1-t)$ near $t=1/4$. A rotation by angle $\theta$ with $t=\sin^2\theta$, followed by copying the computational-basis label to an environment qubit, gives a valid purification oracle with a fixed unitary completion. Its forward, inverse, and controlled forms differ in operator norm by $O(|\theta-\theta'|)$. The three entropy derivatives are
\begin{equation}\label{eq:entropy-derivatives}
 \begin{split}
 \frac{d}{dt}S(\rho_t)&=\log((1-t)/t),\\
 \frac{d}{dt}T_\alpha(\rho_t)&=\frac{\alpha}{1-\alpha}
 (t^{\alpha-1}-(1-t)^{\alpha-1}),\\
 \frac{d}{dt}S_\alpha^{\mathrm{R}}(\rho_t)&=
 \frac{\alpha(t^{\alpha-1}-(1-t)^{\alpha-1})}
 {(1-\alpha)(t^\alpha+(1-t)^\alpha)}.
 \end{split}
\end{equation}
Each derivative is nonzero at $t=1/4$, where $dt/d\theta$ is also bounded away from zero. We choose two angles separated by $c_\alpha\varepsilon$ whose entropy values differ by more than $2\varepsilon$. An accurate estimate distinguishes their oracles with constant advantage. By replacing the $Q$ queries one at a time, the hybrid argument~\cite{BBBV97} bounds the final trace distance by $O(Q\varepsilon)$. The same argument covers adaptive algorithms after their measurements are recorded coherently. Constant distinguishability therefore requires $Q=\Omega(1/\varepsilon)$.
\end{proof}
Together with Theorem~\ref{thm:tsallis}, this gives fixed-order Tsallis complexity $\widetilde\Theta(1/\varepsilon)$ for every real $\alpha\ge2$, for the unrestricted class or any rank cap at least two.

\subsection{Sample-to-query lifting with controlled access}
We use the random-completion simulation of~\cite[Theorem~1.5]{Tang26}, alongside the block-encoding lifting perspective of~\cite[Theorem~3.1]{Lifting25}. For an environment of dimension at least twice the system dimension (or twice a known rank), its distribution is supported on reflection completions of a purification oracle. Forward and inverse calls can be simulated to constant total error from $O(Q^2)$ samples. Since our algorithms must succeed for every valid completion, correctness also holds on this distribution. Known zero padding supplies the required environment size.

The controlled extension follows at the sample-based projector-simulation step of that proof. For a sample state $\sigma$ on a register $P$, target $T$, control $c$, and their SWAP $F_{TP}$, we apply
\begin{equation}\label{eq:controlled-swap}
 U_h=\exp(-ih\ket1\bra1_c\otimes F_{TP}).
\end{equation}
We denote by $\mathcal E_h$ the channel obtained after tracing out $P$. Its action on an input operator $X$ satisfies
\begin{equation}\label{eq:controlled-generator}
 \mathcal E_h(X)=X-ih[\ket1\bra1\otimes\sigma,X]+O(h^2).
\end{equation}
Indeed, expanding in control blocks and using $\Tr_P[F(Y\otimes\sigma)]=\sigma Y$ and $\Tr_P[(Y\otimes\sigma)F]=Y\sigma$ proves the first-order identity. The remainder is $O(h^2)$ in diamond norm: the generator has norm at most one, and adjoining a sample and tracing it out are contractive channels. With $N_{\mathrm{sam}}$ fresh samples and $h=t/N_{\mathrm{sam}}$, the channel hybrid bound is $O(t^2/N_{\mathrm{sam}})$. For a pure projector and $t=\pi$, this simulates a controlled reflection; a known control phase chooses its sign convention. All random purification choices in the imported construction remain shared. Setting per-call error $O(1/Q)$ gives $O(Q^2)$ samples for bounded-control forward/inverse access as well.

Consequently, a sample lower bound $L_{\mathrm{sam}}$ for separated entropy hypotheses implies $\Omega(\sqrt{L_{\mathrm{sam}}})$ queries in this model. In~\cite[Theorem~5.11]{WangFramework26}, for fixed $1<\alpha<2$ and sufficiently small $\varepsilon$, the Tsallis sample lower bound is $\Omega(\varepsilon^{-2/(\alpha-1)}/\log^4(1/\varepsilon))$ when
\begin{equation}\label{eq:hard-dimension}
 d\ge1+\lfloor a_\alpha\varepsilon^{-1/(\alpha-1)}\rfloor
\end{equation}
for an appropriate positive constant $a_\alpha$. Lifting matches the rank-free precision exponent in Theorem~\ref{thm:tsallis}. A rank-constrained class must contain this hard family for the same conclusion. Applying the same lifting to~\cite[Theorem~5.3]{WangFramework26} gives the weaker logarithm-suppressed von Neumann bound $\widetilde\Omega(R/\sqrt\varepsilon+1/\varepsilon)$ for sufficiently large $R$ and small $\varepsilon$, by embedding an $R$-dimensional hard family.

\subsection{A density functional with a necessary degree--precision product}\label{app:product-example}

The entropy bounds exploit relations between function magnitude and spectral mass; smoothness or polynomial representability alone does not guarantee the same saving. We demonstrate this within the purified-state model itself. For the Chebyshev polynomial $T_D$, the profile $g_D(x)=T_D(2x-1)$ is bounded by one on $[0,1]$, but estimating its spectral expectation can require a product of degree and inverse precision.

\begin{proposition}[A necessary product for a density functional]\label{prop:product-cost}
For every even integer $D\ge2$ and $0<\varepsilon\le1/20$, the functional $F_D(\rho)=\Tr[\rho g_D(\rho)]$ has query complexity $\Theta(D/\varepsilon)$ on the promised rank-two family
\begin{equation}\label{eq:rotation-instance}
 \rho_\theta=\operatorname{diag}\!\left(
 \frac{1+\sin\theta}{2},\frac{1-\sin\theta}{2}\right),
 \qquad 0\le\theta\le\pi/3.
\end{equation}
The query model includes controlled purified preparation and its inverse, and the success probability is at least $2/3$.
\end{proposition}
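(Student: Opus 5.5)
The plan is to compute $F_D$ explicitly on the family \eqref{eq:rotation-instance}, which reduces both bounds to the behaviour of one oscillating scalar function of $\theta$.

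\emph{The functional on the family.} The eigenvalues of $\rho_\theta$ are $x_\pm=(1\pm\sin\theta)/2$, so $2x_\pm-1=\pm\sin\theta$. Since $D$ is even, $T_D$ is even, and both eigenvalues give the same value $T_D(\sin\theta)$. Trace normalization then gives $F_D(\rho_\theta)=T_D(\sin\theta)$. Writing $\sin\theta=\cos(\pi/2-\theta)$ yields
\begin{equation*}
 F_D(\rho_\theta)=\cos\bigl(D(\pi/2-\theta)\bigr)=(-1)^{D/2}\cos(D\theta).
\end{equation*}
This function oscillates at frequency $D$ in $\theta$. For the lower bound I will use an explicit purification: rotate a qubit by angle $\varphi$ with $\cos^2\varphi=(1+\sin\theta)/2$, then copy the computational-basis label into an environment qubit. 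From $\cos2\varphi=\sin\theta$ we get $\varphi=\pi/4-\theta/2$, which is linear in $\theta$. Hence the forward, inverse, and controlled oracles at $\theta$ and $\theta'$ differ in operator norm by $O(|\theta-\theta'|)$, as in the proof of Proposition~\ref{prop:accuracy-lower}.

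\emph{Upper bound $O(D/\varepsilon)$.} Applying QSVT to $\rho$ with the polynomial $T_D(2x-1)$ fails, because this polynomial is not bounded on $[-1,0]$. I will use instead the identity $T_D(2y^2-1)=T_{2D}(y)$. The purification gives a projected-unitary encoding whose singular values are $\sqrt{\lambda_i}$ (the Gily\'en--Li construction underlying \eqref{eq:block}). The even transform $T_{2D}$ is bounded by one on $[-1,1]$ and has degree $2D$, so applying it gives a block encoding of $T_{2D}(\sqrt\rho)=g_D(\rho)$ with $O(D)$ queries and constant normalization. A Hadamard test on $\ket{\psi_\rho}$ has signal $\bra{\psi_\rho}g_D(\rho)\otimes I\ket{\psi_\rho}=F_D(\rho)$. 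Additive amplitude estimation to error $\Theta(\varepsilon)$ calls this circuit $O(1/\varepsilon)$ times, for $O(D/\varepsilon)$ queries in total. This bound holds for every state, not only on the family.

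\emph{Lower bound $\Omega(D/\varepsilon)$.} I will pick $\theta_0\in[0,\pi/3]$ with $|\sin(D\theta_0)|=1$; such a point exists because $D\pi/3\ge2\pi/3>\pi/2$. Set $\theta_1=\theta_0\pm3\varepsilon/D$, choosing the sign so that $\theta_1$ stays in the interval; this is possible since $\varepsilon\le1/20$. Taylor expansion gives
\begin{equation*}
 |F_D(\rho_{\theta_0})-F_D(\rho_{\theta_1})|\ge D\cdot\frac{3\varepsilon}{D}-\frac{D^2}{2}\Bigl(\frac{3\varepsilon}{D}\Bigr)^2=3\varepsilon-\frac{9\varepsilon^2}{2}>2\varepsilon.
\end{equation*}
An $\varepsilon$-accurate estimator with success probability at least $2/3$ therefore distinguishes the two oracles with constant advantage. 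The two oracles differ by $O(\varepsilon/D)$ per call, including inverse and controlled versions. The hybrid argument~\cite{BBBV97}, with adaptivity handled by recording measurements coherently, bounds the final trace distance by $O(Q\varepsilon/D)$. Constant distinguishability therefore forces $Q=\Omega(D/\varepsilon)$.

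The main obstacle is the upper bound's need for a \emph{globally} bounded degree-$O(D)$ transform of $\rho$. The square-root singular-value encoding together with the identity $T_D(2y^2-1)=T_{2D}(y)$ resolves it. If that encoding were unavailable, the fallback is to block-encode $(2\rho-I)/3$ by LCU and apply a degree-$O(D)$ bounded polynomial approximating $T_D(3z)$ on $[-1/3,1/3]$. Because of the exterior growth of $T_D$, that route needs care to keep the degree at $O(D)$.
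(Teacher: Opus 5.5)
Your lower bound is essentially the paper's. It uses the same purification $\mathrm{CNOT}_{S,E}[R(\pi/4-\theta/2)\otimes I]$, a hard pair starting at $\theta_0=\pi/(2D)$ separated by $\Theta(\varepsilon/D)$, and the hybrid argument. The paper takes $4\varepsilon/D$ and the exact gap $\sin(4\varepsilon)>2\varepsilon$; your Taylor estimate with $3\varepsilon/D$ also works.

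The gap is in the upper bound. You claim the purification yields a constant-normalization projected-unitary encoding with singular values $\sqrt{\lambda_i}$ via ``the Gily\'en--Li construction underlying \eqref{eq:block}.'' That is false. The swap construction \eqref{eq:block} has block $\rho$, so its singular values are $\lambda_i$, and applying $T_D(2x-1)$ to it is exactly the unbounded transform you set out to avoid. An encoding with singular values $\sqrt{p_i}$ exists only for \emph{labeled} access, where the label basis is known; the paper stresses this distinction when comparing with Chen et al. For an unknown eigenbasis no such $O(1)$-query encoding is available in general. If it were, resolving the entropy cutoff $\widetilde\Theta(\varepsilon/R)$ would cost only $\widetilde O(\sqrt{R/\varepsilon})$ per transform, giving $\widetilde O(\sqrt R/\varepsilon^{3/2})$ von Neumann estimation, which is incompatible with the lifted $\widetilde\Omega(R/\sqrt\varepsilon)$ bound for large $R$. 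So your remark that the bound ``holds for every state'' must be dropped.

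Your fallback does not close the gap either. Suppose $P$ is bounded by one on $[-1,1]$ and $\eta$-close to $T_2(3z)=18z^2-1$ on $[-1/3,1/3]$. Chebyshev extremal growth, with $n=\max\{\deg P,2\}$, gives $16\le|P(1)-17|\le\eta\,T_n(3)$. Hence $\deg P=\Omega(\log(1/\eta))$, and that route cannot stay at degree $O(D)$ when $D\ll\log(1/\varepsilon)$.

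What rescues your route is the promise itself: every $\rho_\theta$ is diagonal in the computational basis, so the labels are known.
\begin{itemize}
\item Take a fresh qubit $C$, input projector $\ket{00}\bra{00}_{SE}\otimes I_C$, and output projector $\sum_j\ket{j}\bra{j}_S\otimes I_E\otimes\ket{j}\bra{j}_C$. The oracle's projected block is then $\sum_j\sqrt{\lambda_j}\,\ket{j,e_j,j}\bra{0,0,j}$.
\item The exact Chebyshev QSVT for $T_{2D}$ uses $2D$ queries and gives $\operatorname{diag}(g_D(\lambda_0),g_D(\lambda_1))$ on $C$.
\item A Hadamard test with $C$ carrying marginal $\rho_\theta$ (a CNOT from the system of a second prepared copy) has signal $F_D$, so amplitude estimation gives $O(D/\varepsilon)$.
\end{itemize}

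The paper uses the promise even more directly and needs no QSVT. It amplitude-estimates $p=\bra{0}\rho_\theta\ket{0}$ to error $\varepsilon/(8D)$ and projects onto $[1/2,(1+\sqrt3/2)/2]$. It then outputs $(-1)^{D/2}\cos(D\arcsin(2p-1))$. Two facts make this accurate: the inverse map has derivative at most $4$ on that interval (this is where $\theta\le\pi/3$ enters), and $\cos(D\theta)$ is $D$-Lipschitz.
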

\begin{proof}
Because $D$ is even, the two eigenvalues give the same profile value. The Chebyshev identity therefore yields
\begin{equation}\label{eq:density-product-target}
 F_D(\rho_\theta)=T_D(\sin\theta)=(-1)^{D/2}\cos(D\theta).
\end{equation}
For an upper bound valid for every purification of a state in this promised family, we estimate $p=\langle0|\rho_\theta|0\rangle$ to additive error $\varepsilon/(8D)$ by amplitude estimation~\cite{BHMT02}. We project the estimate onto $[1/2,(1+\sqrt3/2)/2]$ and recover $\theta=\arcsin(2p-1)$. On this interval, the derivative of the inverse map is at most four, while $\cos(D\theta)$ is $D$-Lipschitz in $\theta$. The resulting functional error is at most $\varepsilon/2$, leaving room for finite arithmetic, and the query cost is $O(D/\varepsilon)$.

For the lower bound, we use the valid purification unitary
\begin{equation}\label{eq:density-product-oracle}
 \begin{gathered}
 U_\theta=\mathrm{CNOT}_{S,E}
       [R(\pi/4-\theta/2)_S\otimes I_E],\\
 R(\phi)=\begin{pmatrix}\cos\phi&-\sin\phi\\
                          \sin\phi&\cos\phi\end{pmatrix}.
 \end{gathered}
\end{equation}
Here the controlled-NOT has control $S$ and target $E$. Acting on $\ket{00}$, this unitary has squared Schmidt coefficients equal to the two diagonal entries of \eqref{eq:rotation-instance}. We choose
\begin{equation}\label{eq:rotation-hard-pair}
 \theta_0=\frac{\pi}{2D},\qquad
 \theta_1=\theta_0+\frac{4\varepsilon}{D}.
\end{equation}
Both angles lie in the promised interval, and their target values differ by $\sin(4\varepsilon)>2\varepsilon$. The two preparation unitaries differ in operator norm by at most $|\theta_1-\theta_0|/2=2\varepsilon/D$, as do their inverses and controlled versions. The query hybrid argument~\cite{BBBV97} bounds the final distinguishability of any $Q$-query algorithm by $O(Q\varepsilon/D)$. An additive-$\varepsilon$ estimate distinguishes these separated targets with constant advantage, so $Q=\Omega(D/\varepsilon)$. Since correctness in our model is required for every allowed purification and completion, this family of completions suffices for the lower bound.
\end{proof}

Here the polynomial degree varies, and the normalized local analytic bounds cannot be taken uniformly in $D$. This is consistent with Definition~\ref{def:analytic}, whose constants depend on the specified function family. The example distinguishes an avoidable product of separately maximized algorithmic costs from a genuine product required by the input problem. In our entropy applications, the local magnitudes and probability bounds provide the additional structure that yields the improved scaling.

\subsection{Known integer moments and comparison ranges}

For a fixed integer $k\ge2$, the cyclic shift on $k$ independently prepared copies has expectation $\operatorname{Tr}(\rho^k)$. Its Hadamard test and ordinary amplitude estimation give additive moment error $v$ using $O(1/v)$ purified queries~\cite{Ekert02,BHMT02}. Taking $v=\Theta(\varepsilon R^{1-k})$ therefore gives the existing R\'enyi upper bound $O(R^{k-1}/\varepsilon)$. This is a useful additional baseline at small integer orders or very high requested precision.

\begin{equation}\label{eq:integer-renyi-prior}
 U_{\mathrm{R}}^{\mathrm{int}}(k)=\min\left\{
 \frac{R}{\varepsilon^{1+1/k}},\frac{R^{k-1}}{\varepsilon}\right\}.
\end{equation}

\begin{proposition}[Direct odd-moment baseline]\label{prop:odd-moment}
For fixed $k=2m+1\ge3$, additive-$\varepsilon$ R\'enyi estimation uses $O(R^m\varepsilon^{-1}\log(1/\nu))$ controlled purified queries.
\end{proposition}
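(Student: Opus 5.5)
Write the odd moment as a block product, $\operatorname{Tr}(\rho^{2m+1})=\operatorname{Tr}(\rho^{m}\,\rho\,\rho^{m})$, and read it out as an unnormalized expectation. With $k=2m+1$ copies, the cyclic-shift Hadamard test gives an $O(1/v)$ estimator of $\operatorname{Tr}(\rho^k)$ to additive error $v$. On its own this forces $v=\Theta(\varepsilon R^{1-k})=\Theta(\varepsilon R^{-2m})$ and hence cost $R^{2m}/\varepsilon$, which is too much. The plan is to get a quadratic saving on the exponent of $R$ from amplitude estimation. To do so, prepare a \emph{state} whose squared norm is the moment, rather than reading the moment off as a signed expectation.

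\textbf{Step 1: a vector whose squared norm is the moment.} Take $m$ fresh copies of the purification $\ket{\psi_\rho}$ together with one processed register, and apply a cyclic shift on the $m+1$ system registers. In addition, project the processed system against the original purification. The resulting unnormalized branch $\ket{\Phi}$ should satisfy
\[
\|\Phi\|^2=\operatorname{Tr}(\rho^{m}\,\rho\,\rho^{m})=M_{2m+1}.
\]
One concrete route uses the swap block of \eqref{eq:block}: applying $W_\rho$ successively $m$ times to the system half of $\ket{\psi_\rho}$ while retaining the $0^a$ flags gives the branch $(\rho^{m}\otimes I_E)\ket{\psi_\rho}$. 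Its squared norm is $\operatorname{Tr}(\rho^{m}\rho\rho^{m})=M_{2m+1}$, and it costs $O(m)=O(1)$ queries. The success event is then the conjunction of the $m$ all-zero auxiliary flags, which is a computational-basis predicate, so success probability $p=M_{2m+1}$ exactly.

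\textbf{Step 2: relative estimation.} Apply ordinary amplitude estimation to $p=M_k\ge R^{1-k}=R^{-2m}$. By \eqref{eq:cal-bhmt}, $M=O(1/(\gamma\sqrt p))$ iterations give relative error $\gamma$. Use the finite doubling search of Appendix~\ref{app:ordinary-relative}, with $b_*=R^{-2m}$, first to find the scale; then take $M_{\mathrm{fin}}$ from \eqref{eq:cal-final-M}. The total cost is $O(\gamma^{-1}p^{-1/2})\le O(R^{m}/\gamma)$ preparations, and the median over $O(\log(1/\nu))$ repetitions supplies the confidence. Choosing $\gamma=\Theta((\alpha-1)\varepsilon)=\Theta(m\varepsilon)$ and applying the logarithmic conversion as in \eqref{eq:renyi-conversion} yields additive-$\varepsilon$ R\'enyi error. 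The total is $O(R^m\varepsilon^{-1}\log(1/\nu))$ queries. The doubling search contributes a geometric sum dominated by its final term, and the logarithmic search-length factor is absorbed by the median analysis; if that absorption fails, I would instead fix $M$ directly from the known lower bound $R^{-2m}$ with additive accuracy $\gamma R^{-2m}$, which keeps the $O(R^m/\gamma)$ count without any search.

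\textbf{Main obstacle.} The hard step is verifying that the retained-flag success probability equals $\operatorname{Tr}(\rho^{2m+1})$ exactly. The $m$ applications of $W_\rho$ reuse fresh preparation registers each time, and the auxiliaries $A_1E_1$ must be restored to zero on the success branch. Conditioning on those zeros at every step must then compose to $\rho^m$ acting on $S$, with no cross terms from the untouched environment $E$; orthogonality of the environment Schmidt vectors removes those cross terms. The remaining points are routine, as in Appendix~\ref{app:budgets}: exact query counting (two per $W_\rho$, with inverses included) and checking the endpoint case $M_k=R^{-2m}$.
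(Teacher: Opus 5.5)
Your concrete route is exactly the paper's proof: $m$ applications of $W_\rho$ on distinct zero-initialized auxiliary registers, with success defined as all auxiliaries being zero at the end, so that $p=\|(\rho^m\otimes I_E)\ket{\psi_\rho}\|^2=\Tr(\rho^{2m+1})\ge R^{-2m}$, followed by amplitude estimation of length $O(R^m/\varepsilon)$, a median, projection onto $[R^{-2m},1]$, and the logarithmic conversion. Two small corrections: the auxiliaries must be $m$ separate registers rather than reused ones, since $W_\rho^2=I$; and no doubling search is needed, because a fixed $M=\Theta(R^m/\gamma)$ already bounds the error $2\pi\sqrt p/M+\pi^2/M^2$ by $O(\gamma p)$ for every $p\ge R^{-2m}$ --- this is a relative guarantee, not a uniform additive accuracy $\gamma R^{-2m}$, which would cost $R^{2m}/\gamma$ when $p$ is constant.
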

\begin{proof}
We allocate $m$ independent density-block auxiliary registers $A_1,\ldots,A_m$, initially zero, and a common system $S$. Each $W_\rho^{(A_jS)}$ has zero block $\rho$. Projecting the distinct auxiliary registers onto zero contracts these blocks in sequence and gives
\begin{equation}\label{eq:odd-block-product}
 \begin{gathered}
 V_m=W_\rho^{(A_mS)}\cdots W_\rho^{(A_1S)},\\
 (\bra{0}^{\otimes m}\otimes I)V_m(\ket{0}^{\otimes m}\otimes I)=\rho^m,\\
 p=\| (\rho^m\otimes I_E)\ket{\psi_\rho}\|^2
   =\Tr(\rho^{2m+1})\ge R^{-2m}.
 \end{gathered}
\end{equation}
Here each $\ket0$ denotes a complete auxiliary register. One preparation uses $2m+1$ queries, including the processed purification, and its inverse has the same cost. We define success by the joint zero state of all auxiliaries at the end, without intermediate postselection. Amplitude estimation with length $O(R^m/\varepsilon)$ satisfies $|\widehat p-p|\le c_k\varepsilon p$ by \eqref{eq:cal-bhmt}, for a sufficiently small fixed $c_k$; an odd median reduces the failure probability to $\nu$. Projection onto $[R^{-2m},1]$ preserves the error bound, and $|\log(1+z)|\le2|z|$ for $|z|\le1/2$ gives the entropy guarantee. The auxiliary registers must be independent: since the swap-based encoding satisfies $W_\rho^2=I$, repeated powers on the same auxiliaries would not encode $\rho^m$.
\end{proof}
Consequently orders two and three already admit $O(R/\varepsilon)$ at constant confidence. Equation~\eqref{eq:integer-renyi-prior} records the permutation and earlier general alternatives; for odd orders it is further improved by Proposition~\ref{prop:odd-moment}. For odd $k\ge5$, $(k-1)/2\ge3/2-3/(2k)$, so this additional baseline is dominated by our multi-level term. The even-order permutation bound is also dominated for $k\ge4$. Comparing the displayed powers with the general bound $\Ot(R/\varepsilon^{1+1/\alpha})$, with logarithmic factors suppressed, again makes our term no larger when $\varepsilon\le R^{-(\alpha-3)/2}$.

The one-sided spectral promise at $C=1$ fixes all entropy values. The final subsection below gives this zero-query case and the precision lower bound for $C>1$.

\subsection{Comparison bounds and parameter ranges}\label{app:table-certificates}

The quantum-state reduction of WZL~\cite{WZL24} replaces its cutoff and power parameters by $\delta'=\delta^2,c'=c/2$ below order one, and by $\nu'=\nu^2,c'=c/2,\beta'=\beta^2$ above order one. Its Corollaries~4 and~5 then give $\widetilde O(R^{1/\alpha}/\varepsilon^{1+1/\alpha})$ and $\widetilde O(R/\varepsilon^{1+1/\alpha})$, respectively, at fixed confidence; these are the rank-based alternatives in our tables. These source parameters are unrelated to our failure budget and readout normalization.

For $\alpha>1$, the Tsallis entropy is a smooth function of the R\'enyi entropy with derivative $M_\alpha\le1$. Thus a R\'enyi additive-error estimator also gives a Tsallis estimator to the same error, supplying the rank-based alternative in Table~\ref{tab:results}~\cite{WZL24}.

For the below-one Tsallis comparison, we define
\begin{equation}\label{eq:prior-tsallis}
 U_T^{\mathrm{old}}=\min\left\{
 \frac{R^{(3-\alpha^2)/(2\alpha)}}{\varepsilon^{(3+\alpha)/(2\alpha)}},
 \frac{R^{2/\alpha-\alpha}}{\varepsilon^{1+1/\alpha}}\right\}.
\end{equation}

The first bound is the direct estimator of~\cite[Theorem~III.9]{WangEntropies24}. The second follows from the relative moment estimator underlying~\cite[Corollary~4]{WZL24}, using $M_\alpha\le R^{1-\alpha}$ and relative tolerance $\Theta(\varepsilon/R^{1-\alpha})$. Thus the comparison includes this conversion rather than only the older direct bound.

For completeness, the dimension-based below-one R\'enyi estimator of~\cite[Corollary~3]{WZL24} also yields a Tsallis baseline. With $M_\alpha\le R^{1-\alpha}$, an entropy error of order $\varepsilon/R^{1-\alpha}$ gives moment error $O(\varepsilon)$ by exponentiation, hence Tsallis error $O(\varepsilon)$. We apply that estimator with this entropy tolerance to obtain the Tsallis query bound
\begin{equation}\label{eq:prior-tsallis-dimension}
 \Ot\!\left(
 d_\rho^{(1+\alpha)/(2\alpha)}
 R^{(1-\alpha)(1+1/(2\alpha))}
 \varepsilon^{-1-1/(2\alpha)}\right).
\end{equation}
The constants in the entropy tolerance can be decreased to achieve the requested error exactly. Table~\ref{tab:results} takes the minimum of this bound and the two terms in \eqref{eq:prior-tsallis}. For below-one R\'enyi entropy on full-rank states ($d_\rho=R$), dividing our bound by the dimension-based baseline gives $R^{(1-\alpha)/(2\alpha)}\varepsilon^{(2\alpha-1)/(2\alpha)}$. Thus the earlier dimension-based algorithm is no worse for $\alpha\le1/2$; for $1/2<\alpha<1$, comparison of the displayed powers makes our term no larger when $\varepsilon\le R^{-(1-\alpha)/(2\alpha-1)}$. These comparisons describe the rank and precision powers with logarithmic factors suppressed.

All statements in this subsection concern sufficiently small additive entropy error, constant success probability, and fixed order. For dimension-dependent statements, we take a sufficiently large integer $R$ and embed the $R$-dimensional hard instance in a known subspace of the actual system. The resulting state has rank at most $R$. Any algorithm for the rank-promised class must solve this subclass. In contrast, a dimension-dependent upper bound cannot replace its dimension by an unknown rank without a support-access construction.

Theorem~1.1 and Definition~1.1 of~\cite{BoundsList25} state the lifting interface with controlled forward/inverse purified access explicitly. Combining that interface with the sample theorems below proves the query lower bounds.

To state the lifted lower bounds compactly, we define the following four expressions. The labels $\mathrm{T}$ and $\mathrm{R}$ denote Tsallis and R\'enyi entropy, respectively; the superscripts $<1$, $\mathrm{nonint}$, and $\mathrm{int}$ denote orders below one, noninteger orders above one, and integer orders at least two. Their assumptions and sample-to-query conversions are verified immediately below.
\begin{align}\label{eq:table-lower-defs}
 L_{\mathrm{T}}^{<1}&=R^{(1+1/\alpha)/2}\varepsilon^{-1/(2\alpha)}
              +R^{1-\alpha}/\varepsilon,\nonumber\\
 L_{\mathrm{R}}^{<1}&=R^{(1+1/\alpha)/2}\varepsilon^{-1/(2\alpha)}
              +R^{(1/\alpha-1)/2}/\varepsilon,\nonumber\\
 L_{\mathrm{R}}^{\mathrm{nonint}}&=R\varepsilon^{-1/(2\alpha)}
              +R^{(1-1/\alpha)/2}/\varepsilon,\nonumber\\
 L_{\mathrm{R}}^{\mathrm{int}}&=R^{1-1/\alpha}\varepsilon^{-1/\alpha}
              +R^{(1-1/\alpha)/2}/\varepsilon.
\end{align}

For $0<\alpha<1$, Theorems~5.9 and~5.11 of~\cite{WangFramework26} give, respectively, the R\'enyi and Tsallis sample bounds. Their common first term is

\begin{equation}\label{eq:sample-below-one}
 \frac{R^{1+1/\alpha}}{\varepsilon^{1/\alpha}\log^2R
              \max\{1,\varepsilon^{1/\alpha}\log^2R\}}.
\end{equation}

Their second terms are $R^{1/\alpha-1}/\varepsilon^2$ and $R^{2-2\alpha}/\varepsilon^2$. Taking square roots gives $\OtOmega(L_{\mathrm{R}}^{<1})$ and $\OtOmega(L_{\mathrm{T}}^{<1})$. The maximum in the denominator affects only the suppressed logarithms in the fixed small-error range. The square root follows from simulating queries with samples in the lifting reduction.

For noninteger $\alpha>1$, Theorem~5.9 of~\cite{WangFramework26} gives sample lower bound

\begin{equation}\label{eq:sample-noninteger}
 \Omega\!\left(
 \frac{R^2}{\varepsilon^{1/\alpha}\log^{\kappa_\alpha}R}
 +\frac{R^{1-1/\alpha}}{\varepsilon^2}\right),
 \qquad \kappa_\alpha=\frac{4\lceil\alpha\rceil}{\lceil\alpha\rceil-\alpha}.
\end{equation}

Applying the controlled-access lifting argument to this sample bound gives $\OtOmega(L_{\mathrm{R}}^{\mathrm{nonint}})$ for fixed noninteger $\alpha$; its logarithmic exponent depends on the distance to an integer. For integer $\alpha\ge2$, we use the direct query consequence in~\cite[Theorem~2.24]{BoundsList25}, based on the sample analysis of~\cite{Acharya20}; this yields $\Omega(L_{\mathrm{R}}^{\mathrm{int}})$.

For von Neumann entropy, Theorem~5.3 of~\cite{WangFramework26} gives sample lower bound \[\Omega(R^2/[\varepsilon\log^2R\max\{1,\varepsilon\log^2R\}]+\log^2R/\varepsilon^2).\]
Lifting yields $\OtOmega(R/\sqrt\varepsilon+1/\varepsilon)$. This matches the linear rank dependence of Theorem~\ref{thm:vn} at constant precision and leaves a gap in the joint rank--precision dependence. Proposition~\ref{prop:accuracy-lower} supplies the precision lower bound for fixed small ranks outside the large-$R$ statement.

\subsection{One-sided spectral promises}\label{app:spectral-promise-lower}

If $\lambda_{\max}\le1/R$ and rank is at most $R$, trace one forces exactly $R$ positive eigenvalues, all equal to $1/R$. Thus $S=S_\alpha^{\mathrm{R}}=\log R$ and $T_\alpha=(1-R^{1-\alpha})/(\alpha-1)$ require zero queries. For fixed $C>1$, a precision lower bound is possible within the promise itself. For even $R$, take $R/2$ eigenvalues equal to $(1+t)/R$ and $R/2$ equal to $(1-t)/R$, with $t$ in a neighborhood of a fixed $t_0\in(0,\min\{1,C-1\})$. Then

\begin{equation}\label{eq:promised-renyi-family}
 S_\alpha^{\mathrm{R}}(\rho_t)=\log R+
 \frac1{1-\alpha}\log\frac{(1+t)^\alpha+(1-t)^\alpha}{2}.
\end{equation}

The derivative in $t$ is nonzero at $t_0$. A known uniform preparation inside each group and a two-dimensional mixing rotation supply purification oracles with norm difference $O(|t-t'|)$. Two choices separated by $\Theta(\varepsilon)$ prove $\Omega(1/\varepsilon)$ by the same hybrid argument. This proves the precision lower bound for the promised family at fixed $R$. It suffices for the corresponding qualified entry in Table~\ref{tab:promises}.
\Needspace{13\baselineskip}
\section*{Acknowledgments}
This work is supported by the National Natural Science Foundation of China (Grant No. 62501060), the Quantum Science and Technology–National Science and Technology Major Project (Grant No. 2023ZD0300200), the Beijing Natural Science Foundation (Grant No. Z250004), NSAF (Grant No. U2330201), the National Natural Science Foundation of China (Grant No. 12361161602), and the Beijing Science and Technology Planning Project (Grant No. Z25110100810000).

The core ideas of this work originated with the authors. We developed the arguments and manuscript through multiple rounds of interaction with OpenAI Codex (\mbox{GPT-6 Astra}), which assisted with exploring proof strategies, checking mathematical arguments and references, and drafting and revising the text. The authors take responsibility for all results, proofs, citations, and text.

\bibliographystyle{IEEEtran}
\bibliography{entropy_query_revised}

\end{document}